\theoremstyle{definition}
\newtheorem{theorem}{Theorem}
\newtheorem*{theorem*}{Statement}
\newtheorem{lemma}[theorem]{Lemma}
\newtheorem{corollary}[theorem]{Corollary}
\newtheorem{proposition}[theorem]{Proposition}
\newtheorem{remark}{Remark}
\newtheorem{example}{Example}
\newtheorem*{condition*}{Condition}
\newtheorem{assumption}{Assumption}
\DeclareMathOperator{\rank}{rank}
\DeclareMathOperator{\tr}{tr}
\DeclareMathOperator{\colspan}{span}
\DeclareMathOperator*{\aslim}{aslim}
\newcommand{\1}{\mathbbm 1}
\newcommand{\T}{{\mathsf T}} 
\newcommand{\ZZ}{\mathbb Z}
\newcommand{\CC}{\mathbb{C}}
\newcommand{\PP}{{{\mathbb P}}} 
\newcommand{\EE}{{{\mathbb E}}} 
\newcommand{\NN}{{{\mathbb N}}} 
\newcommand{\RR}{{{\mathbb R}}} 
\newcommand{\mcB}{{\mathscr B}} 
\newcommand{\mcF}{{\mathscr F}}
\newcommand{\cK}{{\mathcal K}} 
\newcommand{\cM}{{\mathcal M}} 
\newcommand{\cH}{{\mathcal H}} 
\newcommand{\lmin}{{\lambda_{\text{min}}}}
\newcommand{\ps}[1]{\langle #1 \rangle}
\newcommand{\ld}{\log\det} 
\newcommand{\dr}{\mathrm{dist}} 
\newcommand{\Zg}[1]{\Z_{\gamma,#1}} 
\newcommand{\Cb}{C_{\text{b}}} 
\newcommand{\bs}{\boldsymbol}
\newcommand{\dd}{{\mathsf d}}
\newcommand{\f}{{\mathsf f}}
\newcommand{\p}{{\mathsf p}}
\newcommand{\B}{{\mathsf B}}
\newcommand{\F}{{\mathsf F}}
\newcommand{\G}{{\mathsf G}} 
\newcommand{\W}{{\mathsf W}}  
\renewcommand{\H}{{\mathsf H}}  
\newcommand{\Z}{{\mathsf Z}}  
\newcommand{\X}{{\mathsf X}}  
\newcommand{\U}{{\mathsf U}}  
\newcommand{\V}{{\mathsf V}}  
\newcommand{\Ith}{\hat{\mathcal I}_n^{\mathrm{\,Th1}}}
\begin{document}

\title{Mutual Information of Wireless Channels\\
and Block-Jacobi Ergodic Operators }

\author{
Walid Hachem\footnote{
CNRS / LIGM (UMR 8049), Universit\'e Paris-Est Marne-la-Vall\'ee, France. 
Email: {walid.hachem@u-pem.fr},} \and  
Adrien Hardy\footnote{Laboratoire Paul Painlev\'e, Universit\'e de Lille, France. 
  Email: {{adrien.hardy@univ-lille.fr}},} \and 
Shlomo Shamai (Shitz)\footnote{
 Technion - Israel Institute of Technology. 
  Email: {{sshlomo@ee.technion.ac.il}}} 
}


\date{8 May 2019} 

\maketitle

\begin{abstract} Shannon's mutual information of a random multiple antenna and
multipath time varying channel is studied in the general case where the process
constructed from the channel coefficients is an ergodic and stationary process
which is assumed to be available at the receiver.  From this viewpoint, the
channel can also be represented by an ergodic self-adjoint block-Jacobi
operator, which is close in many aspects to a block version of a random
Schr\"odinger operator.  The mutual information is then related to the
so-called density of states of this operator. In this paper, it is shown that
under the weakest assumptions on the channel, the mutual information can be
expressed in terms of a matrix-valued stochastic process coupled with the
channel process. This allows numerical approximations of the mutual information
in this general setting.  Moreover, assuming further that the channel
coefficient process is a Markov process, a representation for the mutual
information offset in the large Signal to Noise Ratio regime is obtained in
terms of another related Markov process.  This generalizes previous results
from Levy~\emph{et.al.} \cite{lev-som-sha-zei-09,lev-zei-sha-it10}.  It is also
illustrated how the mutual information expressions that are closely related to
those predicted by the random matrix theory can be recovered in the large
dimensional regime.  
\end{abstract}

{\bf Keywords : } Ergodic Jacobi operators, Ergodic wireless channels, 
Large random matrix theory, Markovian channels, Shannon's mutual information. 

\section{Introduction}

In order to introduce the problem that we shall tackle in this paper, we
consider the example of a wireless communication model on a time and frequency
selective channel that is described by the equation 
\begin{equation}
\label{siso} 
y_n = \sum_{\ell=0}^L c_{n,\ell} s_{n-\ell} + v_n , 
\end{equation} 
where $L$ is the channel degree, where the complex numbers $s_n$, $y_n$ and
$v_n$ represent respectively the transmitted signal, the received signal, and
the additive noise at the moment $n$, and where the vector $C_n = [ c_{n,0},
\ldots, c_{n,L} ]^\T \in \CC^{L+1}$ contains the channel's coefficients 
at the moment $n$. In a mobile environment, the sequence $(C_n)$ is often
modeled as a random ergodic process such as $\EE\| C_0 \|^2 < \infty$
(here we take $\| \cdot \|$ as the Euclidean norm). Assuming that this process 
is available at the receiver site, our purpose is to study Shannon's mutual
information of this channel under the generic ergodicity assumption. By
stacking $n-m+1$ elements of the received signal, where $m,n \in \ZZ$ and 
$m \leq n$, we get the vector model $\begin{bmatrix} y_m, \ldots, y_n
\end{bmatrix}^\T = \B_{m,n} \begin{bmatrix} s_{m-L}, \ldots, s_n
\end{bmatrix}^\T + \begin{bmatrix} v_m, \ldots, v_n \end{bmatrix}^\T$ with 
\[ 
\B_{m,n} = \begin{bmatrix} 
c_{m,L} & \cdots & c_{m,0} \\ 
        & \ddots &        & \ddots \\ 
        &        & c_{n,L} & \cdots & c_{n,0} 
\end{bmatrix} . 
\]
Let $\rho > 0$ be a parameter that represents the Signal to Noise Ratio (SNR). 
Considering the matrix/vector model above, and putting some standard 
assumptions on the statistics of the processes $(s_n)$ and $(v_n)$ (see below),
this mutual information is written as 
\begin{equation} 
\label{limps} 
{\mathcal I}_\rho = 
\aslim_{n-m\to\infty} 
 \frac{\log\det(\rho \B_{m,n} \B_{m,n}^* + I_{n-m+1})}{n-m+1}
= 
\lim_{n-m\to\infty} 
 \frac{\EE \log\det(\rho \B_{m,n} \B_{m,n}^* + I_{n-m+1})}{n-m+1}, 
\end{equation} 
where $\B_{m,n}^*$ is the matrix adjoint of $\B_{m,n}$, and 
where the existence and the equality of both the limits above 
(``$\aslim$'' stands for the almost sure limit) are essentially due to the 
ergodicity of $(C_n)$. 

The natural mathematical framework for studying this limit is provided 
by the ergodic operator theory in the Hilbert space $\ell^2(\ZZ)$, for whom
a very rich literature has been devoted in the field of statistical physics
\cite{pas-fig-92}. In our situation, $\B_{m,n}$ is a finite rank truncation of
the operator $\B$ represented by the doubly infinite matrix 
\[ 
 \B = \begin{bmatrix} 
  \ddots &         & \ddots \\  
         & c_{n,L} & \cdots & c_{n,0} \\ 
         &         & \ddots &         & \ddots   
 \end{bmatrix} . 
\]
Thanks to the ergodicity of $(C_n)$, it is known that the spectral measure
(or eigenvalue distribution) of the matrix $\B_{m,n} \B_{m,n}^*$ converges 
narrowly in the almost sure sense to a deterministic probability measure 
called the density of states of the self-adjoint operator $\B \B^*$, where
$\B^*$ is the adjoint of $\B$. This convergence leads to the convergences 
in~\eqref{limps}. 

In statistical physics, the study of the density of states has focused most
frequently on the Jacobi (or tridiagonal) ergodic operators which are
associated to the so-called discrete Schr\"{o}dinger equation in a random
environment. In this framework, the Herbert-Jones-Thouless formula
\cite{car-lac-90, pas-fig-92} provides a means of characterizing the density of
states of an ergodic Jacobi operator,  in connection with the so-called
Lyapounov exponent associated with a certain sequence of matrices. 

In the context of the wireless communications that is of interest here, it
turns out that the use of the Thouless formula is possible when one
considers $\B\B^*$ as a block-Jacobi operator. This idea was developed by
Levy~\emph{et al.} in~\cite{lev-zei-sha-it10}.  The expression of the mutual
information that was obtained in~\cite{lev-zei-sha-it10} was also used to 
perform a large SNR asymptotic analysis so as to obtain bounds on the mutual
information in this regime.   

In this paper, we take another route to calculate the mutual information.  The
expression we obtain for $\mathcal I_\rho$ in Theorem~\ref{th-main} below
involves an ergodic process which is coupled with the channel process, and
appears to be more tractable than the expression based on the top Lyapounov
exponent provided in \cite{lev-zei-sha-it10}.  We moreover exploit the obtained
expression for $\mathcal I_\rho$ to study two asymptotic regimes: we first
consider the large SNR regime in a Markovian setting, and obtain an exact
representation for the constant term in the expansion of ${\mathcal I}_\rho$
for large $\rho$. We also consider a regime where the dimensions of the blocks
of our block-Jacobi operator converge to infinity; the expression of the mutual
information that we recover is then closely related to what is obtained from
random matrix theory \cite{kho-pastur93,hmp-jmp15}.  In the context of the
example described by Equation~\eqref{siso}, this asymptotic regime amounts to
$L$ converging to infinity. Beyond this example, the large dimensional analysis
can also be used to analyze the behavior of the mutual information of time and
frequency selective channels in the framework of the massive Multiple Input
Multiple Output (MIMO) systems (\cite{mar-etal-(livre)16}), which are destined
to play a dominant role in the future wireless cellular techniques/standards. 

\paragraph*{Organisation of the paper.} In Section~\ref{sec-pb}, after stating
precisely our communication model and our standing assumption, we provide our
main result (Theorem~\ref{th-main}). We then consider the large SNR regime in a
Markovian setting (Theorem~\ref{th-Markov}) along with some cases where the
assumptions for this theorem to hold true are satisfied.  In
Section~\ref{sec:numerics} we illustrate Theorems~\ref{th-main}
and~\ref{th-Markov} with numerical experiments. There we also state our result
on the large dimensional regime, which is related with one of the channel
models considered in this section.  The next sections are devoted to the
proofs. 

\section{Problem description and statement of the results} 
\label{sec-pb} 

\subsection{The model}
The model herein is well-suited for the block-Jacobi formalism that we use in
the remainder.  Given two positive integers $N$ and $K$, we consider the
wireless transmission model 
\begin{equation}
\label{model} 
Y_n = \F_n S_{n-1} + \G_n S_n + V_n
\end{equation} 
with $n\in\ZZ$ and where:
\begin{itemize} 
\item[-] $(Y_n)_{n\in\ZZ}$ represents the $\CC^N$-valued sequence of received signals.
\item[-] $(S_n)_{n\in\ZZ}$ is the $\CC^K$-valued sequence of transmitted information symbols.
\item[-]$(\F_n, \G_n)_{n\in\ZZ}$ with $\F_n, \G_n \in \CC^{N\times K}$ is a
matrix 
representation of a random wireless channel.
\item[-]  $(V_n)_{n\in\ZZ}$ is the additive noise. 
\end{itemize}
Let us first give a few examples which fit with this transmission model. 

\paragraph{The multipath single antenna fading channel.} The channel 
described by Equation~\eqref{siso} is a particular case of this model. 
When $L > 0$, we put 
\begin{equation} 
\label{YSV} 
Y_n := \begin{bmatrix} y_{nL} \\ \vdots \\ y_{nL+L-1} \end{bmatrix}, \quad
S_n := \begin{bmatrix} s_{nL} \\ \vdots \\ s_{nL+L-1} \end{bmatrix}, \quad
V_n := \begin{bmatrix} v_{nL} \\ \vdots \\ v_{nL+L-1} \end{bmatrix}, \quad N:=K:=L,
\end{equation}  
and $\F_n,\G_n\in\CC^{L\times L}$ are  the upper triangular and lower triangular
matrices  defined as 
\begin{equation} 
\label{FGmultipath} 
\left[\begin{array}{c|c} \F_n & \G_n \end{array}\right] := 
\left[\begin{array}{ccc|ccc} 
c_{nL,L} & \cdots & c_{nL,1}     & c_{nL,0}       &        &              \\
         & \ddots &  \vdots      & \vdots         & \ddots &              \\
         &        & c_{nL+L-1,L} & c_{nL+L-1,L-1} & \cdots & c_{nL+L-1,0}
\end{array}\right] . 
\end{equation} 
When $L=0$,  we set instead $N:=K:=1$, $Y_n := y_n$, $S_n := s_n$, $V_n := v_n$, $\F_n :=
0$, and $\G_n := c_{n,0}$. 

In the multiple antenna variant of this model, the channel coefficients
$c_{n,\ell}$ are $R\times T$ matrices, where $R$, resp.~$T$, is the number of
antennas at the receiver, resp.~transmitter. In this case, the $N\times
K$ matrices $\F_n$ and $\G_n$ given by Eq.~\eqref{FGmultipath} when $L > 0$ 
are block triangular matrices with $N := RL$ and $K:=TL$.

\paragraph{The Wyner multi-cell model.} Another instance of the transmission
model introduced above is a generalization of the so-called Wyner multi-cell
model considered in \cite{han-whi-93,wyn-it94}, where the index $n$ now
represents the space instead of representing the time. Assume that the Base
Stations (BS) of a wireless cellular network are arranged on a line, and that
each BS receives in a given frequency slot the signals of the $L+1$ users which
are not too far from this BS. Alternatively, each user is also seen by $L+1$
BS. In this setting, the signal $y_n$ received by the BS $n$ is described by
Eq.~\eqref{siso} (where the time parameter is now omitted), where $s_n$ is the
signal emitted by User~$n$, and where $c_{n,\ell}$ is the uplink channel 
carrying the signal of User~$n-\ell$ to BS $n$. \\

Other domains than the time or the space domain, such as the frequency domain,
can also be covered, see \emph{e.g.} \cite{tul-cai-sha-ver-10}, which deals
with a time and frequency selective model.  Moreover, this could even address
different connected domains as the Doppler-Delay (connected via the so-called
Zak transform), as in~\cite{bol-hla-sp97,bol-duh-hle-elssp03}, which lead to
modulation schemes that are considered as interesting candidates for the fifth
generation (5G) wireless systems, as reflected in the references
\cite{had-etal-wcnc17,cai-etal-comsurv18}. 

\subsection{General assumptions}
\label{sec:Assgeneral} 
The purpose of this work is to study Shannon's mutual information between
$(S_n)$ and $(Y_n)$ when the channel is known at the receiver. To this end, we
consider the usual setting where:
\begin{itemize}
\item[-] The information sequence $(S_n)_{n\in\ZZ}$ is random i.i.d.~with law 
$\mathcal{CN}(0,I_K)$.
\item[-]  The noise $(V_n)_{n\in\ZZ}$ is i.i.d. with law 
  $\mathcal{CN}(0,\rho^{-1} I_N)$ for some $\rho>0$ that scales with the SNR. 
\item[-] The random sequences $(S_n)_{n\in\ZZ}$, $(\F_n, \G_n)_{n\in\ZZ}$, and
$(V_n)_{n\in\ZZ}$ are independent.
\end{itemize}
Here and in the following, i.i.d.~means ``independent and identically
distributed'', and $\mathcal{CN}(0,\Sigma)$ stands for the law of a centered
complex Gaussian circularly symmetric vector with covariance matrix $\Sigma$. 
We also make the
following assumptions on the  process $(\F_n, \G_n)_{n\in\ZZ}$ representing the
channel:
\begin{assumption}
\label{Ass1}
 The process $(\F_n, \G_n)_{n\in\ZZ}$
 is a \emph{stationary} and \emph{ergodic} process. Moreover,  
\begin{equation}
\label{momFG} 
\EE \| \F_0 \|^2 < \infty\quad\text{ and }\quad \EE \| \G_0 \|^2 < \infty . 
\end{equation}
\end{assumption}
Note that the moment assumption~\eqref{momFG} does not depend on the specific
choice of the norm on the space of $N\times K$ complex matrices. In the
remainder, we choose $\|\cdot\|$ to be the spectral norm. 

Let us make precise the assumptions of stationarity and ergodicity. In the following we set for convenience 
\begin{equation} 
E:= \CC^{N\times K}\times  \CC^{N\times K}
\end{equation} 
and consider the measure space $\Omega := E^\ZZ$ equipped 
with its Borel $\sigma$--field $\mcF := \mcB(E)^{\otimes \ZZ}$. An element of $\Omega$ reads 
$\omega = \left( \ldots, (F_{-1}, G_{-1}), (F_{0}, G_{0}), (F_{1}, G_{1}), 
\ldots \right)$
where $(F_n, G_n)$ is the $n^{\text{th}}$ coordinate of $\omega$, with
$(F_n,G_n)\in E$. The shift $\T:\Omega\to\Omega$ acts as 
$
\T\omega := 
 \left( \ldots, (F_{0}, G_{0}), (F_{1}, G_{1}), (F_{2}, G_{2}), 
\ldots \right).
$
The assumption that $(\F_n, \G_n)_{n\in\ZZ}$ is an ergodic stationary process,
seen as a measurable map from $(\Omega,\mcF)$ to itself, means that the shift
$\T$ is a measure preserving and ergodic transformation with respect to the
probability distribution of the process $(\F_n, \G_n)_{n\in\ZZ}$. 

A fairly general stationary and ergodic model is provided by the following
example. 
\begin{example}
\label{ex:AR} 
In the single antenna and single path ($L=0$) fading channel case, the
autoregressive (AR) statistical model has been considered as a realistic model
for representing the Doppler effect induced by the mobility of the
communicating devices. This model reads 
\begin{equation}
c_{n,0} = \sum_{\ell=1}^M a_\ell c_{n-\ell, 0} + u_n, 
\end{equation} 
where $M > 0$ is the order of the AR channel process, $(u_n)_{n\in\ZZ}$ is
an i.i.d.~driving process, and $(a_1,\ldots, a_M)$ are the constant AR filter 
coefficients, which can be tuned to meet a required Doppler spectral density 
(see, \emph{e.g.},~\cite{bad-bea-05}). 

In the multipath case, this model can be generalized to account for the
presence of a power delay profile and the presence of correlations between the
channel taps in addition to the Doppler effect. In this case, the channel
coefficients vector $C_n = [ c_{n,0},\ldots, c_{n,L}]^\T$ is written as 
\begin{equation}
\label{channel-ar} 
C_n = \sum_{\ell=1}^M A_\ell C_{n-\ell} + U_n , 
\end{equation} 
where $\{A_1,\ldots, A_M\}$ is a collection of deterministic 
$(L+1)\times (L+1)$ matrices, and where $(U_n)_{n\in\ZZ}$ is a
$\CC^{L+1}$--valued i.i.d.~driving process. If the polynomial 
$\det(I - \sum_{\ell=1}^M z^\ell A_\ell)$ does not vanish in the closed unit
disc, it is well known that there exists a stationary and ergodic process whose
law is characterized by \eqref{channel-ar}, see
\emph{e.g.}~\cite{kai-(livre)80,mey-twe-livre09}, leading to a stationary and
ergodic process $(\F_n,\G_n)_{n\in\ZZ}$ by recalling the construction of
$\begin{bmatrix} \F_n \, | \, \G_n \end{bmatrix}$ given by 
Equation~\eqref{FGmultipath}. 
\end{example}

\subsection{Mutual information and statement of the main result} 
In order to define the mutual information of the channel described
by~\eqref{model}, define for any $m,n\in\ZZ$, $m\leq n$, the random matrix of
size $(n-m+1)N \times (n-m+2)K$,
\begin{equation}
\label{Hmn} 
\H_{m,n} := \begin{bmatrix} 
\F_{m} & \G_{m} \\ 
      & \F_{m+1} & \G_{m+1}                  \\
      &         & \ddots & \ddots \\ 
      &         &        & \F_{n} & \G_{n} 
\end{bmatrix}  . 
\end{equation} 
For any fixed $\rho>0$, let ${\mathcal I}_\rho$ be given by 
\begin{align} 
\mathcal I_\rho &:=
\aslim_{n-m\to\infty}
\frac{1}{(n-m+1)N} \log\det\left(I+\rho\, \H_{m,n} \H_{m,n}^* \right) 
 \nonumber \\
&= \lim_{n-m\to\infty}
\frac{1}{(n-m+1)N} \EE \log\det\left(I+\rho\, \H_{m,n} \H_{m,n}^* \right) . 
\label{defI}
\end{align} 
As we shall briefly explain below, these two limits exist, are finite and
equal, and do not depend on the way $n-m\to\infty$ due to the
Assumption~\ref{Ass1}. As is well known, $\mathcal I_\rho$ is known to
represent the required mutual information per component of our wireless
channel, provided the input $S_n$ is as in Section~\ref{sec:Assgeneral},
see~\cite{gray-entropy11}.  The purpose of this paper is to study this
quantity. 

\begin{remark}
In the Wyner multicell model introduced above, where the BS collaborate
while the users do not, ${\mathcal I}_\rho$ represents the sum mutual 
information per component. 
\end{remark} 

Denoting by  $\cH_K^{++}$, resp. $\cH_K^{+}$, the cone of the Hermitian positive definite, resp. semidefinite,  $K\times
K$ matrices, we show that one can construct  a stationary  $\cH_K^{++}$-valued
process $(\W_n)_{n\in\ZZ}$ defined recursively and coupled with
$(\F_n,\G_n)_{n\in\ZZ}$ which allows a rather simple formula for the mutual
information per component $\mathcal I_\rho$.  

\begin{theorem}[Mutual information of an ergodic channel] 
If Assumption~\ref{Ass1} holds true, then:
\label{th-main} \begin{itemize} 
\item[{\rm(a)}]There exists a unique stationary  $\cH_K^{++}$-valued process $(\W_n)_{n\in\ZZ}$ satisfying
\begin{equation}
\label{recurs} 
\W_n = \left( I + \rho \, \G_n^*  \left( I + \rho \,\F_n \W_{n-1} \F_n^* \right)^{-1} \G_n \right)^{-1}.
\end{equation}
In particular, the process $(\W_n)$ is  ergodic.
\item[{\rm(b)}] We have the representation for the mutual information per
component:
\begin{equation} 
\label{I-WW} 
\mathcal I_\rho  = \frac1N\Big( \EE\log\det\left( I + \rho \,\F_0 \W_{-1} \F_0^* \right)-\EE \log\det \W_0  \Big).
\end{equation} 
\item[{\rm (c)}] Given \emph{any} matrix $\X_{-1} \in \cH_K^{+}$, if one defines a process 
$(\X_n)_{n\in\NN}$ by setting
\begin{equation} 
\X_n := \left( I + \rho \, \G_n^*  \left( I + \rho \,\F_n  \X_{n-1} \F_n^* 
  \right)^{-1} \G_n \right)^{-1}
\end{equation} 
for all $n\geq 0$, then we have
\begin{equation} 
\mathcal I_\rho = \lim_{n\to\infty} \frac{1}{nN} \sum_{\ell=0}^{n-1} 
 \log\det \left( I + \rho \,\F_\ell \X_{\ell-1} \F_\ell^* \right) - \log\det \X_\ell \quad \text{a.s.}
\end{equation} 
 \end{itemize}
\end{theorem}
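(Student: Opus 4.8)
The plan is to read the recursion \eqref{recurs} as the error--covariance update of the Bayesian (Kalman) filter attached to the state--space form of the channel \eqref{model}, and to control it through a contraction estimate in Thompson's metric on $\cH_K^{++}$,
\[
 d(A,B):=\big\|\log(A^{-1/2}BA^{-1/2})\big\|_{\mathrm{op}},\qquad d(A,B)\le r\ \Longleftrightarrow\ e^{-r}A\preceq B\preceq e^{r}A .
\]
Writing $\Psi(W;F,G):=\big(I+\rho\,G^{*}(I+\rho\,FWF^{*})^{-1}G\big)^{-1}$, so that \eqref{recurs} reads $\W_n=\Psi(\W_{n-1};\F_n,\G_n)$, I would first record the elementary facts that $\Psi(\cdot;F,G)$ is continuous and order preserving, that it maps $\cH_K^{+}$ into the order interval $\{\,0\prec W\preceq I\,\}$, and that $\Psi(W;F,G)\succeq(I+\rho G^{*}G)^{-1}\succeq(1+\rho\|G\|^{2})^{-1}I$. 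The key lemma is the contraction bound: for $A,B\in\cH_K^{++}$ with $A,B\preceq I$,
\[
 e^{\,d(\Psi(A;F,G),\,\Psi(B;F,G))}-1\ \le\ \frac{\rho\|F\|^{2}}{1+\rho\|F\|^{2}}\,\big(e^{\,d(A,B)}-1\big).
\]
It is obtained by composing the sandwich estimates: inversion is a $d$--isometry, $Q\mapsto I+\rho G^{*}QG$ is $d$--nonexpansive, and on $\{0\prec W\preceq I\}$ one has $d(I+\rho FAF^{*},I+\rho FBF^{*})\le\log\frac{1+\rho\|F\|^{2}e^{d(A,B)}}{1+\rho\|F\|^{2}}$ because $\rho FWF^{*}\preceq\rho\|F\|^{2}I$; passing to the variable $u\mapsto e^{u}-1$ turns this into the displayed multiplicative bound. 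Since $\EE\|\F_0\|^{2}<\infty$ makes $\rho\|\F_0\|^{2}$ finite a.s., the integrand $\log\frac{\rho\|\F_0\|^{2}}{1+\rho\|\F_0\|^{2}}$ is a.s.\ strictly negative, so Birkhoff's theorem gives $\prod_{k=m+1}^{\ell}\frac{\rho\|\F_k\|^{2}}{1+\rho\|\F_k\|^{2}}\to0$ geometrically, both as $\ell\to\infty$ ($m$ fixed) and as $m\to-\infty$ ($\ell$ fixed).

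For part (a) I would use coupling from the past. Let $\W_n^{(m)}$ denote the value at time $n$ of \eqref{recurs} launched at time $m$ from $0\in\cH_K^{+}$. By monotonicity $m\mapsto\W_n^{(m)}$ is nondecreasing as $m\downarrow-\infty$ while $\W_n^{(m)}\preceq I$; hence $\W_n:=\lim_{m\to-\infty}\W_n^{(m)}$ exists a.s., is a fixed shift--covariant measurable function of $(\F_k,\G_k)_{k\le n}$, solves \eqref{recurs} (continuity of $\Psi$), lies in $\cH_K^{++}$ (being $\succeq(I+\rho\G_n^{*}\G_n)^{-1}$), and thus defines a stationary process, which is ergodic as a factor of $(\F_n,\G_n)_n$. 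Using the key lemma pathwise, for any two $\cH_K^{++}$--valued solutions $(A_n),(B_n)$ of \eqref{recurs} and $m\le\ell$ one has $e^{d(A_\ell,B_\ell)}-1\le\big(\prod_{k=m+1}^{\ell}\frac{\rho\|\F_k\|^{2}}{1+\rho\|\F_k\|^{2}}\big)\big(e^{d(A_m,B_m)}-1\big)$; applied from time $m$ with $A=\W'$ any other stationary solution and $B=\W$, and using $d(\W'_{m-1},\W_{m-1})\le\log(1+\rho\|\G_{m-1}\|^{2})$ together with the (Borel--Cantelli) at--most--subexponential growth of $\|\G_{m-1}\|^{2}$, letting $m\to-\infty$ gives $d(\W'_n,\W_n)=0$, so $\W'\equiv\W$. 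The same pathwise bound is a \emph{forgetting lemma}: for \emph{any} $\X_{-1}\in\cH_K^{+}$, the process $(\X_n)_{n\ge0}$ of part (c) (which lies in $\{0\prec W\preceq I\}$ for $n\ge0$, with $d(\X_0,\W_0)<\infty$ a.s.) satisfies $d(\X_n,\W_n)\to0$ a.s.

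For parts (b) and (c) I would first establish, for the solution $\X^{\circ}$ of \eqref{recurs} with the initialisation $\X^{\circ}_{-1}=I_K$, the finite--length identity
\[
 \log\det\!\big(I+\rho\,\H_{0,n-1}\H_{0,n-1}^{*}\big)=\sum_{\ell=0}^{n-1}\Big(\log\det\!\big(I+\rho\,\F_\ell\X^{\circ}_{\ell-1}\F_\ell^{*}\big)-\log\det\X^{\circ}_{\ell}\Big).
\]
This is the chain rule for the (conditional, given the channel) Gaussian mutual information between $(S_{-1},\dots,S_{n-1})$ and $(Y_0,\dots,Y_{n-1})$, which equals the left--hand side: processing the $Y_\ell$ sequentially, the $\ell$--th increment equals $\log\det(I+\rho\F_\ell\X^{\circ}_{\ell-1}\F_\ell^{*}+\rho\G_\ell\G_\ell^{*})$ with $\X^{\circ}_{\ell-1}=\cov(S_{\ell-1}\mid Y_0,\dots,Y_{\ell-1},\text{channel})$, these covariances obeying \eqref{recurs} with base case $\cov(S_{-1})=I$; the identity $\det(I+\rho G^{*}Q^{-1}G)=\det(Q+\rho GG^{*})/\det Q$ with $Q=I+\rho\F_\ell\X^{\circ}_{\ell-1}\F_\ell^{*}$ rewrites the increment as $\log\det(I+\rho\F_\ell\X^{\circ}_{\ell-1}\F_\ell^{*})-\log\det\X^{\circ}_{\ell}$ (alternatively one proves the identity by induction on $n$ using the block-$LDL^{*}$/Schur--complement factorisation of $I+\rho\H_{0,n-1}\H_{0,n-1}^{*}$). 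Now set $\zeta_\ell:=\log\det(I+\rho\F_\ell\W_{\ell-1}\F_\ell^{*})-\log\det\W_\ell=\log\det(I+\rho\F_\ell\W_{\ell-1}\F_\ell^{*}+\rho\G_\ell\G_\ell^{*})\ge0$, a stationary ergodic sequence with $\EE\zeta_0\le N\rho\,\EE(\|\F_0\|^{2}+\|\G_0\|^{2})<\infty$, so Birkhoff gives $\frac1n\sum_{\ell=0}^{n-1}\zeta_\ell\to\EE\zeta_0$ a.s. Writing $\zeta^{\X}_\ell$ for the $\ell$--th summand of the above sum when the recursion starts from $\X_{-1}$, the bounds $|\log\det A-\log\det B|\le(\dim)\,d(A,B)$ together with the $d$--nonexpansiveness of $W\mapsto I+\rho FWF^{*}$ and of $\Psi$ give $|\zeta^{\X}_\ell-\zeta_\ell|\le(N+K)\max\{d(\X_{\ell-1},\W_{\ell-1}),d(\X_\ell,\W_\ell)\}\to0$ by the forgetting lemma, so $\frac1n\sum_{\ell=0}^{n-1}\zeta^{\X}_\ell\to\EE\zeta_0$ a.s.\ for every $\X_{-1}\in\cH_K^{+}$. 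Taking $\X=\X^{\circ}$ and combining with the finite identity yields $\frac1{nN}\log\det(I+\rho\H_{0,n-1}\H_{0,n-1}^{*})\to\frac1N\EE\zeta_0$ a.s.; this identifies $\mathcal I_\rho$ (whose existence as this limit is guaranteed by the discussion preceding Theorem~\ref{th-main}, and is in any case re--proved here --- the independence of the manner $n-m\to\infty$ and the expectation version of \eqref{defI} following from the same decomposition for general $m$ together with dominated convergence) with $\frac1N(\EE\log\det(I+\rho\F_0\W_{-1}\F_0^{*})-\EE\log\det\W_0)$, proving (b), while (c) is precisely the above Cesàro convergence for a general $\X_{-1}$.

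The step I expect to be the main obstacle is the contraction estimate. A direct attempt only yields a $d$--Lipschitz constant $\rho\|\F_k\|^{2}/(1+\rho\|\F_k\|^{2})$, which is \emph{not} bounded away from $1$ (since $\|\F_0\|$ is not assumed bounded) and, moreover, the congruence steps misbehave when $F$ or $G$ is rank deficient; both difficulties dissolve on passing to the variable $e^{d}-1$, where the bound becomes a genuine multiplicative contraction valid for all ranks, and on replacing a uniform rate by the ergodic statement $\EE\log\frac{\rho\|\F_0\|^{2}}{1+\rho\|\F_0\|^{2}}<0$, which is all that a.s.\ geometric forgetting requires. A minor additional point is the transfer from pointwise forgetting to convergence of the Cesàro means of $\zeta^{\X}_\ell$, which uses only that Cesàro means of a null sequence are null.
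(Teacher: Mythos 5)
Your proof is correct, but it takes a genuinely different route from the paper on two of the three key steps, so let me compare. For part (a), the paper constructs $\W_n$ explicitly as the bottom--right $K\times K$ block of the resolvent $(I+\rho\,\H_{-\infty,n}^*\H_{-\infty,n})^{-1}$ of a semi-infinite block-Jacobi operator, verifies the recursion by two operator-level Schur complements, and proves uniqueness via a contraction estimate in the Riemannian geodesic distance $\dr(X,Y)=[\mathrm{Tr}\log^2(XY^{-1})]^{1/2}$ (Bougerol's inequality~\eqref{x+s} and Lemma~\ref{conj-trans}), with Lipschitz factor $\rho\|\G_n\|^2/(\rho\|\G_n\|^2+1)$; you instead build $\W_n$ by a coupling-from-the-past monotone iteration from $0$, which avoids the operator theory, and you run the contraction in Thompson's metric, in the auxiliary variable $e^{d}-1$, with factor $\rho\|\F_n\|^2/(1+\rho\|\F_n\|^2)$. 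Both routes give a.s.\ geometric forgetting via Birkhoff, and your Thompson-metric argument handles rank-deficient $F,G$ by order comparisons alone, where the paper uses a density/limiting argument inside Lemma~\ref{conj-trans}. For part (b), the paper's $m$-dependent decomposition~\eqref{I0nid} is derived by iterated finite-dimensional Schur complements, and then $m\to-\infty$ is handled by strong resolvent convergence (Lemma~\ref{le:cvg-resolv}) plus Ces\`aro summation on expectations; your finite-length identity (starting from $\X^\circ_{-1}=I$) is the same identity, which you recognize as the Gaussian chain rule for the Kalman filter attached to~\eqref{model} (and also note the Schur-complement derivation), but instead of taking a resolvent limit you use the forgetting lemma to pass directly from $\X^\circ$ to the stationary $\W$ and Birkhoff for the Ces\`aro mean, proving (b) and (c) in one stroke and re-deriving the a.s.\ limit in~\eqref{defI}. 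Your route is more self-contained and probabilistic, dispensing with the operator-theoretic scaffolding (closed densely defined operators, common cores, strong resolvent convergence, density of states); what the paper's route buys is the explicit spectral identification $\W_n=[(I+\rho\,\H_{-\infty,n}^*\H_{-\infty,n})^{-1}]_{\Box_K}$, which ties $(\W_n)$ to the block-Jacobi operator that motivates the whole framework. One small imprecision worth flagging: the independence of the a.s.\ limit in~\eqref{defI} on the manner in which $n-m\to\infty$ (not just $m$ fixed, $n\to\infty$) is asserted rather than fully argued; the cleanest fix is exactly the paper's move, namely to prove the expectation version by dominated convergence (the integrands are bounded by $N\log(1+\rho\|\F_i\|^2+\rho\|\G_i\|^2)$) and Ces\`aro, which you do mention parenthetically.
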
 

The proof of Theorem~\ref{th-main} is provided in Section~\ref{sec:prfmain}.

\begin{remark} As we will illustrate in Section~\ref{sec:numerics}, Theorem~\ref{th-main}(c) yields an estimator for $\mathcal I_\rho$ that is less costly numerically than the naive one, due to the dimension of the involved matrices. 
\end{remark}
\begin{remark}
\label{weak-mom} 
The proof of Theorem~\ref{th-main} reveals that the moment 
assumption~\eqref{momFG} can be weakened~to 
\begin{equation} 
\EE\log( 1 + \| \F_0 \|^2) < \infty \qquad\text{ and }\qquad \EE\log( 1 + \| \G_0 \|^2 ) < \infty . 
\end{equation} 
The second moment assumption \eqref{momFG} is here to
ensure that the received signal power is finite. 
\end{remark} 

\begin{remark} 
An expression for ${\mathcal I}_\rho$ similar to the one given by
Theorem~\ref{th-main} is obtained by Levy~\emph{et.al.}
in~\cite{lev-som-sha-zei-09} in the particular case where $N=1$ and
where the process $(\F_n, \G_n)$ is i.i.d. 
\end{remark} 

\subsection{Connection to  block-Jacobi operators and previous results} 

Recall Eq.~\eqref{Hmn}. Due to Assumption~\ref{Ass1}, it is
well known, see~\cite{pas-fig-92}, that there exists a deterministic
probability measure $\mu$ that can defined by the fact that for each bounded
and continuous function $f$ on $[0,\infty)$,
\begin{equation}
\label{cvg-ids} 
\frac{1}{(n-m+1)N} \tr f(\H_{m,n} \H_{m,n}^*) 
\xrightarrow[n-m\to\infty]{} \int f(\lambda) \, \mu(d\lambda) 
\quad \text{a.s.} 
\end{equation} 
(here, $f$ is of course extended by functional calculus to the semi-definite
positive matrices). 
The measure $\mu$ is intimately connected with the so-called
\emph{ergodic self-adjoint block-Jacobi (or block-tridiagonal) operator}
$\H \H^*$, where $\H$ is the random linear operator acting on the Hilbert space 
$\ell^2(\ZZ)$, and defined by its doubly-infinite matrix representation 
in the canonical basis $(e_k)_{k\in\ZZ}$ of this space as 
\begin{equation}
\label{def-H} 
\H = \begin{bmatrix} 
\ \ddots & \ddots  \\ 
       & \F_{-1} & \G_{-1} \\ 
       &         & \F_0     & \G_0                  \\
       &         &         & \F_{1} & \G_{1}    \\ 
       &         &         &         &\ddots & \ddots \ 
\end{bmatrix}.
\end{equation}
The random positive self-adjoint operator $\H \H^*$ is an ergodic operator in
the sense of~\cite[Page 33]{pas-fig-92} (see also \cite{hmp-jmp15}), and the
measure $\mu$ is called its \emph{density of states}. Recalling~\eqref{defI},
it holds that 
\begin{equation} 
\mathcal I_\rho=\int\log(1+\rho \lambda)\,\mu(d\lambda), 
\end{equation} 
where this limit is finite, due to the moment assumption~\eqref{momFG} and a
standard uniform integrability argument.  

As said in the introduction, the Herbert-Jones-Thouless
formula~\cite{car-lac-90, pas-fig-92} provides a means of characterizing the
density of states of an ergodic Jacobi operator.  In \cite{lev-zei-sha-it10},
Levy~\emph{et al.}~develop a version of this formula that is well suited to the
block-Jacobi setting of $\H \H^*$. 

In this paper, we rather identify $\mathcal I_\rho$ by considering the
resolvents of certain random operators built from the process
$(\F_n,\G_n)_{n\in\ZZ}$ instead of using the Herbert-Jones-Thouless formula.
The expression we obtain for $\mathcal I_\rho$ involves the ergodic process
$(\W_n)$ which is coupled with the process $(\F_n,\G_n)_{n\in\ZZ}$ by
Eq.~\eqref{recurs}. This approach is developed in 
Section~\ref{sec:prfmain}.

\subsection{The Markovian case and large SNR regime}

First,  assuming extra assumptions on the process $(\F_n,\G_n)$, we obtain a
description for the constant term (or mutual information offset) in the large
SNR regime. Indeed, it often happens that there exists a real number
$\kappa_\infty$ such that the mutual information per component admits the 
expansion  as $\rho\to\infty$,
\begin{equation} 
\mathcal{I}_\rho = 
 \min(K / N,1) \log\rho + \kappa_\infty + o(1) , 
\end{equation} 
see e.g. \cite{loz-tul-ver-it05}.  Our next task is to prove this expansion
indeed holds true and to derive an expression for the offset 
$\kappa_\infty$ when the process $(\F_n,\G_n)_{n\in\ZZ}$ is further assumed to
be a Markov process  satisfying some regularity and moment assumptions. Namely,
consider for any $n\in\ZZ$ the $\sigma$-field $\mathscr F_n:=\sigma( (\F_k,
\G_k) \; : k\leq n)$ and assume there exists a transition kernel $P : E \times
\mcB(E) \to [0,1]$ such that, for any Borel function $f:E \to [0,\infty)$,  
\begin{equation} 
\EE\big[f(\F_{n+1},\G_{n+1}) | \mathscr F_n\big]=Pf((\F_n, \G_n)):= \int f(F, G) \, P\big((\F_n, \G_n), dF \times dG\big).
\end{equation} 
Besides $Pf((\F, \G))$, we use the common notations from the Markov chains
literature and also write $P((\F, \G),A):=P\bs 1_A((\F, \G))$ for any Borel set
$A\in\mcB(E)$; the iterated kernel $P^n$ stands for the Markov kernel defined
inductively by $P^nf:=P(P^{n-1}f)$ with the convention that $P^0f:=f$; given
any probability measure $\eta$ on $E$, we let $\eta P$ be the probability
measure on $E$ defined as 
\begin{equation} 
\eta P(A):=\int P((F,G),A)\,\eta(dF\times dG),\qquad A\in\mcB(E).
\end{equation}

The following assumption is formulated in the context where $N > K$. We denote
as $\cM(E)$ the space of Borel probability measures on the space $E$.  Given a
matrix $A$, the notations $\Pi_A$ and $\Pi_A^\perp$ refer respectively to the
orthogonal projector on the column space $\colspan(A)$ of $A$, and to the
orthogonal projector on $\colspan(A)^\perp$.  

\begin{assumption}
\label{Ass2}
 The process $(\F_n, \G_n)_{n\in\ZZ}$
 is a Markov process with transition kernel $P$ associated with a unique invariant probability measure $\theta\in\cM(E)$, namely satisfying $\theta P=\theta$. Moreover,  
 \begin{itemize}
 \item[(a)]$P$ is Feller, namely, if $f: E\to\RR$ is continuous and bounded, 
  then so is $Pf$.
\item[(b)]  $\EE \| \F_0 \|^2 + \EE \| \G_0 \|^2  < \infty$.

\item[(c)] $\EE|\log\det(\F_0^* \F_0)|<\infty$.

\item[(d)]
For  every non-zero $v \in \CC^K$,  we have for $\theta$-a.e. $(F,G)\in E$ that
\begin{equation} 
\label{regCond}
\det( G^* F) \neq 0 
\qquad \text{and} \qquad 
 \Pi_{G}^\perp F v \neq 0 \, . 
\end{equation} 
\end{itemize}
\end{assumption}

\begin{remark}
Since a Markov chain $(\F_n,\G_n)_{n\in\ZZ}$ associated with a unique invariant probability measure is automatically ergodic, we see that Assumption~\ref{Ass2} is stronger than Assumption~\ref{Ass1} and thus Theorem~\ref{th-main} applies in this setting.
\end{remark}

\begin{remark} If one assumes $(\F_n,\G_n)_{n\in\ZZ}$ is a sequence of i.i.d random variables with law $\theta$ having a density on $E$, then it satisfies Assumption~\ref{Ass2}  (and hence Assumption~\ref{Ass1}) provided that the moment conditions Assumption~\ref{Ass2}(b)-(c) are satisfied. We also provide more sophisticated examples were Assumption~\ref{Ass2} holds in Section~\ref{Examples}.
\end{remark}

\begin{remark}
Since  $\theta=\theta P$, Assumption~\ref{Ass2}(d) equivalently says that, for $\theta$-a.e. $(\F,\G)$, \eqref{regCond} holds true for $P((\F,\G),\cdot)$-a.e. $(F,G)\in E$. We will use this observation at several instances in the following. 
\end{remark}

\begin{theorem}[The Markov case] 
\label{th-Markov}
Let $N > K$. Then, under Assumption~\ref{Ass2}, the following hold true: 
\begin{itemize}
\item[(a)] There exists a unique stationary process $(\Z_n)_{n\in\ZZ}$ on $\cH_K^{++}$ satisfying 
\begin{equation} 
\label{eq:Zn}
\Z_n=\G_n^*(I+\F_n\Z_{n-1}^{-1}\F_n^*)^{-1}\G_n .
\end{equation} 
\item[(b)] We have, as $\rho\to\infty$,
\begin{equation} 
\mathcal I_\rho = \frac KN \log\rho + \kappa_\infty + o(1),  
 \end{equation} 
 where $\log\det(\Z_0+\F_1^*\F_1)$ is integrable, and 
\begin{equation} 
  \kappa_\infty:=\frac1N\EE\log\det(\Z_0+\F_1^*\F_1).
\end{equation} 
\item[(c)] Given \emph{any} $\X_{-1} \in \cH_K^{++}$, if we consider the process 
$(\X_n)_{n\in\NN}$ defined recursively by 
\begin{equation} 
\X_{n} = \G_n^*(I+\F_n\X_{n-1}^{-1}\F_n^*)^{-1}\G_n\,,
\end{equation} 
then we have, in probability, 
\begin{equation}
\label{Xkappa} 
 \kappa_\infty=\lim_{n\to\infty}\frac 1{nN} \sum_{\ell=0}^{n-1} \log\det(\X_{\ell} + \F_{\ell+1}^* \F_{\ell+1}).
\end{equation}

\end{itemize}
\end{theorem}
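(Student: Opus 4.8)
The plan is to deduce all three statements from Theorem~\ref{th-main} by analysing the ``$\rho=\infty$ limit'' of the recursion~\eqref{recurs}. The basic link: if $(\W_n)$ is the stationary process of Theorem~\ref{th-main} at SNR $\rho$, then
\[
\Z_n^{(\rho)}:=\rho^{-1}\big(\W_n^{-1}-I\big)=\G_n^*\big(I+\rho\,\F_n\W_{n-1}\F_n^*\big)^{-1}\G_n
\]
is a stationary $\cH_K^{++}$-valued process with $0\prec\Z_n^{(\rho)}\preceq\G_n^*\G_n$, and, using $\rho(I+\rho Z)^{-1}=(\rho^{-1}I+Z)^{-1}$, it satisfies $\Z_n^{(\rho)}=\G_n^*\big(I+\F_n(\rho^{-1}I+\Z_{n-1}^{(\rho)})^{-1}\F_n^*\big)^{-1}\G_n$, i.e.\ exactly~\eqref{eq:Zn} with the extra regularising term $\rho^{-1}I$. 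So~\eqref{eq:Zn} is the formal $\rho\to\infty$ version and $(\Z_n)$ should be $\lim_{\rho\to\infty}\Z_n^{(\rho)}$.

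For part~(a) I would first turn $\phi_{(F,G)}:Z\mapsto G^*(I+FZ^{-1}F^*)^{-1}G$ into a genuine self-map of $\cH_K^{++}$. Since $\det(G^*F)\ne 0$ forces $F$ and $G$ to have full column rank, a reduced QR factorisation $F=UR$ ($U$ an $N\times K$ isometry, $R$ invertible) together with the Woodbury identity gives $\phi_{(F,G)}(Z)=P+Q^*(I+RZ^{-1}R^*)^{-1}Q$ with $P:=G^*\Pi_F^\perp G\succeq0$ and $Q:=U^*G$ invertible. Here Assumption~\ref{Ass2}(d) is exactly what is needed: $\Pi_G^\perp Fv\ne0$ for all $v\ne0$ means $\colspan(F)\cap\colspan(G)=\{0\}$, hence $\Pi_F^\perp G$ has full column rank and $P\succ0$, for $\theta$-a.e.\ and (by the remark before the theorem) $P((F,G),\cdot)$-a.e.\ pair. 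Then standard Thompson-metric ($d_T$) arithmetic shows $\phi_{(F,G)}$ is order preserving (a composition of order-preserving maps and an even number of inversions) and sends $\cH_K^+$ into the box $[P,\,P+Q^*Q]\subset\cH_K^{++}$, on which the ``$+P$'' step strictly contracts: $d_T(P+X,P+X')\le g_t\big(d_T(X,X')\big)$ with $g_t(u)=\log(t\,e^u+1-t)<u$ for $u>0$ and $t=\|Q\|^2/(\|Q\|^2+\lambda_{\min}(P))<1$. I would then construct the stationary solution as the monotone pull-back limit $\Z_n:=\lim_{m\to-\infty}(\phi_{(\F_n,\G_n)}\circ\cdots\circ\phi_{(\F_m,\G_m)})(Z_0)$: the sequence is $d_T$-monotone in $m$ and stays in $[P_n,P_n+Q_n^*Q_n]$, so it converges, lands in $\cH_K^{++}$, satisfies~\eqref{eq:Zn} by continuity, and is stationary --- hence ergodic --- by shift-covariance. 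Uniqueness among stationary solutions (and independence of $Z_0$) follows because for two stationary solutions $d_n:=d_T(\Z_n,\Z_n')$ is stationary with $d_n\le g_{t_n}(d_{n-1})\le d_{n-1}$; a decreasing stationary sequence is a.s.\ constant, and $g_{t_n}(c)=c$ with $t_n\in(0,1)$ forces $c=0$.

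Part~(c) is then short: Birkhoff's theorem gives a.s.\ convergence of the Cesàro sums of the stationary ergodic sequence $\log\det(\Z_\ell+\F_{\ell+1}^*\F_{\ell+1})$ to $\EE\log\det(\Z_0+\F_1^*\F_1)$, whose finiteness follows from $\F_1^*\F_1\preceq\Z_\ell+\F_{\ell+1}^*\F_{\ell+1}\preceq\G_0^*\G_0+\F_1^*\F_1$ and Assumption~\ref{Ass2}(b)--(c); for an arbitrary $\X_{-1}$ the contraction of part~(a) gives $d_T(\X_\ell,\Z_\ell)\to0$ in probability, and $|\log\det(X+C)-\log\det(Y+C)|\le K\,d_T(X,Y)$ for $C\succeq0$ transfers the limit. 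For part~(b) I would start from Theorem~\ref{th-main}(b): substituting the recursion for $\W_0$ and using the matrix-determinant lemma telescopes it to $\mathcal I_\rho=\tfrac1N\EE\log\det\big(I+\rho(\F_0\W_{-1}\F_0^*+\G_0\G_0^*)\big)$. Since $\psi^{(\rho)}$ (the map with $\rho^{-1}I+Z$ in place of $Z$) is an even stronger $d_T$-contraction than $\phi$ and $\psi^{(\rho)}\to\phi$ uniformly on boxes, the pull-back argument gives $\Z_{-1}^{(\rho)}\to\Z_{-1}$ a.s., whence $A^{(\rho)}:=I+\rho\F_0\W_{-1}\F_0^*=I+\F_0(\rho^{-1}I+\Z_{-1}^{(\rho)})^{-1}\F_0^*\to A:=I+\F_0\Z_{-1}^{-1}\F_0^*$ a.s. As $\G_0$ has full column rank, exactly $K$ eigenvalues of $A^{(\rho)}+\rho\G_0\G_0^*$ blow up; a Schur-complement computation in a basis $[U_1|U_2]$ of $\colspan(\G_0)\oplus\colspan(\G_0)^\perp$ gives $\log\det(A+\rho\G_0\G_0^*)=K\log\rho+\log\det(\G_0^*\G_0)+\log\det(U_2^*AU_2)+o(1)$. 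The closing identity comes from Jacobi's complementation formula $\det(U_2^*AU_2)=\det A\,\det(U_1^*A^{-1}U_1)$ together with $A^{-1}=I-\F_0(\Z_{-1}+\F_0^*\F_0)^{-1}\F_0^*$ and the recursion rewritten as $\Z_0=\G_0^*\G_0-\G_0^*\F_0(\Z_{-1}+\F_0^*\F_0)^{-1}\F_0^*\G_0$: substituting, $U_1^*A^{-1}U_1$ collapses to $T_1^{-*}\Z_0 T_1^{-1}$ (where $\G_0=U_1T_1$), yielding the per-realisation equality
\[
\log\det(\G_0^*\G_0)+\log\det(U_2^*AU_2)=\log\det(\Z_{-1}+\F_0^*\F_0)-\log\det\Z_{-1}+\log\det\Z_0 .
\]
Taking expectations, the stationary increment $\EE[\log\det\Z_0-\log\det\Z_{-1}]$ vanishes (the difference is integrable, and a stationary increment of an a.s.\ finite process has zero mean), stationarity turns $\F_0^*\F_0$ into $\F_1^*\F_1$, and a uniform-integrability argument passing $o(1)$ through $\EE$ --- using that the $N-K$ small eigenvalues of $A^{(\rho)}+\rho\G_0\G_0^*$ stay in $[1,\|A^{(\rho)}\|]$ with $\|A^{(\rho)}\|$ bounded a.s.\ uniformly in large $\rho$, and logarithmic moments of $\|\F_0\|,\|\G_0\|,\det(\F_0^*\F_0)$ controlled by Assumption~\ref{Ass2}(b)--(c) --- gives $\mathcal I_\rho=\tfrac KN\log\rho+\tfrac1N\EE\log\det(\Z_0+\F_1^*\F_1)+o(1)$.

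I expect the main obstacle to be part~(a): making the pull-back/contraction machinery run with a random, channel-dependent contraction rate while checking the box stays inside $\cH_K^{++}$ --- precisely where Assumption~\ref{Ass2}(d) enters, since without $P\succ0$ the iterates could degenerate to $\partial\cH_K^{++}$ and neither existence in $\cH_K^{++}$ nor the contraction would survive. The second, softer difficulty, in part~(b), is justifying $\Z^{(\rho)}\to\Z$ and the interchange of $\lim_{\rho\to\infty}$ with $\EE$ by uniform integrability, which is where Assumptions~\ref{Ass2}(b)--(c) do their work after the algebraic reductions; the determinant bookkeeping producing $\kappa_\infty$ is itself short once those limits are in hand.
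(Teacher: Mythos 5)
Your part~(a) and part~(c) are sound and take a genuinely alternative route to the paper. For~(a), the paper works at the level of the Markov kernel $Q_0$ on $E\times\cH_K^+$: it proves tightness (Lemma~\ref{tightness}, which shows the rank of the iterate strictly increases at each step), Feller continuity (Lemma~\ref{le:Feller}), obtains an invariant measure by Krylov--Bogolyubov (Corollary~\ref{cor:existQ0}), and proves uniqueness by a contraction in probability (Lemma~\ref{le:uniqueQ0}). Your pull-back construction, exploiting that a \emph{single} application of $h_{0,F,G}$ already lands in the interior --- because Assumption~\ref{Ass2}(d) yields $\colspan(F)\cap\colspan(G)=\{0\}$, hence $G^*\Pi_F^\perp G\succ 0$ --- is more elementary and replaces the $K$-step rank-increase argument of Lemma~\ref{tightness} by a one-step box $[P_n,P_n+Q_n^*Q_n]\subset\cH_K^{++}$. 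It also gives a contraction factor depending only on $(\F_n,\G_n)$ (through $\|Q_n\|$ and $\lambda_{\min}(P_n)$), whereas the paper's Lemma~\ref{le:uniqueQ0} extracts the contraction from the ``$+I$'' step and ends up with a factor involving $\|\Z_{n-1}^{-1}\|$, which forces a H\"older/DCT workaround. Provided you verify carefully the Thompson-metric contraction formula you invoke, your construction is a nice simplification. Your part~(c) is essentially the paper's.

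For part~(b), however, there is a genuine gap, and it is exactly the difficulty the paper's route was designed to avoid. Your per-realisation identity
\begin{equation*}
\log\det(\G_0^*\G_0)+\log\det(U_2^*AU_2)
=\log\det(\Z_{-1}+\F_0^*\F_0)-\log\det\Z_{-1}+\log\det\Z_0
\end{equation*}
is algebraically correct, but to take expectations and conclude you rely on the cancellation $\EE[\log\det\Z_0-\log\det\Z_{-1}]=0$, which (even in the ``stationary increment'' form you invoke) needs $\log\det\Z_1-\log\det\Z_0$ to be integrable. The available assumptions give log-moment control on $\det(\F_0^*\F_0)$, $\|\F_0\|$, $\|\G_0\|$, but nothing on $\lambda_{\min}(\Z_0)$ or equivalently on $\det(\G_0^*\Pi_{\F_0}^\perp\G_0)=\det([\F_0\,|\,\G_0]^*[\F_0\,|\,\G_0])/\det(\F_0^*\F_0)$; the needed lower log-moment $\EE\log^-\det([\F_0\,|\,\G_0]^*[\F_0\,|\,\G_0])>-\infty$ is not implied by Assumption~\ref{Ass2}(b)--(c). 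The same quantity resurfaces in your uniform-integrability step: to dominate $\log\det(U_2^*A^{(\rho)}U_2)$ uniformly in $\rho$, you need $\|A^{(\rho)}\|\le 1+\|\F_0\|^2/\lambda_{\min}(\Z_{-1}^{(\rho)})$ controlled, and ``$\|A^{(\rho)}\|$ bounded a.s.\ uniformly in large $\rho$'' does not follow from the a.s.\ convergence $\Z_{-1}^{(\rho)}\to\Z_{-1}$ alone. The paper's derivation sidesteps all of this: it first establishes the exact identity $N\mathcal{I}_\rho=K\log\rho+\EE\log\det(\Z_{\gamma,0}+\F_1^*\F_1)$ at every finite $\rho$ (Eq.~\eqref{Ihighsnr}), where the $\log\det\Z_{\gamma,0}$ terms cancel \emph{algebraically} and every intermediate term is integrable because $\W_n\preceq I$; it then sends $\gamma\to 0$ via narrow convergence of the invariant measures $\pi_\gamma\to\pi_0$ plus Skorokhod and a clean two-sided majorant $\log\det(\F_1^*\F_1)\le\log\det(\Z_{\gamma,1}+\F_1^*\F_1)\le N\log(1+\|\F_1\|^2+\|\G_1\|^2)$, for which Assumption~\ref{Ass2}(b)--(c) suffice. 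To repair your argument you would either have to add a log-moment hypothesis on $\det(\G_0^*\Pi_{\F_0}^\perp\G_0)$ (strengthening Assumption~\ref{Ass2}) or restructure the computation so that the $\log\det\Z$-terms cancel \emph{before} passing to the $\rho\to\infty$ limit, which is precisely what the paper does.
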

The proof of Theorem~\ref{th-Markov} is provided in  Section~\ref{sec:prfasymp}.

\begin{remark}[The  case $N\leq K$] In the statement of Theorem~\ref{th-Markov}, it is assumed that $N > K$. Let
us say a few words about the case where $N < K$. In this case, 
assuming that $(\F_n, \G_{n-1})$ is a Markov chain, there is an analogue 
$(\widetilde \Z_n)$ of the process $(\Z_n)$ satisfying the recursion 
\begin{equation} 
\widetilde \Z_n=\F_n
    (I_K+\G_{n-1}^* \widetilde \Z_{n-1}^{-1}\G_{n-1})^{-1}\F_n^*, 
\end{equation} 
and adapting Assumption~\ref{Ass2} to this new setting, we can show that 
${\mathcal I}_\rho = \log\rho + \tilde\kappa_\infty + o(1)$, where 
\begin{equation} 
\tilde\kappa_\infty := \frac1N \EE \log\det(\widetilde\Z_0 + \G_0 \G_0^*) .  
\end{equation}  
This result can be obtained by adapting the proof of Theorem~\ref{th-Markov} 
in a straightforward manner. \\ 
The case $K=N$ is somehow singular and requires a specific treatment that will not be undertaken 
in this paper; see also the end of Section~\ref{sec:Qgamma} for
further explanations. 
\end{remark} 

\begin{remark}
In the case where $K=1$, $N > 1$, and the process $(\F_n, \G_n)_{n\in\ZZ}$ is 
i.i.d., we recover \cite[Th.~2]{lev-som-sha-zei-09}, where this result is obtained
with the help of the theory of Harris Markov chains. 
\end{remark} 

\subsubsection*{Examples where Assumption~\ref{Ass2} is verified} 
\label{Examples} 

In Proposition~\ref{prop:com1} below, the Markov property of the process
$(\F_n, \G_n)_{n\in\ZZ}$ is obvious, while in Proposition~\ref{prop:com2}, it
can be easily checked from Equation~\eqref{FGmultipath}. Moreover, in 
both propositions, it is well known that the Markov process 
$(\F_n, \G_n)_{n\in\ZZ}$ is an ergodic process satisfying Assumptions
\ref{Ass2}-(a) and \ref{Ass2}-(b)~\cite{mey-twe-livre09}. We shall focus on
Assumptions~\ref{Ass2}-(c) and \ref{Ass2}-(d). 

\begin{proposition}[AR-model]
\label{prop:com1} 
For $N > K$, assume $(\F_n, \G_n)$ is the multidimensional ergodic AR process 
defined by the recursion 
\begin{equation} 
\begin{bmatrix} \F_{n} \\ \G_{n} \end{bmatrix} 
= 
A \begin{bmatrix} \F_{n-1} \\ \G_{n-1} \end{bmatrix} + 
\begin{bmatrix} U_{n} \\ V_{n} \end{bmatrix} , 
\end{equation}  
where $A \in \CC^{2N\times 2N}$ is a deterministic matrix whose eigenvalue 
spectrum belongs to the open unit disk, and where $(U_n, V_n)_{n\in\ZZ}$ is an 
i.i.d.~process on $E$ such that $\EE \| U_0 \|^2 + \EE \| V_0 \|^2 < \infty$. 
If the entries of the matrix $\begin{bmatrix} U_n \ V_n\end{bmatrix}$  are 
independent with their distributions being absolutely continuous with respect 
to the Lebesgue measure on $\CC$, then Assumption~\ref{Ass2}-(d) is verified. 
If, furthermore, the densities of the elements of $U_n $ and $V_n$ are bounded,
then, Assumption~\ref{Ass2}-(c) is verified.  
\end{proposition}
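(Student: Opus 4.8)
The plan is to take Assumptions~\ref{Ass2}-(a)--(b), together with the Markov property, ergodicity and uniqueness of the invariant law $\theta$, as the classical facts recalled just before the proposition (\cite{mey-twe-livre09}), and to concentrate on \ref{Ass2}-(c)--(d). Since all matrix norms are equivalent I will freely use the spectral or the Hilbert--Schmidt norm inside ``$\EE\|\cdot\|^2<\infty$''. First I would record the \emph{one-step decomposition}: writing $\Gamma_n:=\begin{bmatrix}U_n\\ V_n\end{bmatrix}$, $M_n:=\begin{bmatrix}\F_n\\ \G_n\end{bmatrix}$, the recursion $M_n=AM_{n-1}+\Gamma_n$ has stationary solution $M_n=\sum_{j\ge0}A^j\Gamma_{n-j}$, so reading off the first $N$ rows of $M_0=\Gamma_0+AM_{-1}$ gives $\F_0=U_0+\Psi$, where $\Psi$ (the first $N$ rows of $AM_{-1}$) is a function of $\Gamma_{-1},\Gamma_{-2},\dots$, hence independent of $\Gamma_0=(U_0,V_0)$, and $\EE\|\Psi\|^2\le\|A\|^2\,\EE\|M_0\|^2<\infty$ by stationarity and \ref{Ass2}-(b). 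This is the only structural input I will use.

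For \ref{Ass2}-(d) I would first note that for every starting point $(\F,\G)$ the transition law $P((\F,\G),\cdot)$ is the law of $A\begin{bmatrix}\F\\\G\end{bmatrix}+\Gamma_1$, i.e.\ a translate of the law of $\Gamma_1$; since the entries of $\begin{bmatrix}U_1\ V_1\end{bmatrix}$ are independent with absolutely continuous laws, $\Gamma_1$ and hence $P((\F,\G),\cdot)$ is absolutely continuous on $E\cong\CC^{2NK}$ for every $(\F,\G)$, and so is $\theta=\int P((F,G),\cdot)\,\theta(dF\times dG)$ as a mixture of absolutely continuous measures. It therefore suffices to show, for a \emph{fixed} non-zero $v\in\CC^K$, that the set $B_v:=\{(F,G):\det(G^*F)=0\}\cup\{(F,G):\Pi_G^\perp Fv=0\}$ is Lebesgue-null. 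The first piece is the zero set of the polynomial $(F,G)\mapsto\det(G^*F)$ in the real and imaginary parts of the entries, which is not identically zero ($F=G=\begin{bmatrix}I_K\\0\end{bmatrix}$ gives $\det(G^*F)=1$), hence null. Off the null set $\{\rank G<K\}$ (null because $N>K$) one has $\Pi_G^\perp Fv=0\iff Fv\in\colspan(G)\iff$ all $(K+1)\times(K+1)$ minors of $[\,G\ |\ Fv\,]$ vanish, a system of polynomials not all identically zero (take $G$ with $\colspan(G)=\colspan\{e_1,\dots,e_K\}$ and $F$ with $Fv=e_{K+1}$, possible since $v\ne0$ and $N\ge K+1$), hence with null common zero set. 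Thus $B_v$ is null, giving \ref{Ass2}-(d).

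For \ref{Ass2}-(c) I would split $\EE|\log\det(\F_0^*\F_0)|$ into the two signed parts. The positive part is easy: $\det(\F_0^*\F_0)=\prod_{i=1}^K\sigma_i(\F_0)^2\le\|\F_0\|^{2K}$ and $(\log x)_+\le x$ give $(\log\det(\F_0^*\F_0))_+\le K\|\F_0\|^2\in L^1$ by \ref{Ass2}-(b). For the negative part, Cauchy--Binet gives $\det(\F_0^*\F_0)=\sum_{|S|=K}|\det(\F_0)_{S,:}|^2\ge|\det M|^2$ with $M$ the top $K\times K$ block of $\F_0$, so it remains to prove $\EE(-\log|\det M|)_+<\infty$. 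Here I would use $M=\widetilde U+Z$, where $\widetilde U$ is the top $K\times K$ block of $U_0$ — its $K^2$ entries being independent with densities $\le C$, each \emph{row} of $\widetilde U$ has a density $\le C^K$ on $\CC^K$ — and $Z$ (the top block of $\Psi$) is independent of $\widetilde U$ with $\EE\|M\|^2<\infty$. Writing $|\det M|=\prod_{i=1}^K d_i$ with $d_i:=\dist(M_{i,\cdot},\colspan\{M_{1,\cdot},\dots,M_{i-1,\cdot}\})$ (base $\times$ height on rows) gives $(-\log|\det M|)_+\le\sum_i(-\log d_i)_+$, so it suffices to bound each $\EE(-\log d_i)_+$. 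Conditioning on $\mathscr G_i:=\sigma(Z,\widetilde U_{1,\cdot},\dots,\widetilde U_{i-1,\cdot})$, the subspace $V_i:=\colspan\{M_{1,\cdot},\dots,M_{i-1,\cdot}\}$ has complex dimension $\le i-1\le K-1$ and is $\mathscr G_i$-measurable, while $M_{i,\cdot}=\widetilde U_{i,\cdot}+Z_{i,\cdot}$ is, conditionally, a fixed vector plus an independent vector of density $\le C^K$; bounding the $t$-neighbourhood of $V_i$ by a slab volume $\le c\,R^{2K-2}t^2$ and using conditional Markov, for $0<t\le1$ with $R=t^{-1/K}$,
\[
\PP\big(d_i<t\mid\mathscr G_i\big)\ \le\ c\,R^{2K-2}t^2+R^{-2}\,\EE\big[\|M_{i,\cdot}\|^2\mid\mathscr G_i\big]\ =\ (c+\Xi_i)\,t^{2/K},\qquad \Xi_i:=\EE\big[\|M_{i,\cdot}\|^2\mid\mathscr G_i\big],
\]
where $c=c(K,C)$. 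Integrating, $\EE[(-\log d_i)_+\mid\mathscr G_i]=\int_0^\infty\PP(d_i<e^{-s}\mid\mathscr G_i)\,ds\le c'\big(1+\log(1+\Xi_i)\big)$, and by Jensen and $\EE\Xi_i=\EE\|M_{i,\cdot}\|^2<\infty$ we get $\EE(-\log d_i)_+<\infty$; summing over $i$ finishes (c).

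The hard part will be exactly this last anti-concentration step for $\det M$: a priori the slabs $\{d_i<t\}$ have infinite volume, so one cannot bound $\PP(d_i<t)$ merely by ``density times volume''. What rescues the argument — and is the reason the hypothesis asks for \emph{independent} entries with \emph{bounded} densities rather than just an absolutely continuous law — is that the product structure of the driving noise lets one peel off rows one at a time while keeping a genuinely bounded conditional density on $\CC^K$, which combines with the crude second-moment tail bound (from $\EE\|U_0\|^2,\EE\|\Psi\|^2<\infty$) to make the slab estimate summable against $-\log t$.
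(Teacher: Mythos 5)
Your proof is correct, and it takes a genuinely different route from the paper's on both halves while exploiting the same structural fact (that $U_0$ has independent entries with bounded densities and is independent of the past). For Assumption~\ref{Ass2}-(d), the paper peels off the one-step noise, conditions on the past, and argues block by block that the relevant matrix $(V_n+D)^*(U_n+B)$ has a density, iterating Lemma~\ref{dens->fr} with a convolution step and disposing of the projector condition by Fubini; you instead observe once and for all that each $P((\F,\G),\cdot)$ is a translate of the law of $(U_1,V_1)$, hence that the invariant law $\theta$ is absolutely continuous, and then note that the bad set $\{\det(G^*F)=0\}\cup\{\Pi_G^\perp Fv=0\}$ lies in the common zero set of polynomials in the real coordinates which are not identically zero (after discarding the Lebesgue-null set $\{\rank G<K\}$). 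Your version is cleaner and more transparent. For Assumption~\ref{Ass2}-(c), the paper Gram--Schmidts the \emph{columns} of $\F_0$, lower-bounds each squared column-distance by $|\langle\p^k,\f_0^k\rangle|^2$ for a measurably chosen unit vector $\p^k$ orthogonal to the later columns, shows this inner product has a bounded conditional density, and closes with the scalar fact $\EE|\log^-(|X|^2)|\le C\pi$; you instead pass to the top $K\times K$ block $M$ by Cauchy--Binet, factor $|\det M|$ over \emph{rows} as a product of distances, and derive the anti-concentration bound $\PP(d_i<t\mid\mathscr G_i)\lesssim(1+\Xi_i)\,t^{2/K}$ from a slab-volume estimate plus a Markov tail at radius $R=t^{-1/K}$, finishing with Jensen. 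Both are sound: the paper's scalar reduction is shorter and gives a clean $O(1)$ bound per factor, while yours is more mechanical but isolates precisely why \emph{bounded} densities (rather than mere absolute continuity) are needed here, since the slab alone has infinite measure and it is the second-moment tail that makes the estimate integrable against $-\log t$.
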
 

Our second example is a particular multi-antenna version of the AR channel
model of Example~\ref{ex:AR}. This model is general enough to capture the
Doppler effect, the correlations within each matrix coefficient of the channel,
as well as the power profile of these taps.  

\begin{proposition}[MIMO multipath fading channel]
\label{prop:com2} 
Given three positive integers $L, R$, and $T$ such that $R > T$, let
$(C_n)_{n\in\ZZ}$ be the $\CC^{(L+1)R \times T}$-valued random process
described by the iterative model 
\begin{equation} 
\label{defCn}
C_{n} = \begin{bmatrix} H_0 \\ & \ddots \\ & & H_L \end{bmatrix}
 C_{n-1} + U_{n} , 
\end{equation} 
where the $\{ H_\ell \}_{\ell=0}^{L}$ are deterministic $R\times R$ matrices
whose spectra lie in the open unit disk, and where $(U_n)_{n\in\ZZ}$ is an 
i.i.d. matrix process such that $\EE \| U_0 \|^2 < \infty$. Let $\F_n$ and 
$\G_n$ be the $L R \times LT$ matrices defined as in \eqref{FGmultipath} with
$C_n = \begin{bmatrix} c_{n,0}^\T \ \cdots \ c_{n,L}^\T \end{bmatrix}^\T$, the
$c_{n,\ell}$'s being $R\times T$ matrices. 
If the entries of $U_n$ are independent with their distributions being 
absolutely continuous with respect to the Lebesgue measure on $\CC$, then   
Assumption~\ref{Ass2}-(d) is verified on the Markov process 
$(\F_n, \G_n)_{n\in\ZZ}$. If, furthermore, the densities of the elements of 
$U_n$ are bounded, then, Assumption~\ref{Ass2}-(c) is verified.  
\end{proposition}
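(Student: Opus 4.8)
First I would spell out the banded shape \eqref{FGmultipath}. Write $c_{n,\ell}$ for the $\ell$-th $R\times T$ row-block of $C_n$ and $u_{n,\ell}$ for the $\ell$-th row-block of $U_n$; then $\F_n,\G_n\in\CC^{LR\times LT}$ are block-banded, $\F_n$ being block upper-triangular with diagonal blocks $c_{nL,L},\dots,c_{nL+L-1,L}$ and $\G_n$ block lower-triangular with diagonal blocks $c_{nL,0},\dots,c_{nL+L-1,0}$; moreover $\F_n$ collects exactly the blocks $c_{nL+i,\ell}$ with $\ell\ge i+1$ and $\G_n$ exactly those with $\ell\le i$, so together they carry each $c_{nL+i,\ell}$ ($0\le i\le L-1$, $0\le\ell\le L$) once. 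Since the spectra of the $H_\ell$ lie in the open unit disk, the stationary solution of \eqref{defCn} is $C_m=\sum_{j\ge0}D^{\,j}U_{m-j}$ with $D=\diag(H_0,\dots,H_L)$, and because $D$ is \emph{block-diagonal} this reads, row-block by row-block, $c_{m,\ell}=\sum_{j\ge0}H_\ell^{\,j}u_{m-j,\ell}$. The point to keep in mind is that the whole array $\{u_{k,\ell}\}_{k\in\ZZ,\,0\le\ell\le L}$ is mutually independent with independent entries (absolutely continuous, and bounded-density under the stronger hypothesis), and $c_{m,\ell}$ depends only on the $\ell$-th slice $(u_{k,\ell})_{k\le m}$. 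Throughout, let $J\in\CC^{R\times T}$ be the matrix with $I_T$ on top and zeros below, so $J^*J=I_T$.

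\textbf{Part (d).} The blocks occurring in $(\F_0,\G_0)$ are precisely all the $c_{i,\ell}$, $0\le i\le L-1$, $0\le\ell\le L$, i.e.\ the entire content of $(C_0,\dots,C_{L-1})$; by the stationary formula the latter is an invertible linear image of $(U_0,\dots,U_{L-1})$ (a block lower-triangular map with identity diagonal) shifted by an independent term, so, the $U$'s having independent absolutely continuous entries, this array is distributed according to a law absolutely continuous with respect to Lebesgue measure. Now fix $v\ne0$. If $\det(G^*F)\ne0$ then both $F$ and $G$ have full column rank $LT$, hence $\Pi_G^\perp Fv=0$ forces $Fv\in\colspan(G)$ and thus the Gram determinant $\det\!\big([\,G\,|\,Fv\,]^*[\,G\,|\,Fv\,]\big)$ to vanish. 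Consequently the ``bad set'' for \eqref{regCond} is contained in $\{P_1=0\}\cup\{P_2=0\}$, where $P_1:=\det(\G_0^*\F_0)$ and $P_2:=\det\!\big([\,\G_0\,|\,\F_0v\,]^*[\,\G_0\,|\,\F_0v\,]\big)$ are polynomials in the real and imaginary parts of the entries of the $c_{i,\ell}$'s. It then suffices to exhibit, for the given $v$, one value of the $c_{i,\ell}$'s making $P_1\ne0$ and one making $P_2\ne0$: the corresponding sets of $c$-arrays are then Lebesgue-null, hence (the $c$-array being absolutely continuous) of probability zero, so $\theta$ gives the bad set mass zero, which is (d). For $P_1$: from the triangular shapes $\G_0^*\F_0$ is block upper-triangular with diagonal blocks $c_{i,0}^*c_{i,L}$, so taking $c_{i,0}=c_{i,L}=J$ for all $i$ and all remaining $c_{i,\ell}=0$ yields $\G_0^*\F_0=I_{LT}$ (and, as used below, $\F_0=\diag(J,\dots,J)$). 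For $P_2$: write $v=(v_0,\dots,v_{L-1})$ with $v_{j_0}\ne0$ and $j_0$ maximal, let $\G_0$ be block-diagonal with every diagonal block $J$ (so $\G_0$ has full column rank and $\colspan(\G_0)^\perp$ consists of vectors whose first $T$ coordinates vanish in each $R$-block), and let all blocks of $\F_0$ vanish except the diagonal block $c_{j_0,L}=:M$, chosen with bottom row $\bar v_{j_0}^{\,\T}$. Then $\F_0v$ is supported on the $j_0$-th $R$-block where it equals $Mv_{j_0}$, whose last coordinate $\|v_{j_0}\|^2$ is nonzero; hence $\Pi_{\G_0}^\perp\F_0v\ne0$ and $[\,\G_0\,|\,\F_0v\,]$ has full column rank $LT+1$ (which needs $LR\ge LT+1$, hence $R>T$), so $P_2\ne0$.

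\textbf{Part (c).} Here one additionally assumes the entries of $U_n$ have bounded densities. The blocks occurring in $\F_0$ form a proper marginal of $(C_0,\dots,C_{L-1})$, so a priori their joint law need not be bounded; but the block-diagonal form of $D$ again rescues the argument. Grouping these blocks by $\ell$, the $\ell$-th group $(c_{0,\ell},\dots,c_{\ell-1,\ell})$ is, by the formula $c_{i,\ell}=\eta_{i,\ell}+\sum_{m=0}^{i}H_\ell^{\,i-m}u_{m,\ell}$ with $\eta_{i,\ell}:=\sum_{j>i}H_\ell^{\,j}u_{i-j,\ell}$ independent of $u_{0,\ell},\dots,u_{\ell-1,\ell}$, an invertible (unit lower-triangular) linear image of $(u_{0,\ell},\dots,u_{\ell-1,\ell})$ shifted by an independent term, hence has a bounded density; and the groups are mutually independent across $\ell$ (they depend on disjoint noise slices), so the full array of blocks of $\F_0$ has a bounded density. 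Next, $\det(\F_0^*\F_0)$ is a polynomial $Q$ in the entries of that array, not identically zero (the $P_1$-assignment above makes $\F_0=\diag(J,\dots,J)$, so $\F_0^*\F_0=I_{LT}$). Finally I would combine: (i) $\log\det(\F_0^*\F_0)\le2LT\log\|\F_0\|\le LT\log(1+\|\F_0\|^2)$, which together with $\EE\|\F_0\|^2<\infty$ (a consequence of $\EE\|U_0\|^2<\infty$ and $\sum_j\|H_\ell^{\,j}\|<\infty$) controls $\EE(\log\det\F_0^*\F_0)^+$; and (ii) a standard anti-concentration estimate for sublevel sets of polynomials — for a random vector $X$ with bounded density and finite second moment and $Q\not\equiv0$, $\PP(|Q(X)|\le\varepsilon)\le C\varepsilon^{\delta}$ for some $\delta>0$ (obtained from a Remez-type sublevel-set bound on balls together with a Markov-inequality tail bound, optimized over the radius) — which controls $\EE(\log\det\F_0^*\F_0)^-=\int_0^\infty\PP(\det\F_0^*\F_0<e^{-t})\,dt$. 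Together these give $\EE|\log\det(\F_0^*\F_0)|<\infty$.

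\textbf{Main obstacle.} The delicate step is Part (c): one must see that the blocks of $\F_0$, although only a partial view of the $C$-array, still have a \emph{bounded} (not merely absolutely continuous) joint density — which is exactly where the block-diagonal structure of $D$ is used, since it decouples the noise slices $u_{\cdot,\ell}$ — and then supply the polynomial sublevel-set estimate in this non-log-concave setting. Part (d), by contrast, is essentially bookkeeping once the banded structure and the two explicit witness configurations are in hand. Along the way I would also take for granted, as the text does, that $(\F_n,\G_n)_{n\in\ZZ}$ is Markov and ergodic and that $\EE\|\F_0\|^2+\EE\|\G_0\|^2<\infty$ (standard for stable AR processes), the last of which is the moment bound used above.
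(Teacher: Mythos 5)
Your argument reaches the right conclusions but takes a genuinely different route from the paper's in both parts. For Assumption~\ref{Ass2}(d), the paper splits $\F_n = Q_n + B_{n-1}$ and $\G_n = S_n + D_{n-1}$ into an innovation and an $(\F_{n-1},\G_{n-1})$-measurable part, argues conditionally on the past via the kernel $P((\F,\G),\cdot)$, and handles the projection condition through the auxiliary Lemma~\ref{PiG0}; you instead work with the stationary law $\theta$ directly, note that the nonzero blocks of $(\F_0,\G_0)$ are exactly the entries of $(C_0,\dots,C_{L-1})$, whose law is absolutely continuous, and then exhibit explicit witness configurations showing that the two obstruction polynomials $P_1=\det(\G_0^*\F_0)$ and $P_2=\det\bigl([\G_0\,|\,\F_0 v]^*[\G_0\,|\,\F_0 v]\bigr)$ are not identically zero. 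This is shorter and bypasses Lemma~\ref{PiG0} and the column-span manipulations; the two are equivalent via the Remark following Assumption~\ref{Ass2}. For Assumption~\ref{Ass2}(c) the divergence is more substantive. The paper isolates the same square block upper-triangular minor $\F_n^{J,\cdot}$ that you could use, writes its log-determinant as a sum over $T\times T$ diagonal blocks, and conditions so that each block has the innovation $u_{nL+\ell,L}^{[T],\cdot}$ as an additive, independent, bounded-density piece, whence the Gram--Schmidt reduction of Proposition~\ref{prop:com1} controls everything by the elementary scalar estimate $\EE\bigl|\log^-|X|^2\bigr|\le C\pi$ for a complex variable $X$ with density bounded by $C$. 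You instead prove a bounded joint density for the blocks of $\F_0$ (the noise-slice decoupling via the block-diagonal $D$ is correct and a nice observation) and then invoke a polynomial small-ball bound $\PP(|Q(X)|\le\varepsilon)\le C\varepsilon^{\delta}$ for a bounded-density vector $X$ with finite second moment. That estimate is true, and your sketch (Remez-type sublevel-set control on a ball, a Chebyshev tail bound, and an optimization over the radius) is the right way to prove it, but calling it ``standard'' overstates it: outside the log-concave setting it is not a one-line quotable fact, and stating and proving it costs roughly what the paper's descent to scalar conditional densities is designed to save. That small-ball lemma is therefore the one real ingredient you would still have to supply to make Part~(c) self-contained.
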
  

Propositions~\ref{prop:com1} and \ref{prop:com2} are proven in
Section~\ref{proofExamples}.

\section{Numerical illustrations} 
\label{sec:numerics} 

We consider here a multiple antenna version of the multipath channel desribed
in the introduction, see Equations~\eqref{YSV}--\eqref{FGmultipath}. We assume 
the channel coefficient matrices $c_{n,\ell}$ satisfy the AR model 
$c_{n,\ell} = \alpha c_{n-1,\ell} + \sqrt{1-\alpha^2} a_\ell u_{n,\ell}$. Here
the AR coefficient $\alpha$ takes the form $\alpha = \exp(- f_{\text{d}})$. 
The parameter $f_{\text{d}}$ represents the Doppler frequency, since it is
proportional to the inverse of the effective support of the autocorrelation 
function of a channel tap (channel coherence time). 
For $n \in \ZZ$ and $\ell \in \{ 0,\ldots L\}$, the $u_{n,\ell}$'s are
i.i.d. $R\times T $ random matrices with i.i.d $\mathcal{CN}(0,T^{-1})$
entries; the real vector $a = [a_0, \ldots, a_L]$ is a multipath amplitude 
profile vector such that $\| a \| = 1$; as is well known, the vector 
$[ a_0^2, \ldots, a_L^2]$  represents the so called power delay profile.  
\paragraph{Illustration of Theorem~\ref{th-main}.} We choose an exponential profile of the form 
$a_\ell \propto \exp(-0.4\ell)$. 
We start by comparing the mutual information estimates $\hat{\mathcal I}_{m,n}$ of $\mathcal I_\rho$ that naturally come with \eqref{defI}, namely by taking empirical averages of 
\begin{equation} 
 \frac{1}{(n-m+1)N} 
\, \log\det\left(I+\rho\, \H_{m,n} \H_{m,n}^* \right)
\end{equation} 
for several realizations of $\H_{m,n}$, with those coming with Theorem~\ref{th-main}(c), namely 
\begin{equation} 
\Ith:=\frac{1}{nN} \sum_{\ell=0}^{n-1} 
 \log\det \left( I + \rho \,\F_\ell \X_{\ell-1} \F_\ell^* \right) - \log\det \X_\ell
\end{equation} 
where, for any $n\in\NN$,
\begin{equation} 
\X_n := \left( I + \rho \, \G_n^*  \left( I + \rho \,\F_n  \X_{n-1} \F_n^* \right)^{-1} \G_n \right)^{-1},\qquad \X_{-1}:=I.
\end{equation}

\begin{figure}[ht]
\centering
{\includegraphics*[scale=0.7]{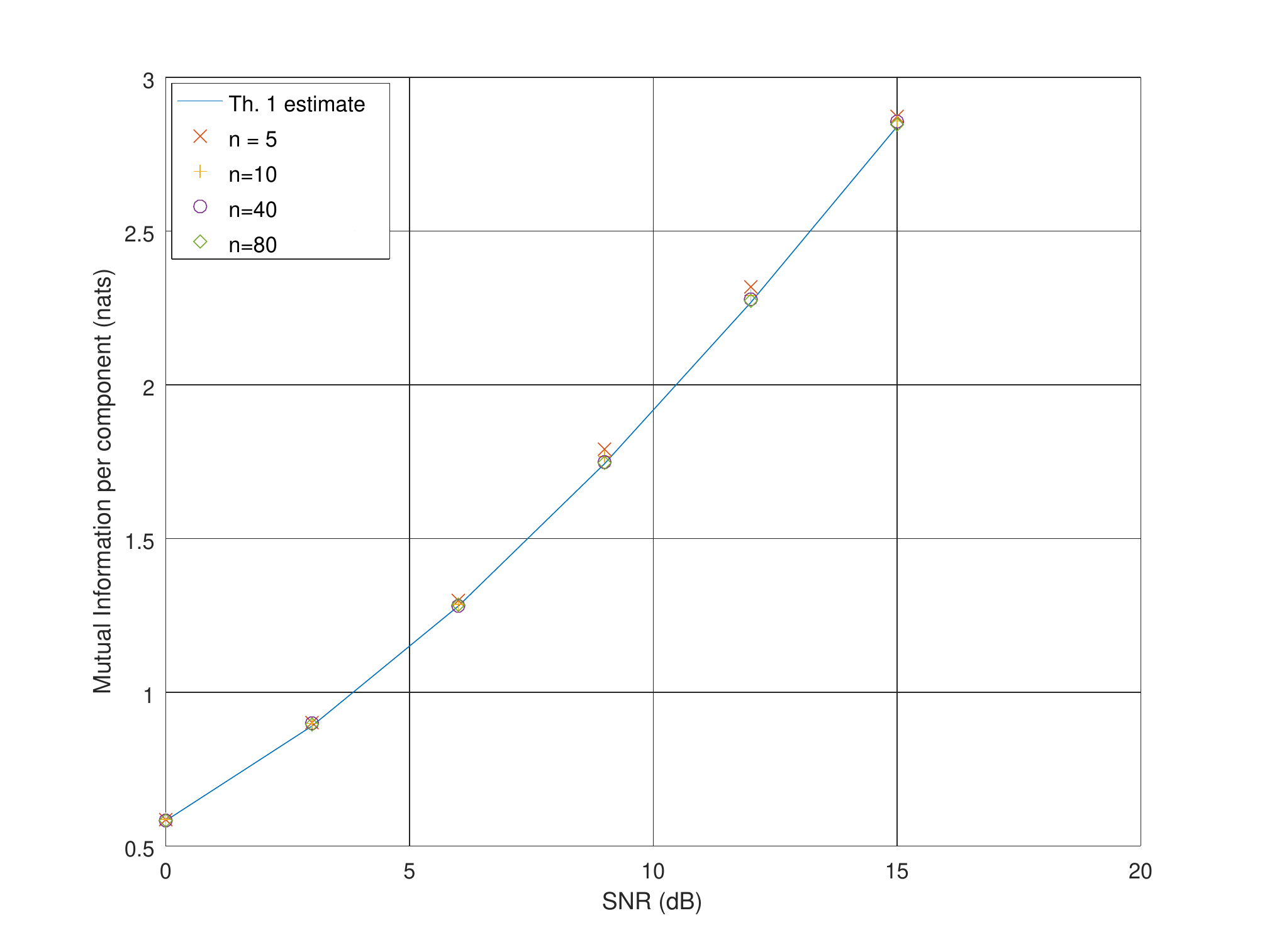}}
\caption{Plots of $\hat{\mathcal I}_{1,n}$   and $\hat{\mathcal I}_{4000}^{\mathrm{\,Th1}}$
w.r.t.~the SNR and $n$.  Setting: $R = T = 2$, $L=3$, 
$f_{\text{d}} = 0.05$.  Each empirical average $\hat{\mathcal I}_{1,n}$ comes from $150$ channel realizations.}   
\label{fig:svd-precision} 
\end{figure}
\begin{figure}[H]
\centering
{\includegraphics*[scale=0.7]{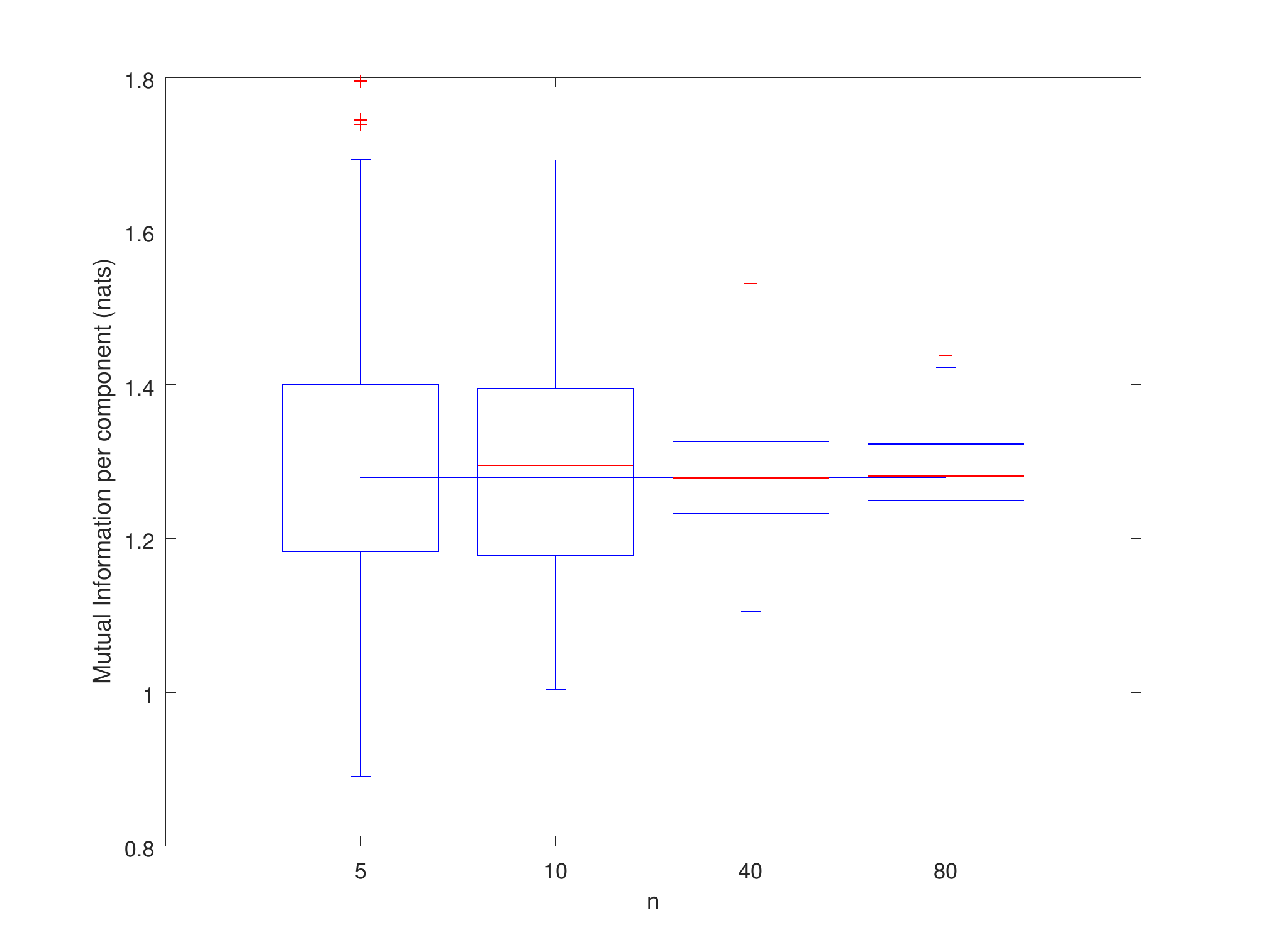}}
\caption{Boxplots  of $\hat{\mathcal I}_{1,n}$ w.r.t.~$n$.
Same setting as for Fig.~\ref{fig:svd-precision} with $\rho = 6$ dB. The 
continuous horizontal line represents $\hat{\mathcal I}_{4000}^{\mathrm{\,Th1}}$.} 
\label{fig:boxplot-precision} 
\end{figure}

Figure~\ref{fig:svd-precision} shows that the estimates of $\mathcal I_\rho$
obtained by doing empirical averages  ${\mathcal I}_{1,n}$ are not
affected by important biases. However, Figure~\ref{fig:boxplot-precision} shows
that the dispersion parameters associated with these estimates are still
important for $n$ as large as $80$. We note that in the setting of this figure,
the matrix $H_{1,n}H_{1,n}^* \in \CC^{nRL \times nRL}$ is a $480 \times 480$
matrix when $n = 80$. On the other hand, the mutual information estimates $\Ith$
provided by Theorem~\ref{th-main} require much less numerical computations
since they involve the inversions of $RL \times RL = 6 \times 6$ matrices.

\paragraph{The large random matrix regime.}

Next, we consider the asymptotic regime where both $N$ and $K$ converge to
infinity at the same pace. For a large class of processes $(\F_n, \G_n)$, it
happens that in this regime, the Density of States of the operator $\H\H^*$
(which should now be indexed by $K,N$) converges to a probability measure
encountered in the field of large random matrix theory; see~\cite{kho-pastur93}
for ``Wigner analogues'' of our model, and~\cite{hmp-jmp15} for models closer
to those of this paper.  One important feature of this probability measure is
that it depends on the probability law of the channel process only through its
first and second order statistics.

We illustrate herein this phenomenon on an instance of the MIMO frequency and
time selective channel described at the beginning of this section.  We observe
that in this applicative setting, the regime of convergence of $N,K\to\infty$
at the same rate embeds the case where $R$ and $T$ are fixed while
$L\to\infty$, the case where $L$ is fixed while $R,T \to\infty$ at the same
pace, as well as the intermediate cases.  For the simplicity of the
presentation, we assume that the numbers of antennas $R$ and $T$ are equal
(note that $N = K = RL$ in this case), and moreover, set the AR coefficient
$\alpha=0$. If we let $N\to\infty$, we get the following result: 

\begin{proposition}[large dimensional regime] 
\label{prop:mp}
Within the specific model described above, assume the vector $a$, which depends 
on $L$, satisfies $\|a\|=1$ for every $L$, and that 
\begin{equation} 
\sup_L \max_{\ell\in\{0,\ldots, L\}} \sqrt{L} | a_\ell | < \infty, 
\end{equation} 
(which is trivially satisfied if $L$ is fixed). Then, 
\begin{equation} 
\label{LargeRMT}
\lim_{N\to\infty}{\mathcal I}_\rho= 2 \log \frac{\sqrt{4\rho + 1}+1}{2} 
  - \frac{2\rho + 1 - \sqrt{4\rho + 1}}{2\rho} \, . 
\end{equation} 
\end{proposition}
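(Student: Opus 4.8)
The plan is to show that, as $N\to\infty$, $\mathcal I_\rho$ converges to the Shannon transform of the Marchenko--Pastur law of ratio one, and then to evaluate the latter in closed form. Recall from Section~\ref{sec-pb} that $\mathcal I_\rho=\int\log(1+\rho\lambda)\,\mu(d\lambda)$, where $\mu$ is the density of states of $\H\H^*$, now depending on $L$ and $R$. Since the AR coefficient is $\alpha=0$, the pairs $(\F_n,\G_n)_{n\in\ZZ}$ are i.i.d.\ in $n$; unfolding the block-triangular structure~\eqref{FGmultipath}, $\H\H^*$ is unitarily equivalent to $\B\B^*$ for the banded random operator $\B$ of the introduction, which here reads $\B=\sum_{\ell=0}^{L}a_\ell\,D_\ell S^{\ell}$, with $S$ the bilateral shift and $D_\ell$ the block-diagonal operator carrying i.i.d.\ $R\times T$ Gaussian blocks with $\mathcal{CN}(0,T^{-1})$ entries, the $D_0,\ldots,D_L$ being independent; since $R=T$, $\B$ is asymptotically square. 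The first and main step is to prove that, under $\|a\|=1$ and the smallness hypothesis on $a$, the density of states $\mu$ converges narrowly, as $N=RL\to\infty$, to the Marchenko--Pastur law $\mu_{\mathrm{MP}}$ of ratio one, i.e.\ the probability measure with density $(2\pi x)^{-1}\sqrt{(4-x)x}$ on $[0,4]$, equivalently characterized by the fact that its Stieltjes transform $m(z)=\int(x-z)^{-1}\mu_{\mathrm{MP}}(dx)$ solves $z\,m(z)^2+z\,m(z)+1=0$.

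I see two ways to obtain this convergence. Structurally, as $N\to\infty$ the independent block-diagonal Gaussian operators $D_\ell$ become free circular elements (each of variance one) and $S$ becomes a Haar unitary free from them, so that $\B=\sum_\ell a_\ell D_\ell S^\ell$ becomes a circular element of variance $\sum_\ell a_\ell^2=1$ and $\B\B^*$ acquires the Marchenko--Pastur distribution of ratio one; the two sub-regimes $R=T\to\infty$ (where the blocks themselves become circular) and $L\to\infty$ with $R$ bounded (a band-matrix averaging effect) are reconciled precisely by the hypothesis $\sup_L\max_\ell\sqrt L\,|a_\ell|<\infty$, which keeps each tap's share of the total power $O(1/L)$. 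A more self-contained route is to appeal to the deterministic-equivalent / Stieltjes-transform analysis of ergodic band and block-Jacobi operators with independent blocks developed in \cite{hmp-jmp15} (see also \cite{kho-pastur93}), whose fixed-point system collapses, under these hypotheses, to the scalar equation $z\,m^2+z\,m+1=0$ above; the method of moments, via $\int\lambda^k\mu(d\lambda)=\lim_{p\to\infty}(pR)^{-1}\EE\,\tr\big(\B_{m,m+p}\B_{m,m+p}^*\big)^{k}$ and a diagrammatic expansion whose non-planar terms are killed by the hypothesis on $a$, is a third possibility. Whichever route one follows, since $\int\lambda\,\mu(d\lambda)=\sum_\ell a_\ell^2=1$ for every $L$ and $\int\lambda^2\,\mu(d\lambda)\to 2$, the functions $\lambda\mapsto\log(1+\rho\lambda)$ are uniformly integrable along this family of measures, hence $\lim_{N\to\infty}\mathcal I_\rho=\int\log(1+\rho\lambda)\,\mu_{\mathrm{MP}}(d\lambda)$.

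It then remains to evaluate $\int_0^4\log(1+\rho x)\,(2\pi x)^{-1}\sqrt{(4-x)x}\,dx$, the Shannon transform of $\mu_{\mathrm{MP}}$. This is a classical computation: from $\frac{d}{d\rho}\!\int\log(1+\rho x)\,\mu_{\mathrm{MP}}(dx)=\rho^{-1}-\rho^{-2}m(-1/\rho)$ together with $m(-1/\rho)=\tfrac12\big(\sqrt{1+4\rho}-1\big)$ (the solution of $z\,m^2+z\,m+1=0$ at $z=-1/\rho$ that is a Stieltjes transform), integration from $\rho=0$ yields $2\log\big((1+\sqrt{1+4\rho})/2\big)-(2\rho+1-\sqrt{1+4\rho})/(2\rho)$, which is exactly the right-hand side of~\eqref{LargeRMT}; see also \cite{loz-tul-ver-it05}.

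The delicate part is clearly the first step: showing that the density of states of this banded, non-i.i.d.\ ensemble converges to the universal Marchenko--Pastur limit. The band induces correlations among the entries of $\B$ that do not vanish for fixed $R$ and $L$, and one must check they average out as $N\to\infty$; the hypothesis on $a$ --- forbidding any single tap from carrying more than an $O(1/L)$ fraction of the total power --- is precisely the Lindeberg-type smallness that makes this work uniformly over the admissible regimes. A secondary point is that $\mathcal I_\rho$ is itself a limit $n-m\to\infty$ at fixed $N$, so the statement concerns a further limit $N\to\infty$; one checks, via a diagonal argument together with the concentration of $\big((n-m+1)N\big)^{-1}\log\det(I+\rho\,\H_{m,n}\H_{m,n}^*)$ around its mean, that the two limits commute --- equivalently, one works directly with the closed fixed-point characterization of the Stieltjes transform of $\mu$, which makes no reference to $n-m$.
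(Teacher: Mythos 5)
Your overall plan --- show the density of states converges to the Marchenko--Pastur law of ratio one, use a uniform integrability argument, then evaluate the Shannon transform in closed form --- has the right shape, and the final evaluation via $m(-1/\rho)=\tfrac12(\sqrt{1+4\rho}-1)$ is correct. The genuine gap is the main step: you never actually prove that the density of states converges to $\mu_{\mathrm{MP}}$ as $N\to\infty$. You list three possible routes (free probability, the deterministic-equivalent analysis of \cite{hmp-jmp15}, and the method of moments) but carry out none of them, and each would demand real work. The free-probability sketch offers no argument for asymptotic freeness of the shift and the independent block-diagonals jointly over the regimes $R=T\to\infty$, $L\to\infty$ with $R$ fixed, and their interpolations; for fixed $R$ this is a nontrivial band-matrix universality statement, and the claim that the $\sqrt L\,|a_\ell|$ bound ``is precisely the Lindeberg-type smallness that makes this work'' is an assertion, not a proof. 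Likewise, appealing to \cite{hmp-jmp15} requires verifying its hypotheses and extracting the reduced scalar fixed point, which you don't do.

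The paper's proof takes a different and cleaner route that never touches the infinite-operator density of states. Using the telescoping structure from the proof of Theorem~\ref{th-main}(b), it first establishes a finite-$M$ approximation (Lemma~\ref{le:RMT}):
\begin{equation}
\mathcal I_\rho = \frac{1}{(M+1)N}\,\EE \ld\bigl(I+\rho\,\mathring\H_{0,M}\mathring\H_{0,M}^*\bigr) + \mathcal O(1/M),
\end{equation}
with the $\mathcal O(1/M)$ uniform in $K,N$, where $\mathring\H_{0,M}$ is the finite $(M+1)N\times(M+1)N$ ``wrapped'' square block-Jacobi matrix. The only approximation is discarding a boundary block $\U_0$ of trace $\mathcal O(N)$. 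This reduces the problem, for each fixed $M$, to a square random matrix with independent entries and a doubly stochastic variance profile whose maximal variance is $\mathcal O(1/N)$ --- precisely where the hypothesis on $a$ enters --- and for such matrices the convergence of the empirical spectral measure to $\mu_{\mathrm{MP}}$ as $N\to\infty$ is classical (\cite{gir-rand-determ90,TulVer04,hachem-loubaton-najim07}); a standard moment control upgrades this to convergence of the normalized expected log-determinant. Letting $M\to\infty$ finishes. This avoids any new universality argument for the ergodic-operator density of states, which is exactly where your proposal is incomplete. I'd recommend either adopting the paper's reduction, or else committing to one of your three routes and actually executing it.
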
 
To prove this proposition, we shall show that $\mathcal I_\rho$ converges as
$N\to\infty$ to $\int\log(1+\rho\lambda) \, \mu_{\text{MP}}(d\lambda)$, where
$\mu_{\text{MP}}(d\lambda) = (2\pi)^{-1} \sqrt{4/\lambda - 1}
\1_{[0,4]}(\lambda) \, d\lambda$. This is the element of the family of the
celebrated Marchenko-Pastur distributions which is the limiting spectral
measure of $XX^*$ when $X$ is a square random matrix with iid elements. We
provide a proof in Section~\ref{sec:RMT} which is based on
Theorem~\ref{th-main}.  More sophisticated channel models can be considered,
including non centered models or models with correlations along the time index
$n$, and for which one can prove similar asymptotics, see \cite{hmp-jmp15}. 
Note also that in the context of the large random matrix theory, a 
similar model where $L$ is fixed and $R,T\to\infty$ at the same rate has been 
considered in~\cite{mul-it02}. 

We illustrate this result on an example, represented in
Figure~\ref{fig:marchenko}. As an instance of the statistical channel model
used in the statement of Proposition~\ref{prop:mp}, we assume a generalized
Wyner model as described in the introduction of this paper. We fix $R$ and $T$
to equal values, and we consider the regime where the network of Base Stations
becomes denser and denser, making $L$ converge to infinity. By densifying the
network, the number of users occupying a frequency slot will grow linearly with
the number of BS. The number of interferers will grow as well.  Yet, provided
the BS are connected through a high rate backbone to a central processing unit
which is able to perform a joint processing, the overall network capacity will
grow linearly with $L$. To be more specific, we assume that the channel power
gain when the mobile is at the distance $d$ to the BS is 
\begin{equation} 
\frac{1}{10 + (10 d/D)^3} \1_{[-D/2,D/2]}(d) , 
\end{equation} 
where $D > 0$ is a parameter that has the dimension of a distance. If the BS
are regularly spaced, and if there are $L$ Base Stations per $D$ units of
distance, then one channel model approaching this power decay behavior is the
setting where the  $a_\ell$'s are given by 
\begin{equation} 
a_\ell^2 \propto \frac{1}{10 + \left|10 (\ell - L/2)/L\right|^3} ,
\quad \ell \in \{0,\ldots,L\} .
\end{equation} 
The quantity $R\times \lim_{L\to\infty}\mathcal I_\rho$, where the limit is 
given by Proposition~\ref{prop:mp}, thus represents the ergodic mutual 
information per user. 
Figure~\ref{fig:marchenko} shows that the predictions of 
Proposition~\ref{prop:mp} fit with the values provided by Theorem~\ref{th-main}
for $L$ as small as one. 
\begin{figure}[ht]
\centering
{\includegraphics*[width=\columnwidth]{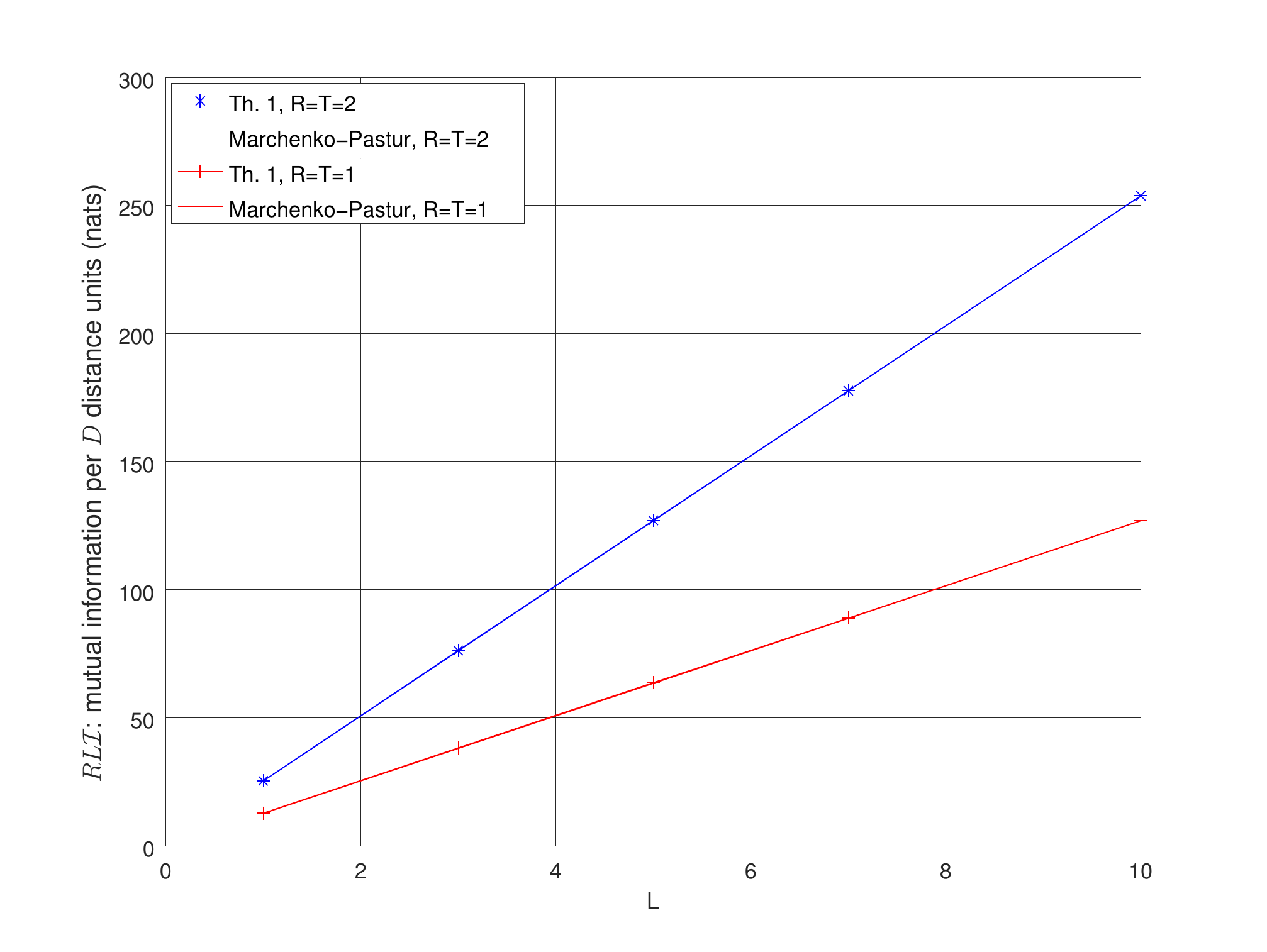}}
\caption{Aggregated mutual information \emph{vs} density of the BS. 
Setting: $\rho = 6 \text{dB}$.} 
\label{fig:marchenko} 
\end{figure}

\paragraph{Illustration of Theorem~\ref{th-Markov}.} Finally, we illustrate the asymptotic behavior of $\mathcal I_\rho$ in the high SNR regime as
predicted by Theorem~\ref{th-Markov}. In this experiment, we consider a more general model than the one described above where we replace the
centered channel coefficient matrix $c_{n,\ell}$ of the model 
by 
\begin{equation} 
\sqrt{\frac{K_{\text{R}}}{K_{\text{R}} + 1}} d_{n,\ell} + 
 \sqrt{\frac{1}{K_{\text{R}} + 1}} c_{n,\ell} , 
\end{equation} 
where $d_{n,\ell} := [ d_{n,\ell}(r,t) ]_{r,t=0}^{R-1, T-1}$ is a determistic
matrix with entries 
\begin{equation} 
d_{n,\ell}(r,t) = a_\ell \exp(2\imath\pi (r-t) \sin(\pi\ell/L) ),
\end{equation}  and
where the nonnegative number $K_{\text{R}}$ plays the role of the 
so-called Rice factor. We take again  
$a_\ell \propto \exp(-0.4\ell)$ and $\alpha = \exp(- f_{\text{d}})$ as in the first paragraph of the section. The high SNR behavior of $\mathcal I_\rho$ is 
illustrated by Figure~\ref{fig:high_snr}.

\begin{figure}[ht]
\centering
{\includegraphics*[scale=0.7]{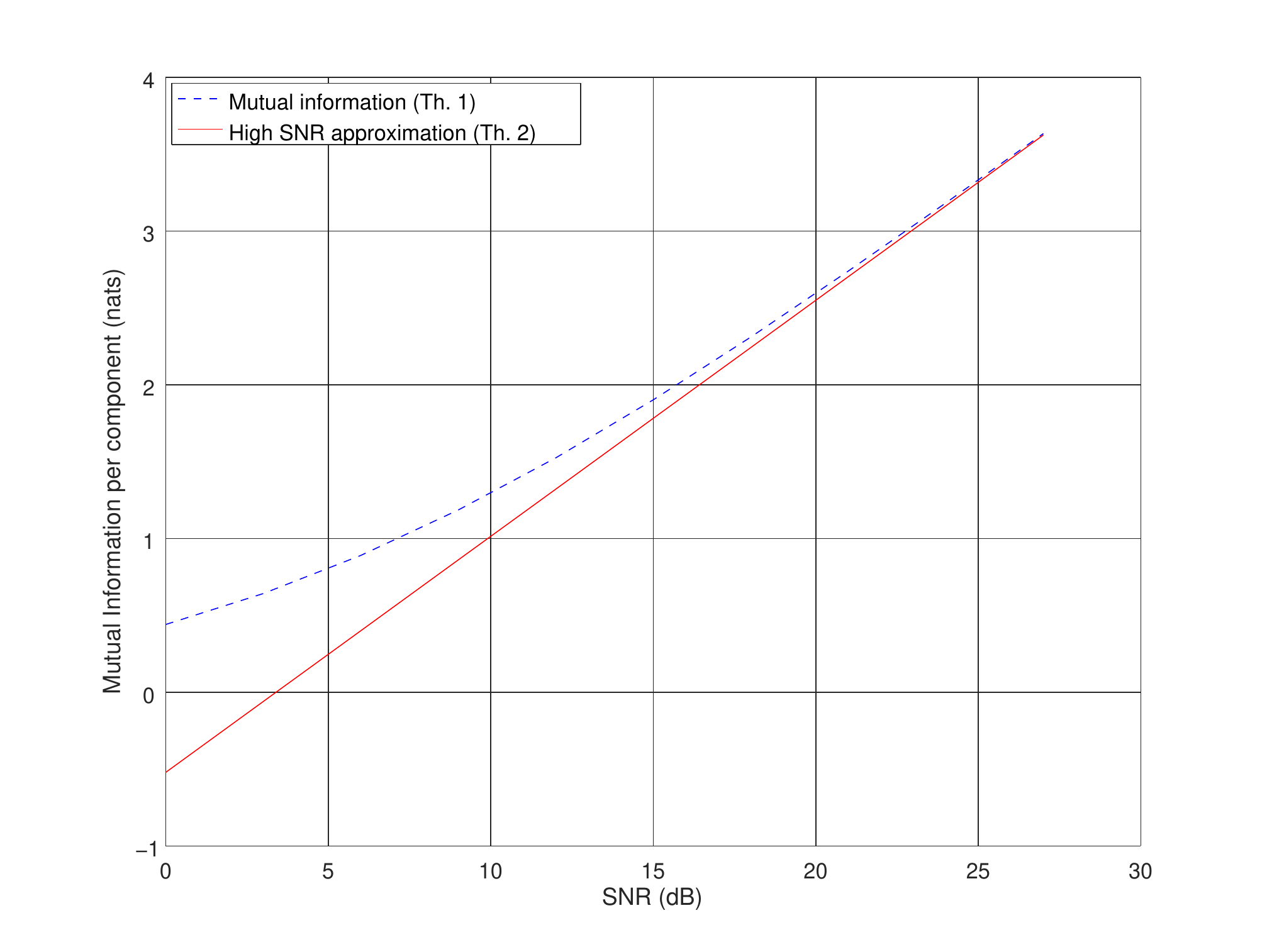}}
\caption{High SNR behavior of $\mathcal I_\rho$. Setting: $R=3$, $T=2$, $L=3$, 
$f_{\text{d}} = 0.05$, $K_{\text{R}} = 10$.}
\label{fig:high_snr} 
\end{figure}
Keeping the same channel model, the behavior of $\kappa_\infty$ in terms of the
Doppler frequency $f_{\text{d}}$ and the Rice factor is illustrated by Figure~\ref{fig:kappa}. 
This figure shows that the impact of $f_{\text{d}}$ is marginal. Regarding
$K_{\text{R}}$, the channel randomness has a beneficial effect on the
mutual information for our model, assuming of course that the 
channel is perfectly known at the receiver. 

\begin{figure}[H]
\centering
{\includegraphics*[scale=0.7]{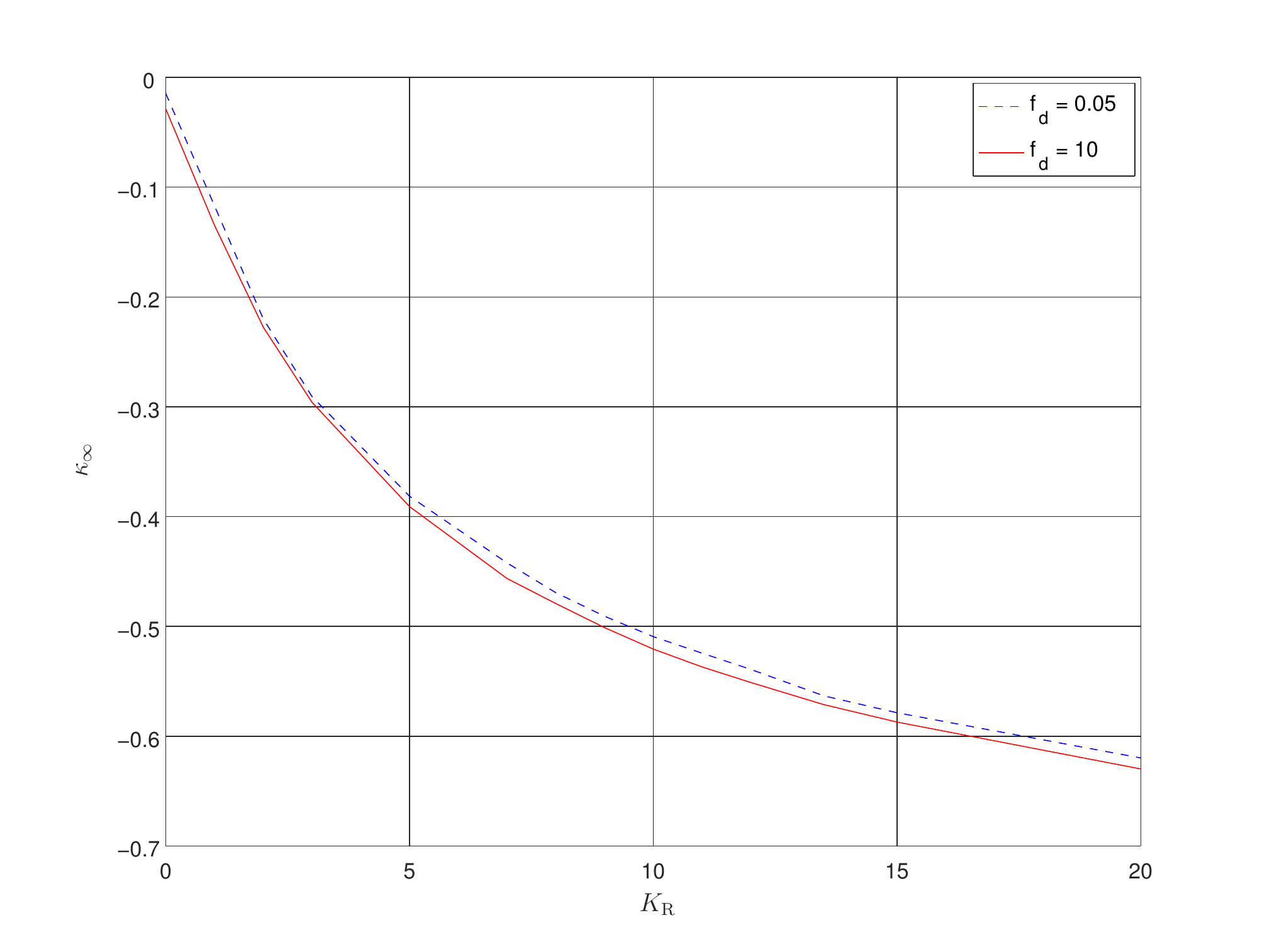}}
\caption{Behavior of $\kappa_\infty$ w.r.t.~$f_{\text{d}}$ and $K_{\text{R}}$. 
Setting: $R=3$, $T=2$, $L=3$.}
\label{fig:kappa} 
\end{figure}

\section{Proofs of Theorem~\ref{th-main}} 
\label{sec:prfmain} 

In this section, we let Assumption~\ref{Ass1} hold true.

\subsection{Preparation}

The idea behind the proof of Theorem~\ref{th-main} is to show that $\mathcal{I}_\rho$ can be given an expression that involves the
resolvents of infinite block-Jacobi matrices and to manipulate these resolvents to obtain the recursion formula for $\W_n$.  We denote for any $m,n\in\ZZ\cup\{\pm\infty\}$ by $\H_{m,n}$ the operator on $\ell^2:=\ell^2(\ZZ)$ defined as the truncation of $\H$, defined in \eqref{def-H}, having the bi-infinite matrix representation
\begin{equation}
\H_{m,n} = \begin{bmatrix} 
\F_{m} & \G_{m} \\ 
      & \F_{m+1} & \G_{m+1}                  \\
      &         & \ddots & \ddots \\ 
      &         &        & \F_{n} & \G_{n} 
\end{bmatrix}  
\end{equation}
where the remaining entries are set to zero. Recalling the definition of the random matrix $\H_{m,n}$ already provided in \eqref{Hmn} for finite $m,n\in\ZZ$, we thus identify this matrix with the associated finite rank operator acting on $\ell^2$ for which we use the same notation.

Let us now introduce a convenient notation: If one considers an operator on $\ell^2$ with block-matrix form
$
\mathsf A = \big[
\mathsf A_{ij}\big]_{i,j\in\ZZ},
$
where the $\mathsf A_{ij}$'s are $Q\times Q$ matrices, then 
$[\mathsf A ]_{\Box_Q}$ stands for the $Q\times Q$ block $\mathsf A_{ii}$ with
largest index $i\in\ZZ$ such that $\mathsf A_{ii}\neq 0$. For the operators of
interest in this work, $[\mathsf A ]_{\Box_Q}$ will always be the bottom
rightmost non-vanishing $Q\times Q$ block. Of importance in the proof will
be the operators of the type $\H_{-\infty,n}$. This operator is closed and
densely defined, thus, defining as $\H_{-\infty,n}^*$ is adjoint, the operator
$\H_{-\infty,n}^* \H_{-\infty,n}$ is a positive self-adjoint 
operator~\cite[Sec.~4]{hmp-jmp15},\cite[Sec.~46]{akh-glaz-93}. Thus, the 
resolvent $(I+\rho \,\H_{-\infty,n}^* \H_{-\infty,n} )^{-1}$ is defined for
each $\rho > 0$, and we can set 
\begin{equation} 
\label{Wn}
\W_n:= [(I+\rho \,\H_{-\infty,n}^* \H_{-\infty,n} )^{-1}]_{\Box_K} . 
\end{equation} 

We shall prove that the sequence $(\W_n)$ indeed satisfies the statements of Theorem~\ref{th-main}. To do so, we will use in a key fashion the following Schur complement identities:
\begin{align}
\label{Schur1}
\det\begin{bmatrix} A & B\\
C& D
\end{bmatrix} & =\det D\times \det(A-BD^{-1}C)\,,\\
\label{Schur2}
\begin{bmatrix} A & B\\
C& D
\end{bmatrix}^{-1}& =\begin{bmatrix} (A-BD^{-1}C)^{-1} & \times \\
\times & (D-CA^{-1}B)^{-1}
\end{bmatrix}\, ,
\end{align}
where the $\times$'s can be made explicit in terms of $A,B,C,D$ but are not of 
interest for our purpose. 

\subsection{Proof of Theorem~\ref{th-main}(a)}
We first show that $\W_n$ defined in \eqref{Wn} indeed satisfies the recursive equations \eqref{recurs}, that is we prove the existence part of Theorem~\ref{th-main}(a).

\subsubsection{Existence}
\begin{proof}[Proof of Theorem~\ref{th-main}(a); existence]Introduce the truncation of $\H_{-\infty,n}$ defined by deleting the rightmost non-zero column,
\begin{equation} 
\widetilde \H_{-\infty,n}:=\begin{bmatrix}  
\ddots  &\ddots\\
         & \F_{n-1} & \G_{n-1} \\ 
         &           &  \F_n
\end{bmatrix} \, , 
\end{equation} 
so that
\begin{equation} 
\H_{-\infty,n}= \left[\begin{array}{c|c} 
 \widetilde \H_{-\infty,n} &   \begin{array}{c} 0 \\ \G_n  \end{array}
\end{array}\right] =:  \left[\begin{array}{c|c} 
 \widetilde \H_{-\infty,n} &  Q
\end{array}\right] .
\end{equation} 
Recalling  $\W_n$'s definition \eqref{Wn}, the Schur's complement formula \eqref{Schur2} then provides

\begin{align}
\W_n &= 
\left(\begin{bmatrix} 
I+\rho\, \widetilde \H_{-\infty,n}^* \widetilde \H_{-\infty,n}  & \rho \, \widetilde \H_{-\infty,n}^*Q \nonumber \\
\rho \,Q^*\widetilde \H_{-\infty,n} &  I+\rho \,\G_n^* \G_n 
\end{bmatrix}^{-1}\right)_{\Box_K} \nonumber \\
 &\stackrel{(a)}{=} \left( I + 
 \rho \,\G_n^* \G_n - \rho^2 \, Q^* \widetilde \H_{-\infty,n}
  (I+\rho \,\widetilde \H_{-\infty,n}^* \widetilde \H_{-\infty,n} )^{-1} 
  \widetilde \H_{-\infty,n}^* Q \right)^{-1} \nonumber \\ 
   &\stackrel{(b)}{=} \left( I + 
 \rho \,\G_n^* \widetilde \W_n\G_n \right)^{-1} 
  \label{Weq} 
\end{align} 
where we introduced 
\begin{equation} 
\widetilde \W_n := 
[(I+\rho\, \widetilde \H_{-\infty,n}^* \widetilde \H_{-\infty,n})^{-1}]_{\Box_N}.
\end{equation} 
Here the identity $\stackrel{(a)}{=}$ can be easily checked similarly to its
finite dimensional counterpart, and $\stackrel{(b)}{=}$ is shown in,
\emph{e.g.}, \cite[Lemma 7.2]{hmp-jmp15}. 

By similarly expressing $\widetilde \H_{-\infty,n}$ in terms of $\H_{-\infty,n-1}$ and $\F_n$, the same computation further yields  
\begin{equation} 
\widetilde \W_n= \left( I +  \rho \,\F_n  \W_{n-1}\F_n^* \right)^{-1}
\end{equation} 
and thus we obtain with \eqref{Weq} the identity
\begin{equation} 
\W_n=\left( I + 
 \rho \,\G_n^* \left( I +  \rho \,\F_n  \W_{n-1}\F_n^* \right)^{-1}\G_n \right)^{-1}.
\end{equation} 
\end{proof}

\subsubsection{Uniqueness}
\label{sec:uniqueness}
Next, we establish the uniqueness of the process $(\W_n)_{n\in\ZZ}$ satisfying the recursive relations~\eqref{recurs} within the class of stationary processes, to complete the proof of Theorem~\ref{th-main}(a). 

The proof relies on a contraction argument with the distance on $\cH_m^{++}$
for $m$ being a positive integer:
\begin{equation} 
\dr : \cH_m^{++} \times \cH_m^{++} \to [0,\infty), \quad
(X,Y) \mapsto \left[ \mathrm{Tr} \log^2(X Y^{-1}) \right]^{1/2} \, ,
\end{equation} 
which is the geodesic distance associated with the Riemannian metric 
$g_X(A,B):=\mathrm{Tr}(X^{-1}AX^{-1}B)$ on the convex cone $\cH_m^{++}$; we 
refer e.g. to \cite[\S1.2]{bou-93} or \cite[\S3]{maa-(livre)71} for further 
information. Convergence in $\dr$ is equivalent to convergence in the 
Euclidean norm. It has the following invariance properties:  for any 
$X, Y \in \cH_m^{++}$ and any  $m\times m$ complex  invertible matrix $A$, 
\begin{equation} 
\label{axa} 
\dr(X,Y) = \dr(AXA^*, AYA^*),\qquad  \dr(X,Y)= \dr(X^{-1}, Y^{-1}) \, . 
\end{equation} 
Moreover, for any $S\in \cH_m^{+}$, we have according to 
\cite[Prop.~1.6]{bou-93},
\begin{equation} 
\label{x+s} 
\dr(X+S, Y+S) \leq 
\frac{\max(\|X\|,\|Y\|)}{\max(\|X\|,\|Y\|)+\lmin(S)} \,\dr(X,Y) \, , 
\end{equation} 
where $\lmin(S)$ is the smallest eigenvalue of $S$. 
We also have the following result, which will be the key to prove the 
uniqueness of the process: 

\begin{lemma}
\label{conj-trans}
Given two positive integers $k$ and $n$ such that $n\geq k$, let 
$X,Y \in \cH_k^{++}$, $S\in \cH_n^{++}$, and $A \in \CC^{n\times k}$. Then, 
\begin{equation} 
\dr(AXA^*+S, AYA^*+S) \leq 
\frac{\max( \| AXA^* \|, \| AYA^* \|)}
 {\max( \| AXA^* \|, \| AYA^* \|)+\lmin(S)} \,\dr(X,Y) \, .
\end{equation}  
\end{lemma}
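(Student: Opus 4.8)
The plan is to combine the two inequalities already recorded, namely \eqref{axa} and \eqref{x+s}, after reducing to the case of an invertible (indeed square) matrix $A$ by a limiting or padding argument. First I would observe that if $A$ were square and invertible, the result would follow immediately: by the translation-contraction bound \eqref{x+s} applied on $\cH_n^{++}$ with $S$ in place of $S$ and the pair $AXA^*, AYA^* \in \cH_n^{++}$, we get
\begin{equation*}
\dr(AXA^*+S, AYA^*+S) \leq \frac{\max(\|AXA^*\|,\|AYA^*\|)}{\max(\|AXA^*\|,\|AYA^*\|)+\lmin(S)}\,\dr(AXA^*, AYA^*),
\end{equation*}
and then \eqref{axa} gives $\dr(AXA^*,AYA^*)=\dr(X,Y)$ since $A$ is invertible and $X,Y\in\cH_k^{++}$. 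This already proves the claim when $n=k$ and $A$ is invertible.

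The main work is therefore the rectangular case $n>k$ (or $n=k$ with $A$ singular), where $AXA^*$ and $AYA^*$ are only in $\cH_n^{+}$, so \eqref{x+s} does not apply directly and \eqref{axa} cannot be invoked with a non-invertible $A$. The cleanest route is a perturbation/continuity argument: fix $\varepsilon>0$ and set $A_\varepsilon := [\,A \mid \varepsilon B\,] \in \CC^{n\times n}$, where $B\in\CC^{n\times(n-k)}$ is chosen so that $A_\varepsilon$ is invertible (possible since $A$ has rank at most $k<n$; pad with columns spanning a complement of $\colspan(A)$), and put $X_\varepsilon := \diag(X, \varepsilon^{-2} I_{n-k})$, $Y_\varepsilon := \diag(Y, \varepsilon^{-2} I_{n-k}) \in \cH_n^{++}$. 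Then $A_\varepsilon X_\varepsilon A_\varepsilon^* = AXA^* + \varepsilon^2 \cdot \varepsilon^{-2} BB^* \cdot(\text{no})$ — more precisely $A_\varepsilon X_\varepsilon A_\varepsilon^* = AXA^* + BB^*$, which is independent of $\varepsilon$; to make it converge to $AXA^*$ one instead keeps $X_\varepsilon := \diag(X, \delta I_{n-k})$ with $\delta\to 0$ and $A_\varepsilon$ fixed invertible, so that $A_\varepsilon X_\varepsilon A_\varepsilon^* \to AXA^*$ and likewise for $Y$. Applying the invertible case to $A_\varepsilon, X_\varepsilon, Y_\varepsilon, S$ yields
\begin{equation*}
\dr(A_\varepsilon X_\varepsilon A_\varepsilon^* + S, A_\varepsilon Y_\varepsilon A_\varepsilon^* + S) \leq \frac{\max(\|A_\varepsilon X_\varepsilon A_\varepsilon^*\|,\|A_\varepsilon Y_\varepsilon A_\varepsilon^*\|)}{\max(\|A_\varepsilon X_\varepsilon A_\varepsilon^*\|,\|A_\varepsilon Y_\varepsilon A_\varepsilon^*\|)+\lmin(S)}\,\dr(X_\varepsilon, Y_\varepsilon),
\end{equation*}
and I would then let $\delta\to 0$. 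On the left, $A_\varepsilon X_\varepsilon A_\varepsilon^* + S \to AXA^* + S \in \cH_n^{++}$ (since $S\in\cH_n^{++}$), and similarly for $Y$, so by continuity of $\dr$ on $\cH_n^{++}$ the left side converges to $\dr(AXA^*+S, AYA^*+S)$. On the right, the spectral norms converge to $\|AXA^*\|$ and $\|AYA^*\|$, so the prefactor converges to the desired ratio. The remaining point is that $\dr(X_\varepsilon, Y_\varepsilon) = \dr(\diag(X,\delta I), \diag(Y,\delta I)) = \dr(X,Y)$: this is immediate from the definition $\dr(U,V) = [\tr \log^2(UV^{-1})]^{1/2}$, because $X_\varepsilon Y_\varepsilon^{-1} = \diag(XY^{-1}, I_{n-k})$, whose logarithm is $\diag(\log(XY^{-1}), 0)$, contributing nothing extra to the trace. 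Passing to the limit gives exactly the stated inequality.

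The step I expect to be the main obstacle is the justification that one may reduce to an invertible transformation while keeping $\dr$ finite throughout: one must be careful that all quantities $A_\varepsilon X_\varepsilon A_\varepsilon^* + S$ etc.\ stay in the open cone $\cH_n^{++}$ (guaranteed here by $S\in\cH_n^{++}$, which is why the hypothesis is stated with $S\in\cH_n^{++}$ rather than $\cH_n^{+}$), and that $\dr$ is continuous on $\cH_n^{++}$ for the Euclidean topology, which is noted in the text right before the lemma. No delicate estimate is needed beyond these continuity facts together with the two already-cited properties \eqref{axa} and \eqref{x+s}.
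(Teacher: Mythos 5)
Your proof is correct and uses essentially the same strategy as the paper: pad $X,Y$ to size $n$, note that block-diagonal padding leaves $\dr$ unchanged, apply \eqref{axa} and \eqref{x+s} in the invertible/positive-definite setting, and pass to the limit using continuity of $\dr$ on $\cH_n^{++}$ (which is where $S\in\cH_n^{++}$ matters). The only difference is the approximation scheme—you hold an invertible completion $[A\mid B]$ fixed and shrink the padding $\delta I_{n-k}\to 0$, whereas the paper holds the padding $I_{n-k}$ fixed and approximates $[A\mid 0]$ by invertible matrices $B_\ell$—but both limiting arguments are equivalent in spirit and both work.
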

\begin{proof} 
Define in $\cH_n^{++}$ the two matrices 
\begin{equation}
X' = \begin{bmatrix} X \\ & I_{n-k} \end{bmatrix} 
\quad \text{and} \quad 
Y' = \begin{bmatrix} Y \\ & I_{n-k} \end{bmatrix} . 
\end{equation} 
Let $(B_\ell)$ be a sequence of matrices in $\CC^{n\times n}$ such that 
$B_\ell$ is invertible for each $\ell\in \NN$, and such that 
$B_\ell \to \begin{bmatrix} A & 0 \end{bmatrix}$ as $\ell\to\infty$ (such a 
sequence is guaranteed to exist by the density of the set of invertible 
matrices in $\CC^{n\times n}$). Using the first identity in~\eqref{axa} and 
Inequality~\eqref{x+s}, and observing that $\dr(X,Y) = \dr(X', Y')$, we get 
that 
\begin{equation} 
\dr(B_\ell X' B_\ell^*+S, B_\ell Y' B_\ell^*+S) \leq 
 \frac{\max(\| B_\ell X' B_\ell^*\|, \| B_\ell Y' B_\ell^*\|)}
 {\max(\| B_\ell X' B_\ell^*\|, \| B_\ell Y' B_\ell^*\|) + \lmin(S)} 
 \dr(X, Y) . 
\end{equation} 
Making $\ell\to\infty$, and recalling that the geodesic and the Euclidean 
topologies are equivalent, we obtain the result. 
\end{proof}

\begin{proof}[Proof of Theorem~\ref{th-main}(a); uniqueness] 

To prove the uniqueness, we assume that $N \geq K$ for simplicity, since the
case $N < K$ can be treated in a similar manner.  If one introduces, for any
$F,G\in \CC^{N\times K}$, the mapping $\psi_{F,G}:\cH_K^{++}\to\cH_K^{++}$
defined by
\begin{equation} 
\psi_{F,G}(W):=
\left( I +  \rho \,G^* \left( I +  \rho \,F  W F^* \right)^{-1}G \right)^{-1},
\end{equation} 
then \eqref{recurs} reads  $\W_{n}=\psi_{\F_n,\G_n}(\W_{n-1})$. This mapping 
can be written as 
\begin{equation}
\label{psicomp} 
\psi_{F,G}(W) = \iota \circ \tau_{\sqrt{\rho} G^*; I} 
   \circ \iota \circ \tau_{\sqrt{\rho} F; I}(W) , 
\end{equation} 
where we set 
\begin{equation} 
\label{tauiota} 
\tau_{A;S}(X) :=  AXA^* + S  \quad \text{and} \quad 
\iota(X)  :=  X^{-1} \,  
\end{equation} 
with a small notational abuse related to the fact that, \emph{e.g.}, the two
functions $\iota$ used in~\eqref{psicomp} are not the same in general. Using
Lemma~\ref{conj-trans} together with the invariance of $\dr$ with respect to
the inversion, we obtain  for any $W, W' \in \cH_K^{++}$, 
\begin{align} 
\label{contraction}
& \dr\big(\psi_{F,G}(W), \psi_{F,G}(W') \big) \nonumber\\
&\leq \frac{\rho\| G \|^2}{\rho\| G \|^2 + 1} 
\frac{\max(\| \rho FWF^* \| , \| \rho FW'F^* \|)}
{\max(\| \rho F W F^* \| , \| \rho F W' F^* \|) + 1} \,\dr(W, W') \nonumber \\ 
&\leq \frac{\rho\| G \|^2}{\rho\| G \|^2 + 1} \,\dr(W, W') ,
\end{align} 
where for the first inequality we used that 
$\|G^* (I+\rho F W F^*)^{-1} G\|\leq \| G\|^2$ for any $W\in\cH_K^{+}$, and 
that the function $x\mapsto x / (x+1)$ is increasing.

Now, let $(\W_n')_{n\in\ZZ}$ be any stationary process on $\cH_K^{++}$ satisfying  $\W_{n}'=\psi_{\F_n,\G_n}(\W_{n-1}')$ a.s. for every $n\in\ZZ$. If we let $n \geq 0$, then we have from \eqref{contraction} a.s. that
\begin{equation} 
\dr(\W_{n}, \W'_{n}) \leq 
\frac{\rho \| \G_n \|^2}{\rho \| \G_n \|^2+ 1} \dr(\W_{n-1}, \W'_{n-1}) 
\end{equation} 
and, iterating, we obtain  
\begin{equation} \dr(\W_{n}, \W'_{n}) \leq \Bigl( \prod_{i=1}^n \xi_i \Bigr) \,\dr(\W_{0}, \W'_{0}),\qquad
\xi_i := 
\frac{\rho \| \G_i \|^2}{\rho \| \G_i \|^2 + 1} . 
\end{equation} 
By the ergodicity of $(\G_n)_{n\in\ZZ}$, we have
\begin{equation} 
\frac 1n \sum_{i=1}^n \log \xi_i \xrightarrow[n\to\infty]{\text{a.s.}} 
\EE\log \xi_0<0 
\end{equation} 
and thus we have proven that $\dr(\W_{n}, \W'_{n}) \to 0$ a.s.~as $n\to\infty$.
Finally, since 
\begin{equation} 
(\W_{n+m_1}, \ldots, \W_{n+m_M})
 \stackrel{\text{law}}{=}(\W_{m_1}, \ldots, \W_{m_M})
\end{equation}   
for any $M-$tuple of integers $(m_1,\ldots, m_M)$ and similarly for $\W_n'$, by
letting $n\to\infty$ this yields that the finite-dimensional distributions of
the two stationary processes  $(\W_{n})_{n\in\ZZ}$ and  $(\W'_{n})_{n\in\ZZ}$
are the same, and consequently  these two processes have the same distribution.
\end{proof}

\subsection{Proof of Theorem~\ref{th-main}(b)}
We start with the following lemma. 
\begin{lemma}
\label{le:cvg-resolv} 
For any fixed $n\in\ZZ$ and $\rho>0$, we have
\begin{equation} 
\label{cvg-resolv} 
[ ( I+\rho \,\H_{m,n}^* \H_{m,n})^{-1} ]_{\Box_K} \xrightarrow[m\to-\infty]{} 
\W_n.
\end{equation} 
\end{lemma}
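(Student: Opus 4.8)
The plan is to recognise the block $[(I+\rho\,\H_{m,n}^*\H_{m,n})^{-1}]_{\Box_K}$ as the orbit, started at index $m$, of the very same iteration $(\psi_{\F_\cdot,\G_\cdot})$ that governs the stationary process $(\W_n)$, and then to push the starting index to $-\infty$ using the strict $\dr$-contractivity already exploited in \S\ref{sec:uniqueness}. Concretely, set $\W^{(m)}_n := [(I+\rho\,\H_{m,n}^*\H_{m,n})^{-1}]_{\Box_K}$. The Schur-complement computations from the existence part of the proof of Theorem~\ref{th-main}(a) apply verbatim to the finite matrix $\H_{m,n}$ (they are performed there for $\H_{-\infty,n}$, the finite case being the elementary counterpart alluded to in that proof): deleting its rightmost column block and then its bottom row block turns $\H_{m,n}$ into $\H_{m,n-1}$, and one obtains $\W^{(m)}_n=\psi_{\F_n,\G_n}(\W^{(m)}_{n-1})$ for $n>m$, with the convention $\W^{(m)}_{m-1}:=I_K$ (equivalently $\W^{(m)}_m=(I+\rho\,\G_m^*(I+\rho\,\F_m\F_m^*)^{-1}\G_m)^{-1}=\psi_{\F_m,\G_m}(I_K)$). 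Iterating and comparing with \eqref{recurs}, for every $n\geq m$ the matrices $\W^{(m)}_n=\psi_{\F_n,\G_n}\!\circ\cdots\circ\psi_{\F_m,\G_m}(I_K)$ and $\W_n=\psi_{\F_n,\G_n}\!\circ\cdots\circ\psi_{\F_m,\G_m}(\W_{m-1})$ are images of the two points $I_K,\W_{m-1}\in\cH_K^{++}$ under one and the same composition.

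\emph{Contraction estimate.} Applying the contraction inequality \eqref{contraction} once per factor gives, almost surely,
\begin{equation}
\dr\big(\W^{(m)}_n,\W_n\big)\ \leq\ \Big(\prod_{i=m}^{n}\xi_i\Big)\,\dr\big(I_K,\W_{m-1}\big),
\qquad \xi_i:=\frac{\rho\|\G_i\|^2}{1+\rho\|\G_i\|^2}\in[0,1).
\end{equation}
Since $\W_{m-1}^{-1}=I+\rho\,\G_{m-1}^*(I+\rho\,\F_{m-1}\W_{m-2}\F_{m-1}^*)^{-1}\G_{m-1}$ satisfies $I\preceq\W_{m-1}^{-1}\preceq(1+\rho\|\G_{m-1}\|^2)I$, one has $\dr(I_K,\W_{m-1})=[\tr\log^2(\W_{m-1}^{-1})]^{1/2}\leq\sqrt K\,\log(1+\rho\|\G_{m-1}\|^2)$, so that
\begin{equation}
\dr\big(\W^{(m)}_n,\W_n\big)\ \leq\ \sqrt K\,\log\!\big(1+\rho\|\G_{m-1}\|^2\big)\prod_{i=m}^{n}\xi_i .
\end{equation}

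\emph{Letting $m\to-\infty$.} By Birkhoff's theorem for the stationary ergodic (and nonpositive) sequence $(\log\xi_i)$, $\frac{1}{n-m+1}\sum_{i=m}^{n}\log\xi_i\to\EE\log\xi_0<0$ a.s.\ (possibly $-\infty$), hence $\prod_{i=m}^n\xi_i$ decays at least exponentially in $n-m$; meanwhile $\EE\log(1+\rho\|\G_0\|^2)<\infty$ under Assumption~\ref{Ass1} (indeed under the weaker condition of Remark~\ref{weak-mom}), so a Borel--Cantelli argument gives $\log(1+\rho\|\G_m\|^2)=o(|m|)$ a.s.\ as $m\to-\infty$. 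The product of the two bounds tends to $0$ a.s., so $\dr(\W^{(m)}_n,\W_n)\to 0$, that is $\W^{(m)}_n\to\W_n$ a.s.\ ($\dr$-convergence being equivalent to Euclidean convergence). One also records $\|\W^{(m)}_n\|\leq 1$ uniformly in $m$, being a diagonal block of a resolvent of norm $\leq 1$, which is handy when \eqref{cvg-resolv} is later combined with a dominated convergence argument.

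\emph{Where the difficulty lies.} The only subtle point is the moving boundary at index $m$: the finite iteration is not launched from a fixed point but from $I_K$, at a \emph{random} $\dr$-distance $\mathcal O(\log(1+\rho\|\G_{m-1}\|^2))$ from $\W_{m-1}$, while the number of contraction factors grows as $m\to-\infty$; the a priori two-sided bound on $\W_{m-1}^{-1}$ is precisely what keeps this initial gap controlled by a log-integrable quantity, so that the exponentially small ergodic product prevails. A fully deterministic alternative bypasses the ergodic theorem altogether: $m\mapsto\W^{(m)}_n$ is monotone in the Löwner order, because $x\mapsto\|\H_{m,n}x\|^2$ is monotone in $m$ as a quadratic form on $\ell^2$, hence so are the resolvents of $\H_{m,n}^*\H_{m,n}$ and their $(n,n)$-blocks; thus $\W^{(m)}_n$ converges as $m\to-\infty$, and the monotone convergence theorem for quadratic forms, applied to $\|\H_{m,n}x\|^2\uparrow\|\H_{-\infty,n}x\|^2$, identifies the limit through the strong resolvent convergence $\H_{m,n}^*\H_{m,n}\to\H_{-\infty,n}^*\H_{-\infty,n}$.
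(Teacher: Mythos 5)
Your proof is correct and takes a genuinely different route from the paper's. The paper treats $\H_{m,n}^*\H_{m,n}$ and $\H_{-\infty,n}^*\H_{-\infty,n}$ as operators on $\ell^2(\ZZ)$, observes that the finitely supported sequences form a common core, deduces strong resolvent convergence, and reads off the convergence of the $\Box_K$ block; it is a short, purely operator-theoretic argument that in fact gives convergence for every fixed $\omega$. You instead exploit the algebraic structure: the Schur-complement manipulation from the existence part of Theorem~\ref{th-main}(a) applies verbatim to the finite truncation and identifies $\W_n^{(m)}$ as the forward orbit of $\psi_{\F_\cdot,\G_\cdot}$ started from $I_K$ at time $m-1$ (I checked the base case $\W^{(m)}_m=\psi_{\F_m,\G_m}(I_K)$ and the induction step; both are right), so that $\W^{(m)}_n$ and $\W_n$ are the images of $I_K$ and $\W_{m-1}$ under the same $(n-m+1)$-fold composition, and the $\dr$-contraction estimate~\eqref{contraction} bounds their distance by $\prod_{i=m}^n\xi_i\cdot\dr(I_K,\W_{m-1})$. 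The two-sided estimate $I\preceq\W_{m-1}^{-1}\preceq(1+\rho\|\G_{m-1}\|^2)I$ controlling the initial gap, the ergodic average of $\log\xi_i$, and the Borel--Cantelli step $\log(1+\rho\|\G_m\|^2)=o(|m|)$ are all correct, and together they yield $\dr(\W^{(m)}_n,\W_n)\to0$ a.s. The trade-offs: your argument yields only almost-sure (not sure) convergence, but that is all the paper needs for the dominated-convergence step that follows, and in exchange it stays entirely within the finite-dimensional contraction framework already built in Section~\ref{sec:uniqueness} (no appeal to cores and strong resolvent convergence) and produces an explicit, essentially exponential rate. Your ``deterministic alternative'' at the end, based on Löwner monotonicity in $m$ and the monotone-form convergence theorem, is closer in spirit to the paper's own argument since both ultimately invoke strong resolvent convergence of $\H_{m,n}^*\H_{m,n}$ to $\H_{-\infty,n}^*\H_{-\infty,n}$, differing only in the criterion used to establish it.
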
 
\begin{proof} 
Denote by $\cK\subset\ell^2$  the subspace of sequences with finite support. Clearly, for any fixed $n\in\ZZ$ and fixed event $\omega\in\Omega$,  we have for all $x\in\cK$,
\begin{equation}  
\H_{m,n}^* \H_{m,n} x \xrightarrow[m\to-\infty]{} 
 \H_{-\infty,n}^* \H_{-\infty,n} x \, , 
\end{equation}  
where $\to$ denotes the strong convergence in $\ell^2$. Now $\cK$ is a common 
core for 
the set of operators $\{  \H_{m,n}^* \H_{m,n} : m \in\{n,n-1,n-2,\ldots\} \}$ 
and $\H_{-\infty,n}^* \H_{-\infty,n}$,  see e.g. \cite[\S III.5.3]{kat-livre80} or \cite[Chap.~VIII]{ree-sim-1}
for this notion. As a consequence, the convergence also holds in the 
strong resolvent sense, see \cite[\S VIII]{ree-sim-1}, and thus for every $x \in \ell^2$ and $\rho>0$,
\begin{equation} 
( I+ \rho \,\H_{m,n}^* \H_{m,n})^{-1} x \xrightarrow[n\to\infty]{} 
( I+\rho \,\H_{-\infty,n}^* \H_{-\infty,n})^{-1} x \quad \text{ a.s.}
\end{equation}  
from which~\eqref{cvg-resolv}  follows by definition \eqref{Wn} of $\W_n$. 
\end{proof}

\begin{proof}[Proof of Theorem~\ref{th-main}(b)]
We start by writing 
\begin{equation} 
\H_{m,n} = \left[\begin{array}{c} 
\begin{array}{c|c} \H_{m,n-1} & 0
 \\ \hline 
  \begin{matrix}  0 & \cdots & 0  & \F_n\end{matrix}   &\G_n
 \end{array} 
\end{array}\right] = 
 \left[\begin{array}{c} 
\begin{array}{c|c} \H_{m,n-1} & 0
 \\ \hline 
 P & \G_n \end{array} 
\end{array}\right]  
\end{equation} 
with $P := [ \ 0 \ \cdots \ 0 \ \F_n ]$, 
and use Schur's complement formula~\eqref{Schur1} to obtain,
\begin{align} 
& \log\det (I+\rho \,\H_{m,n} \H_{m,n}^*) \nonumber \\
&= 
\log\det\begin{bmatrix} 
I+\rho \,\H_{m,n-1} \H_{m,n-1}^*  & \rho \, \H_{n,m-1}^*P^* \nonumber \\
\rho \, P\H_{m,n-1}^* &  I+\rho \,\F_n \F_n^* + \rho \,\G_n \G_n^*
\end{bmatrix} \\
&= \log\det (I+\rho \,\H_{m,n-1} \H_{m,n-1}^*) \nonumber \\
& \phantom{=} + \log\det\left( I + 
\rho \,\F_n \F_n^* + \rho \,\G_n \G_n^* - \rho^2 \, P \H_{m,n-1}^* 
 (I+\rho \,\H_{m,n-1} \H_{m,n-1}^* )^{-1} \H_{n,m-1} P^* \right) \nonumber \\
 &= \log\det (I+\rho \,\H_{m,n-1} \H_{m,n-1}^*) \nonumber \\
& \phantom{=} + \log\det\left( I + 
\rho \,\F_n \F_n^* + \rho \,\G_n \G_n^* + \rho \, P[ 
 (I+\rho \,\H_{m,n-1} ^*\H_{m,n-1} )^{-1}-I]  P^* \right) \nonumber \\
 &= \log\det (I+\rho \,\H_{m,n-1} \H_{m,n-1}^*) \nonumber \\
& \phantom{=} + \log\det\left( I + 
\rho \,\F_n \F_n^* + \rho \,\G_n \G_n^* + \rho \, \F_n [ 
 (I+\rho \,\H_{m,n-1} ^*\H_{m,n-1} )^{-1}-I]_{\Box K} \F_n^* \right) 
                                                             \nonumber \\
 &= \log\det (I+\rho \,\H_{m,n-1} \H_{m,n-1}^*) \nonumber \\
& \phantom{=} + \log\det\left( I + \rho \,\G_n \G_n^* + 
 \rho \,\F_n [(I+\rho \,\H_{m,n-1}^* \H_{m,n-1} )^{-1}]_{\Box_K} \F_n^* \right) .
 \end{align} 
By iterating this manipulation after replacing $\H_{m,n-i}$ by $\H_{m,n-i-1}$ at the $i^{\text{th}}$ step, if we set 
\begin{equation} 
 \xi_{m,i} := \log\det\left( I + \rho\, \G_i \G_i^* + 
 \rho \,\F_i [(I+\rho\, \H_{m,i-1}^* \H_{m,i-1})^{-1}]_{\Box_K} \F_i^* \right) 
\end{equation} 
for any $m\leq i\leq n$ with the convention that $\H_{m,m-1} := 0$, we have
\begin{equation} 
\label{I0nid}
 \log\det\left(I+\rho\, \H_{m,n} \H_{m,n}^* \right) =\sum_{i=m}^{n} \xi_{m,i} \, .
\end{equation}

Next, Lemma~\ref{le:cvg-resolv} yields
\begin{equation} 
\label{asConvXi}
\xi_{m,i} \xrightarrow[m\to -\infty]{}  \log\det\left( I + \rho \,\G_i \G_i^* + 
 \rho \,\F_i \W_{i-1} 
  \F_i^* \right).
\end{equation} 
Since $\| [(I+\rho \,\H_{m,i-1}^* \H_{m,i-1})^{-1}]_{\Box_K} \| \leq 1$, we 
have $\xi_{m,i} \leq N \log(1 + \rho \| \F_i\|^2 + \rho \| \G_i \|^2)$. Thus, 
by the moment assumption~\eqref{momFG}, we obtain from \eqref{asConvXi} and 
dominated convergence that
\begin{align}
\EE \xi_{m,i}  \xrightarrow[m\to-\infty]{} & 
\,\EE\log\det\left( I + \rho \,\G_i \G_i^* + 
 \rho \,\F_i \W_{i-1}
  \F_i^* \right) \nonumber \\ 
  & =\EE\log\det\left( I + \rho \,\G_0 \G_0^* + 
 \rho \,\F_0 \W_{-1}  \F_0^* \right) ,  
\end{align}
where the equality follows from the stationarity of the process 
$(\F_n, \G_n)_{n\in\ZZ}$. The stationarity further provides that 
$\EE \xi_{m,i}$ only depends on $i-m$ and thus, for any fixed $n$, we obtain 
by Ces\`aro summation (see \cite[Page 16]{pol-sze-(livre)98}) that 
\begin{equation} 
\label{prelimComp}
N\mathcal I_\rho=\lim_{m\to-\infty}\frac1{(n-m+1)}\sum_{i=m}^{n} \EE\xi_{m,i}=\EE\log\det\left( I + \rho \,\G_0 \G_0^* + 
 \rho \,\F_0 \W_{-1}  \F_0^* \right).
\end{equation} 
By taking $n=0$ in the recursive relation \eqref{recurs}, we moreover see that 
\begin{align}
\label{relFinal}
N\mathcal I_\rho & =\EE\log\det\left( I + \rho \,\G_0 \G_0^* + 
 \rho \,\F_0 \W_{-1}  \F_0^* \right)\nonumber\\
 & =\EE\log\det\left( I +  \rho \,\F_0 \W_{-1}  \F_0^* \right)+\EE\log\det\left( I + \rho \,\G_0\G_0^* (I+\rho \,\F_0 \W_{-1}  \F_0^*)^{-1} \right)\nonumber\\
 & = \EE\log\det\left( I +  \rho \,\F_0 \W_{-1}  \F_0^* \right)+\EE\log\det\left( I + \rho \,\G_0^* (I+\rho \,\F_0 \W_{-1}  \F_0^*)^{-1}\G_0 \right)\nonumber\\
 & = \EE\log\det\left( I +  \rho \,\F_0 \W_{-1}  \F_0^* \right)-\EE\log\det\W_0,
\end{align}
which proves \eqref{I-WW}.
\end{proof} 

\subsection{Proof of Theorem~\ref{th-main}(c)} \begin{proof}[Proof of Theorem~\ref{th-main}(c)]
Since the process $(\F_n,\G_n)_{n\in\ZZ}$ is assumed to be ergodic, and so does $(\W_n)_{n\in\ZZ}$ by construction, we have a.s.
\begin{multline}
\label{ergoWn}
 \lim_{n\to\infty}\frac1n \sum_{\ell=0}^{n-1}\log\det\left( I +  \rho \,\F_\ell \W_{\ell-1}  \F_\ell^* \right)-\EE\log\det\W_\ell\\
 = \EE\log\det\left( I +  \rho \,\F_0 \W_{-1}  \F_0^* \right)-\EE\log\det\W_0=\mathcal I_\rho.
\end{multline}
Next, for the same reason as and with the same notations as in the proof of the uniqueness of $\W_n$ provided in Section~\ref{sec:uniqueness}, we have $\dr(\X_n, \W_n) \to 0$ a.s. as $n\to\infty$. Thus, 
\begin{equation} 
\frac 1n \sum_{\ell=0}^{n-1} \log\det \X_\ell - \log\det \W_\ell 
\xrightarrow[n\to\infty]{\text{a.s.}} 0 
\end{equation} 
as a Ces\`aro average. Since Lemma~\ref{conj-trans} also yields
\begin{equation} 
\dr(I+\rho\,\F_n\X_{n-1}\F_n \,,\,I+\rho\,\F_n\W_{n-1}\F_n )\leq \dr(\X_{n-1},\W_{n-1})\xrightarrow[n\to\infty]{\text{a.s.}} 0, 
\end{equation} 
we similarly have
\begin{equation} 
\frac 1n \sum_{\ell=0}^{n-1} \log\det (I+\rho\,\F_\ell\X_{\ell-1}\F_\ell^*) - \log\det(I+\rho\,\F_\ell \W_{\ell-1}\F_\ell^*) 
\xrightarrow[n\to\infty]{\text{a.s.}} 0 .
\end{equation} 
and the result follows from this convergence along with \eqref{ergoWn}. 
\end{proof}

This completes the proof of Theorem~\ref{th-main}.

\section{Proof of Theorem~\ref{th-Markov}} 
\label{sec:prfasymp} 

Assume from now that $N> K$ and that Assumption~\ref{Ass2} holds true.

\subsection{Preparation}
To obtain an expansion of the type 
$\mathcal I_\rho = (K/N) \log\rho + \kappa_\infty + o(1)$ as $\rho\to\infty$, 
it is more convenient to work with the new variables:
\begin{equation} 
\label{defWn}
\gamma:=\frac1\rho \in(0,\infty)\,,\qquad \Zg{n}:=\gamma\, \W_n^{-1}\,.
\end{equation} 
Indeed, it follows the identity \eqref{I-WW} of Theorem~\ref{th-main} and the stationarity of $(\W_n)_{n\in\ZZ}$ that
\begin{align}
N \mathcal I_\rho &= -\EE \log\det \W_0  +\EE\log\det(I+\rho\,\F_{1}\W_n\F_{1}^*)
  \nonumber \\
&= K \log\rho + \EE\log\det \Zg{0} + 
   \EE\log\det (I + \F_{1} \Zg{0}^{-1} \F_{1}^*) \nonumber \\ 
   &= K \log\rho + \EE\log\det \Zg{0} + 
   \EE\log\det (I + \Zg{0}^{-1} \F_{1}^*\F_{1} ) \nonumber \\ 
&= K \log\rho + \EE\log\det (\Zg{0} + \F_{1}^* \F_{1})  ,
\label{Ihighsnr} 
\end{align} 
which is the starting point of the asymptotic analysis $\gamma\to0$.
With this expression at hand, we would like to take the limit  $\gamma\to 0$ 
and identify the limit
\begin{equation} 
\label{kappaIdea}
\kappa_\infty:=\frac1N\lim_{\gamma\to 0} \EE\log\det (\Zg{0} + \F_{1}^* \F_{1}).
\end{equation}

To study this limiting case, we start from the recursive 
equation~\eqref{recurs}, which reads for these new variables 
\begin{equation} 
\Zg{n} 
=  \gamma I + \G_n^* ( I + \F_n \Zg{n-1}^{-1} \F_n^*)^{-1} \G_n = h_{\gamma,\F_n,\G_n}(\Zg{n-1}) ,
\end{equation} 
where, for any $\gamma\geq 0$ and $F,G\in \CC^{N\times K}$, we define 
 $h_{\gamma,F,G} : \cH_K^{++} \to \cH_K^{+}$ 
 by
\begin{equation}
\label{defh} 
h_{\gamma,F,G}(Z) := \gamma I+G^* ( I + F Z^{-1} F^*)^{-1} G \, . 
\end{equation} 
Note that if $\gamma>0$ then $h_{\gamma,F,G}(Z)\in\cH_K^{++}$. The same holds true when $\gamma=0$, which is now allowed, as soon as $G$ has full rank. We now observe that one can extend this mapping to the whole of $\cH_K^+$. 

\subsubsection{Extension of the mapping $h_{\gamma,F,G}$ to $\cH_K^+$}

Assume that $F\in\CC^{N\times K}$ has full rank, namely $\rank(F) = K$. By setting $T:=(F^*F)^{1/2}$ and $U:=F(F^*F)^{-1/2}$,  we have the polar decomposition $F = U T$ where $U\in\CC^{N\times K}$ is an
isometry matrix and $T\in\cH_K^{++}$. By completing $U$
so as to obtain a $N\times N$ unitary matrix $\begin{bmatrix} U \ \  U^\perp
\end{bmatrix}$ and setting $\Pi_{F}^\perp:=U^\perp(U^\perp)^*=I-F(F^*F)^{-1}F^*$, which the orthogonal projection onto the orthogonal space to the linear span of the columns of $F$, we can write 
\begin{align} 
& h_{\gamma, F, G}(Z) \nonumber \\
& =\gamma I+G^* ( I + F Z^{-1} F^*)^{-1} G  \nonumber\\
&= \gamma I+G^*U ( I+T Z^{-1} T)^{-1} U^* G + G^* \Pi_{F}^\perp G 
  \label{hsemiUT}\\
&=  \gamma I + G^*U T^{-1} Z^{1/2} 
     ( I + Z^{1/2} T^{-2} Z^{1/2})^{-1} Z^{1/2} T^{-1} U^* G 
    + G^* \Pi_{F}^\perp G \nonumber\\
&= \gamma I+ G^*F(F^*F)^{-1} Z^{1/2}(I+ Z^{1/2} (F^*F)^{-1} Z^{1/2})^{-1} 
     Z^{1/2}(F^*F)^{-1}F^*G+G^* \Pi_{F}^\perp G \label{hsemi}  
\end{align} 
where for the second equality we used the matrix identity $(I+AB)^{-1}=B^{-1}(I+A^{-1}B^{-1})^{-1}A^{-1}$ with $A:=TZ^{-1/2}$ and $B:=Z^{-1/2}T$ for any $Z^{1/2}\in\cH_K^+$ satisfying $(Z^{1/2})^2=Z$. Note that the  alternative expression \eqref{hsemi} for $h_{\gamma, F, G}(Z)$  does  now make sense when $Z\in\cH_K^+$ is not invertible, provided that $F$ has full rank. Moreover, since two Hermitian square roots of $Z\in\cH_K^+$ are identical up to the multiplication by a unitary matrix, the right hand side of \eqref{hsemi} does not depend on the choice for $Z^{1/2}$. In the following, we chose $Z\mapsto Z^{1/2}$ so that it is continuous (for the operator norm). Thus, by taking the right hand side of \eqref{hsemi} as the definition of $h_{\gamma, F, G}(Z)$ in this case, we properly extended $h_{\gamma, F, G}$ to a mapping $\cH_K^+\to\cH_K^+$ which is continuous, and that we continue to denote by $h_{\gamma, F, G}$. An important property of $h_{0,F,G}$ we use in what follows is:

\begin{lemma}
\label{hinc} 
If $F$ has full rank, then $h_{0,F,G}:\cH_K^+\to\cH_K^+$ is non-decreasing. 
\end{lemma}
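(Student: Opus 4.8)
\textbf{Proof plan for Lemma~\ref{hinc}.} The goal is to show that, for $F$ of full rank, the extended mapping $h_{0,F,G}:\cH_K^+\to\cH_K^+$ is non-decreasing with respect to the L\"owner order, i.e. $X\leq Y$ implies $h_{0,F,G}(X)\leq h_{0,F,G}(Y)$. The natural route is to go back to the \emph{original} formula $h_{0,F,G}(Z)=G^*(I+FZ^{-1}F^*)^{-1}G$ valid on $\cH_K^{++}$, establish monotonicity there by elementary L\"owner-order manipulations, and then pass to the boundary $\partial\cH_K^+$ by continuity using the explicit extension formula~\eqref{hsemi}.

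First I would treat the interior. Let $0<X\leq Y$ in $\cH_K^{++}$. Monotonicity of the inverse on $\cH_K^{++}$ gives $X^{-1}\geq Y^{-1}$, hence $FX^{-1}F^*\geq FY^{-1}F^*$ and therefore $I+FX^{-1}F^*\geq I+FY^{-1}F^*$ (both sides are positive definite, since the additive $I$ makes them invertible regardless of the rank of $F$). Applying operator-inversion monotonicity once more yields $(I+FX^{-1}F^*)^{-1}\leq (I+FY^{-1}F^*)^{-1}$, and finally conjugation by $G$ preserves the L\"owner order: $G^*(I+FX^{-1}F^*)^{-1}G\leq G^*(I+FY^{-1}F^*)^{-1}G$. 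This is exactly $h_{0,F,G}(X)\leq h_{0,F,G}(Y)$ on the interior.

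To reach all of $\cH_K^+$, take arbitrary $X\leq Y$ in $\cH_K^+$ and set $X_\varepsilon:=X+\varepsilon I$, $Y_\varepsilon:=Y+\varepsilon I$ for $\varepsilon>0$; these lie in $\cH_K^{++}$ and still satisfy $X_\varepsilon\leq Y_\varepsilon$, so by the previous paragraph $h_{0,F,G}(X_\varepsilon)\leq h_{0,F,G}(Y_\varepsilon)$. Now let $\varepsilon\downarrow 0$. Since $F$ has full rank, $h_{0,F,G}$ was \emph{defined} on $\cH_K^+$ via the right-hand side of~\eqref{hsemi}, which is continuous for the operator norm (the only delicate point being the factor $Z\mapsto Z^{1/2}$, chosen continuous, and the invertible block $I+Z^{1/2}(F^*F)^{-1}Z^{1/2}$, which stays bounded below by $I$). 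Hence $h_{0,F,G}(X_\varepsilon)\to h_{0,F,G}(X)$ and $h_{0,F,G}(Y_\varepsilon)\to h_{0,F,G}(Y)$, and the L\"owner order, being closed, is preserved in the limit: $h_{0,F,G}(X)\leq h_{0,F,G}(Y)$.

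The only genuine subtlety — the ``hard part'', though it is mild here — is making sure that the value obtained by approximating from the interior really agrees with the value prescribed by the extension formula~\eqref{hsemi} on the boundary; this is precisely the continuity of~\eqref{hsemi} already recorded in the text just before the lemma, so it can be invoked directly. Everything else is the standard L\"owner-monotonicity toolkit (monotonicity and anti-monotonicity of $Z\mapsto Z^{-1}$, stability of $\leq$ under $A\mapsto M^*AM$ and under adding a fixed positive operator), so no computation of the $\times$ blocks or any resolvent gymnastics is needed.
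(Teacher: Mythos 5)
Your proposal is correct and follows the same route as the paper: the paper's proof asserts without detail that monotonicity on $\cH_K^{++}$ is ``clear from~\eqref{defh}'' and then writes $h_{0,F,G}(Z)=\lim_{\varepsilon\to 0}h_{0,F,G}(Z+\varepsilon I)$ by continuity to extend to $\cH_K^+$, which is exactly your two-step argument. You simply make the interior step explicit via the standard L\"owner-order chain (anti-monotonicity of the inverse, order-preservation under conjugation and under adding $I$), which is a useful elaboration but not a different method.
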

\begin{proof}
It is clear from \eqref{defh} this mapping is non-decreasing on $\cH_K^{++}$ and this property extends to $\cH_K^{+}$ since one can write $h_{0,F,G}(Z)=\lim_{\varepsilon \to 0}h_{0,F,G}(Z+\varepsilon I)$ by continuity of $h_{0,F,G}$. 
\end{proof}

\subsubsection{The Markov kernel $Q_\gamma$}
\label{sec:Qgamma}
Equipped with the extended definition of $h_{\gamma, F, G}$ to $\cH_K^+$, let us consider for any $\gamma\geq 0$ the Markov transition kernel $Q_\gamma : (E\times\cH_K^{+}) \times \mcB(E\times \cH_K^{+}  ) \to [0,1]$  defined by 
\begin{equation} 
Q_\gamma f(\F,\G,\Z):=\int f\big(F,G, h_{\gamma,F,G}(\Z)\big) P\big((\F,\G), dF\times dG\big)
\end{equation} 
for any $(\F,\G)\in E$, any $\Z\in\cH_K^+$ and any Borel test function $f:E\times\cH_K^+\to[0,\infty)$. 

\begin{remark} 
In the following, we will use at several instances the following fact: Since
$\theta =\theta P$, Assumption~\ref{Ass2}(d) yields that $G^*F$ is non-singular,
and thus that both $F$ and $G$ have full rank, $P((\F,\G),\cdot)$-a.s. for
$\theta$-a.e. $(\F,\G)$. In particular, $Q_0 f(\F,\G,\Z)$ is properly defined
for $\theta$-a.e. $(\F,\G)$, which will be enough for our purpose.
\end{remark}

When $\gamma>0$, if $(\F_n,\G_n,\Zg{n})_{n\in\ZZ}$ denotes the Markov process
defined by $\Zg{n}=h_{\gamma,\F_n,\G_n}(\Zg{n-1})$ with $(\F_n,\G_n)_{n\in\ZZ}$
the Markov process with transition kernel $P$, then by the definition of
$\Zg{n}$ in~\eqref{defWn} and by Theorem~\ref{th-main}, it follows that
$Q_\gamma$ has a unique invariant measure, that we denote by $\pi_\gamma$. The
strategy of the proof of Theorem~\ref{th-Markov} is to show that $Q_0$ has
also a unique invariant measure $\pi_0$, which will yield the existence of the
process $\Z_n:=\Z_{0,n}$, and we also show that $\pi_\gamma\to\pi_0$ narrowly as
$\gamma\to 0$ and that one can legally take the limit $\gamma\to0$ in
\eqref{kappaIdea}, so as to obtain $N\mathcal I_\rho + K \log \gamma
\to\EE\det(\Z_0+\F_1^*\F_1)$. It turns out when $N=K$ one can possibly lose the
uniqueness of the invariant measure for $Q_0$, which makes this setting out of
reach for our current approach.

\subsection{Existence and uniqueness of the invariant measure of $Q_0$}

The key to prove the existence of an invariant measure for $Q_0$ is the following result.
\begin{lemma} 
\label{tightness} 
The family of probability measures  on $\cH_K^+$,
\begin{equation} 
\label{tightset}
\mathscr C:=\big\{ \zeta Q_0^n(E\times \cdot )  
  \, : \; \zeta \in \cM(E\times\cH_K^+) ,\;
\zeta(   \cdot\times \cH_K^+) = \theta(\cdot), \ n \geq K \big\}.
\end{equation} 
 is a tight subset of  $\cM(\cH_K^{++}).$
\end{lemma}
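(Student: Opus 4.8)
\textbf{Plan of the proof of Lemma~\ref{tightness}.} The goal is to establish tightness of the family $\mathscr C$ inside $\cM(\cH_K^{++})$; since $\cH_K^+$ is only locally compact and the closure of a tight set of measures on $\cH_K^{++}$ could a priori put mass on the boundary $\cH_K^+\setminus\cH_K^{++}$ (i.e.\ on singular matrices) or ``escape to infinity'', we need a two-sided control: a uniform upper bound on the operator norm of the random matrix $\Z$, and a uniform lower bound on its smallest eigenvalue, both holding with probability arbitrarily close to $1$ under any $\zeta\in\mathscr C$. The first step is to rewrite, for a given $\zeta$ with first marginal $\theta$, the measure $\zeta Q_0^n(E\times\cdot)$ as the law of $h_{0,\F_n,\G_n}\circ\cdots\circ h_{0,\F_1,\G_1}(\Z_0)$, where $(\F_k,\G_k)_{k\ge 1}$ is a stationary chain with transition $P$ started from $\theta$ (so each $(\F_k,\G_k)$ has law $\theta$), and $\Z_0$ is an arbitrary $\cH_K^+$-valued initial condition, independent structure notwithstanding. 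Here Assumption~\ref{Ass2}(d) together with $\theta=\theta P$ guarantees that $F_k$ and $G_k$ have full rank almost surely, so that each $h_{0,\F_k,\G_k}$ is well defined on all of $\cH_K^+$ and, by Lemma~\ref{hinc}, non-decreasing.

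\textbf{Upper bound (norm control).} From \eqref{defh} one has $h_{0,F,G}(Z)=G^*(I+FZ^{-1}F^*)^{-1}G\preceq G^*G$ for $Z\in\cH_K^{++}$, and this extends to $Z\in\cH_K^+$ by the continuity/monotonicity argument used in Lemma~\ref{hinc}. Hence $\Z_{0,n}\preceq \G_n^*\G_n$ deterministically, so $\|\Z_{0,n}\|\le\|\G_n\|^2$, and by stationarity the law of $\|\G_n\|^2$ is that of $\|\G_0\|^2$, which is tight on $[0,\infty)$ (indeed integrable by Assumption~\ref{Ass2}(b)). This already handles the ``escape to infinity'' and gives a uniform upper bound: for every $\varepsilon>0$ there is $M$ with $\PP(\|\Z_{0,n}\|>M)\le\varepsilon$ for all $n\ge 1$ and all $\zeta\in\mathscr C$.

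\textbf{Lower bound (non-degeneracy), the main obstacle.} The delicate part is showing that $\lmin(\Z_{0,n})$ does not collapse to $0$ uniformly in $n\ge K$ and $\zeta$; this is exactly why the requirement $n\ge K$ appears in the definition of $\mathscr C$ (one needs to compose enough maps to ``fill'' all $K$ directions). The plan is: first, using the monotonicity of $h_{0,F,G}$, replace $\Z_0$ by $0$ to get a lower bound $\Z_{0,n}\succeq h_{0,\F_n,\G_n}\circ\cdots\circ h_{0,\F_1,\G_1}(0)=:\underline{\Z}_n$, so it suffices to bound $\lmin(\underline\Z_n)$ from below in probability, uniformly in $n\ge K$. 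Next, track how the rank (or, quantitatively, the smallest nonzero eigenvalue) of $\underline\Z_n$ grows: from the decomposition \eqref{hsemi} with $\gamma=0$, $h_{0,F,G}(Z)=G^*F(F^*F)^{-1}Z^{1/2}(I+Z^{1/2}(F^*F)^{-1}Z^{1/2})^{-1}Z^{1/2}(F^*F)^{-1}F^*G+G^*\Pi_F^\perp G$; the second term $G^*\Pi_F^\perp G$ is $\succeq 0$ and, crucially, by the second condition in \eqref{regCond} (namely $\Pi_G^\perp F v\ne 0$ for all $v\ne 0$, equivalently here the relevant rank condition on $G^*\Pi_F^\perp$) it injects strictly positive curvature in directions not yet covered, while the first term, being a congruence $A Z^{1/2}(I+\cdots)^{-1}Z^{1/2}A^*$ with $A=G^*F(F^*F)^{-1}$ invertible ($\det(G^*F)\ne 0$ by the first part of \eqref{regCond}), preserves the range of $Z$ and only shrinks the already-covered eigenvalues in a controlled way. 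Combining these, after $K$ steps $\underline\Z_K$ is a.s.\ positive definite, and one gets an explicit, $P$-measurable lower bound for $\lmin(\underline\Z_K)$ in terms of $(\F_1,\G_1),\dots,(\F_K,\G_K)$ which is a.s.\ strictly positive; stationarity then transfers this to $\underline\Z_n$ for every $n\ge K$ by looking at the last $K$ factors $h_{0,\F_n,\G_n}\circ\cdots\circ h_{0,\F_{n-K+1},\G_{n-K+1}}$ applied to the nonnegative matrix $\underline\Z_{n-K}$ and using monotonicity once more together with the fact that $h_{0,F,G}(S)\succeq h_{0,F,G}(0)$. The resulting positive random lower bound has a law independent of $n$ (by stationarity), hence is tight away from $0$: for every $\varepsilon>0$ there is $\delta>0$ with $\PP(\lmin(\Z_{0,n})<\delta)\le\varepsilon$ for all $n\ge K$ and all $\zeta\in\mathscr C$. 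The careful, quantitative bookkeeping of how $\lmin$ propagates through the $K$ compositions — making sure the bound is uniform and does not implicitly depend on the (unbounded) norms of the $F_k,G_k$ — is where the real work lies; one likely writes it by induction on the dimension of the range, invoking \eqref{regCond} at each step to ensure a genuine gain.

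\textbf{Conclusion.} Putting the two bounds together: given $\varepsilon>0$, choose $M$ and $\delta$ as above; then the set $\{Z\in\cH_K^{++}:\delta I\preceq Z,\ \|Z\|\le M\}$ is compact in $\cH_K^{++}$ and carries mass at least $1-2\varepsilon$ under every element of $\mathscr C$. Hence $\mathscr C$ is tight as a subset of $\cM(\cH_K^{++})$, which is the claim. (In the subsequent use of the lemma, Prokhorov's theorem then furnishes a convergent subsequence whose limit is an invariant measure of $Q_0$, but that is beyond the present statement.)
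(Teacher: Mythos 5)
Your global strategy coincides with the paper's: use the monotonicity of $h_{0,F,G}$ (Lemma~\ref{hinc}) to replace the initial $Z$-value by $0$; control $\|Z\|$ via the deterministic bound $\|h_{0,F,G}(Z)\|\leq\|G\|^2$ together with $\theta P^n=\theta$; and reduce the non-degeneracy question to showing that the $K$-fold composition started from $0$ produces an a.s.\ positive definite matrix, after which stationarity (comparing $\zeta Q_0^n$ with $(\theta\otimes\delta_0)Q_0^{n-K}Q_0^K$ and lower-bounding by $(\theta\otimes\delta_0)Q_0^K$, exactly what you do by ``looking at the last $K$ factors'') transports the bound to every $n\geq K$. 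The reduction is sound.

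The gap is in the step you yourself flag as ``where the real work lies.'' You assert that, by Assumption~\ref{Ass2}(d), the additive term $G^*\Pi_F^\perp G$ in the decomposition~\eqref{hsemi} ``injects strictly positive curvature in directions not yet covered,'' and your parenthetical (``equivalently here the relevant rank condition on $G^*\Pi_F^\perp$'') suggests you are reading the assumption as saying that $\Pi_G^\perp F$ (equivalently $\Pi_F^\perp G$) has full rank for $\theta$-a.e.\ $(F,G)$. That is a strictly stronger statement: the assumption's quantifiers are ``for every nonzero $v$, for $\theta$-a.e.\ $(F,G)$,'' not ``for $\theta$-a.e.\ $(F,G)$, for every $v$.'' Under the uniform-in-$v$ reading, $G^*\Pi_F^\perp G$ would already be a.s.\ positive definite, a single application of $h$ would give a definite matrix, and the restriction ``$n\geq K$'' in the definition of $\mathscr C$ would be superfluous. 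To exploit the assumption as actually stated, one must, at each step, produce a \emph{specific} $(\F,\G,\Z)$-measurable vector in $\ker\Z$ and feed it into the assumption relative to the kernel $P((\F,\G),\cdot)$. The paper does precisely this: after writing $h_{0,F,G}(\Z)=T_1(\Z)+G^*\Pi_F^\perp G$ with $\rank T_1(\Z)=\rank\Z$, it establishes through an explicit chain of algebraic equivalences that ``$\rank h_{0,F,G}(\Z)\leq\rank\Z$'' is equivalent to $F^*\Pi_G^\perp F\Z^\perp=0$, where $\Z^\perp$ is a fixed matrix whose columns span $\ker\Z$; Assumption~\ref{Ass2}(d), applied to each column of $\Z^\perp$, then shows this event has $P((\F,\G),\cdot)$-probability zero. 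That computation is the substance of the lemma and your proposal leaves it as a promise. (Your observation that $\rank\Pi_F^\perp G=\rank\Pi_G^\perp F$ when both $F$ and $G$ have full rank is correct --- both equal $K-\dim(\colspan F\cap\colspan G)$ --- but it does not bypass the need for this argument.)
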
 

\begin{proof} Let us fix $\varepsilon>0$. We first prove there exists $\eta>0$ such that, for any $\xi\in\mathscr C$, 
\begin{equation} 
\label{tight++}
\xi(\lambda_{\rm{min}}(Z)\geq \eta)\geq 1-\varepsilon,
\end{equation} 
where we recall that $\lmin(Z)$ is the smallest eigenvalue of $Z\in\cH_K^+$. To do so, observe from \eqref{defh} that if $Z\in\cH_K^{++}$ then so does $h_{0,F,G}(Z)$ as soon as $G$ has full rank, which is true $\theta$-a.s. due to Assumption~\ref{Ass2}(d). We claim that this assumption further yields that, that for all $(\F,\G,\Z)$ satisfying $\rank(\Z) < K$, we have
$Q_0((\F,\G,\Z), \;\rank(Z) > \rank(\Z)) = 1$, 
namely  at each step of the process the rank of the random matrix $Z$ increases $Q_0((\F,\G,\Z),\cdot)$-a.s.  To prove this, we start from 
\begin{equation} 
Q_0((\F,\G,\Z), \rank(Z) \leq \rank(\Z)) = P((\F,\G), \;\rank (h_{0,F,G}(\Z))\leq \rank(\Z)).
\end{equation} 
Recalling \eqref{hsemi}, we have 
$\rank(h_{0,F,G}(\Z)-G^* \Pi_{F}^\perp G)=\rank(\Z)$ as soon as $F^*G$ is
invertible. Using Assumption~\ref{Ass2}(d) in conjunction with the general
fact that $\rank(A+B)\leq \rank(A)$ implies that
the column spans of these matrices satisfy $\colspan(B)\subset\colspan(A)$ for
any $A,B\in\cH_K^+$, this yields  
\begin{equation} 
Q_0((\F,\G,\Z), \rank(Z) \leq \rank(\Z)) =P\big((\F,\G), \;\colspan(G^* \Pi_{F}^\perp G)\subset\colspan(h_{0,F,G}(\Z)-G^* \Pi_{F}^\perp G)\big)
\end{equation} 
for $\theta$-a.e. $(\F,\G)$. Next, we will use repeatedly that,  for two matrices $A$ and $B$ we have $\colspan(A) 
\subset\colspan(B)$ if and only if 
$\colspan(CAD) \subset \colspan(CBD)$ for all invertible matrices $C$ and $D$. If we let  $\Z^\perp \in \CC^{K\times K}$ be
any matrix such that $\colspan(\Z^\perp)=\colspan(\Z)^\perp$, we have:
\begin{align}
&\colspan(G^* \Pi_{F}^\perp G)\subset
 \colspan(h_{0,F,G}(\Z)-G^* \Pi_{F}^\perp G) \nonumber \\
\Leftrightarrow&  \colspan(G^* \Pi_{F}^\perp G)\subset
  \colspan(G^*F(F^*F)^{-1}\Z^{1/2}(I+\Z^{1/2} (F^*F)^{-1}\Z^{1/2})^{-1}
  \Z^{1/2}(F^*F)^{-1}F^*G) \nonumber \\
\Leftrightarrow&  
\colspan(G^* G  - G^* F (F^* F)^{-1} F^* G) \nonumber \\
& \qquad\qquad \qquad\;\,\subset \colspan(G^*F(F^*F)^{-1}\Z^{1/2}(I+\Z^{1/2} 
  (F^*F)^{-1}\Z^{1/2})^{-1}\Z^{1/2}(F^*F)^{-1}F^*G) \nonumber \\
\Leftrightarrow& 
  \colspan( F^* F (G^* F)^{-1} G^* G (F^* G)^{-1} F^* F - F^* F ) 
  \subset \colspan( \Z) \nonumber \\
\Leftrightarrow& \colspan( F^* F ( F^* \Pi_G F )^{-1} F^* F - F^* F ) 
  \subset \colspan( \Z)  \nonumber \\ 
\Leftrightarrow& F^* F ( F^* \Pi_G F )^{-1} F^* F \Z^\perp - F^* F \Z^\perp 
    = 0 \nonumber \\
\Leftrightarrow&  F^* F \Z^\perp =  F^* \Pi_G F \Z^\perp \nonumber \\
\Leftrightarrow&  F^* \Pi_G^\perp F \Z^\perp  = 0, \, 
\end{align} 
provided that  $F$ and $ G$ have full rank. 
Therefore, together with Assumption~\ref{Ass2}(d), we obtain
\begin{equation} 
Q_0((\F,\G,\Z), \rank(Z) \leq \rank(\Z)) 
 = P((\F,\G),\; F^* \Pi_G^\perp F \Z^\perp  = 0 ) = 0 ,
\end{equation} 
for $\theta$-a.e. $(\F,\G)$, and our claim follows. 
As a consequence, $Z$ has full rank $(\theta\otimes \delta_0 )Q_0^K((\F,\G),\cdot)$-a.s. and thus there exists $\eta>0$ such that 
\begin{equation} 
(\theta\otimes \delta_0 )Q_0^K((\F,\G), \; \lmin(Z)\geq \eta)\geq 1-\varepsilon.
\end{equation} 

Next, we use that $Z\mapsto h_{0,F,G}(Z)$ and  $Z\mapsto \lmin(Z)$ are 
non-decreasing on $\cH_K^+$, see Lemma~\ref{hinc}, so that for any 
$\zeta \in \cM(E\times\cH_K^+)$ satisfying 
$\zeta( \cdot\times \cH_K^+) = \theta(\cdot)$ and any $n\geq K$, we have
\begin{align}
\zeta Q_0^n( \lmin(Z)\geq \eta)& 
 \geq (\theta\otimes\delta_0)Q_0^n( \lmin(Z)\geq \eta) \nonumber \\
&= \big((\theta\otimes\delta_0)Q_0^{n-K}(E\times\cdot)\big) 
 Q_0^K( \lmin(Z)\geq \eta) \nonumber \\
& \geq Q_0^K( \lmin(Z)\geq \eta)\geq 1-\varepsilon,
\end{align}
which finally proves \eqref{tight++}.

Finally, let $C>0$ be such that $\theta( \| G\|^2 > C )< \varepsilon$ and consider the compact subset $\mathcal K$ of $\cH_K^{++}$ given by 
\begin{equation} 
\mathcal K:=\big\{Z\in\cH_K^{++}:\; \lmin(Z)\geq \eta,\quad \|Z\|\leq C\big\}.
\end{equation} 
 It follows from \eqref{hsemi} that $\| h_{0, F, G}(Z) \| \leq \| G \|^2$ for any $(F,G)\in E$ such that $F$ has full rank and any  $Z\in\cH_K^{+}$. This provides, for any $\zeta\in\cM(E\times\cH_K^+)$ satisfying $\zeta( \cdot\times \cH_K^+) = \theta(\cdot)$ and any  $n \geq K, $ 
\begin{align}
\zeta Q_0^n( \| Z \| > C ) & \leq \zeta Q_0^n( \| G\|^2 > C ) \nonumber \\
& = \theta P^n( \| G\|^2 > C ) \nonumber \\
& = \theta( \| G\|^2 > C )
 < \varepsilon \, 
\end{align}
and thus $\xi(\mathcal K)\geq 1-2\varepsilon$ for any $\xi\in\mathscr C$. The proof of the lemma is therefore complete.
\end{proof}

In the remainder, $\Cb(S)$ denotes the set of continuous and bounded functions
on the metric space $S$. 

\begin{lemma}  
\label{le:Feller} For any $\gamma \geq 0$ the kernel $Q_\gamma$ maps 
$\Cb(E\times \cH_K^{++})$ to itself. 
\end{lemma}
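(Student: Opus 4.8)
The statement asserts that if $\gamma\ge 0$ and $f\in\Cb(E\times\cH_K^{++})$, then $Q_\gamma f\in\Cb(E\times\cH_K^{++})$. The natural strategy is to decompose the action of $Q_\gamma$ into two pieces: first the deterministic update $(\F,\G,\Z)\mapsto(\F,\G,h_{\gamma,\F,\G}(\Z))$ applied inside the integral, and then the averaging against the kernel $P$. Concretely, write $Q_\gamma f(\F,\G,\Z)=P g_{\Z}(\F,\G)$ where $g_{\Z}(F,G):=f(F,G,h_{\gamma,F,G}(\Z))$, and the plan is to show (i) that $(F,G,\Z)\mapsto f(F,G,h_{\gamma,F,G}(\Z))$ is continuous and bounded on $E\times\cH_K^{++}$, and (ii) that applying the Feller kernel $P$ (Assumption~\ref{Ass2}(a)) in the $(\F,\G)$ variables, with $\Z$ as a parameter, preserves joint continuity and boundedness.

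First I would record the boundedness, which is immediate: $|Q_\gamma f(\F,\G,\Z)|\le\|f\|_\infty$ since $h_{\gamma,F,G}(\Z)\in\cH_K^{++}$ whenever $\gamma>0$ (and also when $\gamma=0$ provided $G$ has full rank, which is the $\theta$-a.e.\ situation we care about — but for the clean statement on all of $E\times\cH_K^{++}$ one should note that for $\gamma=0$ the map $h_{0,F,G}$ is only defined when $F$ has full rank; I would either restrict attention to that full-measure set, invoking the remark following the definition of $Q_\gamma$, or note that $h_{0,F,G}(\Z)\in\cH_K^{+}$ in general and that $f$ extends by the obvious convention). For continuity: the map $(F,G,Z)\mapsto h_{\gamma,F,G}(Z)$ is continuous on its domain, which follows directly from the explicit formula \eqref{defh} for $\gamma>0$ (composition of continuous matrix operations, the inverse being continuous on the positive definite cone) and from the extended formula \eqref{hsemi} together with the stipulated continuous choice of $Z\mapsto Z^{1/2}$ when $\gamma=0$. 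Composing with the continuous $f$ gives joint continuity of $g(F,G,Z):=f(F,G,h_{\gamma,F,G}(Z))$.

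The step requiring the most care is passing the Feller property through the extra parameter $\Z$: knowing that $P$ maps $\Cb(E)$ to $\Cb(E)$ for each fixed integrand does not by itself give that $(\F,\G,\Z)\mapsto\int g(F,G,\Z)\,P((\F,\G),dF\times dG)$ is \emph{jointly} continuous. To handle this I would fix a convergent sequence $(\F_k,\G_k,\Z_k)\to(\F,\G,\Z)$ in $E\times\cH_K^{++}$ and split $Q_\gamma f(\F_k,\G_k,\Z_k)-Q_\gamma f(\F,\G,\Z)$ as $\big(Pg_{\Z_k}(\F_k,\G_k)-Pg_{\Z}(\F_k,\G_k)\big)+\big(Pg_{\Z}(\F_k,\G_k)-Pg_{\Z}(\F,\G)\big)$. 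The second bracket tends to $0$ by the Feller property of $P$ applied to the fixed function $g_{\Z}\in\Cb(E)$. For the first bracket one bounds it by $\sup_{(F,G)}|g(F,G,\Z_k)-g(F,G,\Z)|$; here one would like uniform-in-$(F,G)$ continuity of $g$ in the $\Z$-variable, which is not automatic since $E$ is noncompact. The clean fix is to invoke tightness: the measures $P((\F_k,\G_k),\cdot)$ together with $P((\F,\G),\cdot)$ form a tight family (any convergent sequence of probability measures on a Polish space is tight, and $P((\F_k,\G_k),\cdot)\Rightarrow P((\F,\G),\cdot)$ by the Feller property), so the mass outside a large compact $\mathcal K_0\subset E$ is uniformly small; on $\mathcal K_0\times\{\Z_k,\Z\}$ the function $g$ is uniformly continuous, so the first bracket also tends to $0$. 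Combining the two gives $Q_\gamma f(\F_k,\G_k,\Z_k)\to Q_\gamma f(\F,\G,\Z)$, which is the desired joint continuity, and the proof is complete.
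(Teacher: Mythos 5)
Your proof is correct and follows essentially the same approach as the paper's: the paper also fixes a convergent sequence $(\F_k,\G_k,\Z_k)\to(\F_0,\G_0,\Z_0)$, uses the Feller property to get narrow convergence $\mu_k:=P((\F_k,\G_k),\cdot)\to\mu_0$, proves local uniform convergence $g_k\to g_0$ on $E$ (your first bracket), and closes the argument via tightness of $(\mu_k)$ together with the uniform bound $\sup_k\|g_k\|_\infty<\infty$. Your split into the two brackets is just an explicit spelling-out of the same estimate, and your remark about the $\gamma=0$ / full-rank subtlety is addressed in the paper by the remark preceding the lemma rather than in the proof itself.
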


\begin{proof} Let $f : E\times \cH_K^{++} \to\RR$ be a  bounded and continuous function, and note from the definition of $Q_\gamma$ that $Q_\gamma f$ is clearly bounded. To show it is continuous, let  $(\F_k,\G_k,\Z_k)_{k\geq 1}$  be a sequence converging to $ (\F_0,\G_0,\Z_0)$ in $E\times \cH_K^{++}$ as $k\to\infty$. If we set $g_k(F,G):=f(F,G,h_{\gamma,F,G}(\Z_k))$ and $\mu_k(\cdot):=P((\F_k,\G_k),\cdot)$, then this amounts to show that $\int g_k\, d\mu_k\to\int g_0\, d\mu_0$ as $k\to\infty$. Since $P$ is Feller by Assumption~\ref{Ass2}(a), we have the narrow convergence $\mu_k\to\mu_0$ . Since $(F,G)\mapsto h_{\gamma,F,G}(Z)$ is continuous  on $E$ for any $Z\in\cH_K^{++}$ we have $g_0\in\Cb(\cH_K^{++})$ and that  $g_k\to g_0$ locally uniformly on $E$. Together with the tightness of $(\mu_k)$ and that $\sup_{k\in\NN}\|g_k\|_\infty<\infty$, we obtain $\int g_k\, d\mu_k\to\int g_0\, d\mu_0$ and the proof of the lemma is complete.
\end{proof}

\begin{corollary} 
\label{cor:existQ0} 
$Q_0$ has an invariant measure in $\cM(E\times\cH_K^{++})$. 
\end{corollary}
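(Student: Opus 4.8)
The plan is to run a Krylov--Bogolyubov argument, with Lemma~\ref{tightness} supplying the needed compactness and Lemma~\ref{le:Feller} the continuity. Fix any $Z_\star\in\cH_K^{++}$ and set $\zeta:=\theta\otimes\delta_{Z_\star}\in\cM(E\times\cH_K^{++})$. Since $\theta P=\theta$, the measure $\mu_k:=\zeta Q_0^k$ has first marginal $\theta$ for every $k\geq 0$; in particular, for $k\geq K$ its second marginal $\mu_k(E\times\cdot)$ belongs to the family $\mathscr C$ of Lemma~\ref{tightness}. Introduce the Ces\`aro averages $\nu_n:=\frac1n\sum_{k=K}^{K+n-1}\mu_k$, each of which lies in $\cM(E\times\cH_K^{++})$ and has first marginal $\theta$.

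The next step is to check that $\{\nu_n\}_{n\geq1}$ is tight in the Polish space $E\times\cH_K^{++}$. Given $\varepsilon>0$: the space $E=\CC^{N\times K}\times\CC^{N\times K}$ being Polish, $\theta$ is tight, so there is a compact $\cK_1\subset E$ with $\theta(\cK_1)\geq1-\varepsilon/2$; by Lemma~\ref{tightness} there is a compact $\cK_2\subset\cH_K^{++}$ with $\xi(\cK_2)\geq1-\varepsilon/2$ for every $\xi\in\mathscr C$, hence for every convex combination of elements of $\mathscr C$, in particular for the second marginal of each $\nu_n$. Since the first marginal of each $\nu_n$ is $\theta$, it follows that $\nu_n(\cK_1\times\cK_2)\geq1-\varepsilon$, with $\cK_1\times\cK_2$ compact in $E\times\cH_K^{++}$. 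By Prokhorov's theorem, applied inside the Polish space $E\times\cH_K^{++}$, there is a subsequence $\nu_{n_j}$ converging narrowly to some $\pi_0\in\cM(E\times\cH_K^{++})$.

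It remains to verify that $\pi_0$ is $Q_0$-invariant, which follows from the usual telescoping estimate. For $f\in\Cb(E\times\cH_K^{++})$ we have $Q_0f\in\Cb(E\times\cH_K^{++})$ by Lemma~\ref{le:Feller}, and from $\int Q_0f\,d\mu_k=\int f\,d\mu_{k+1}$ we get
\[
\Big|\int Q_0f\,d\nu_n-\int f\,d\nu_n\Big|
 =\frac1n\Big|\int f\,d\mu_{K+n}-\int f\,d\mu_K\Big|
 \leq\frac{2\|f\|_\infty}{n}\xrightarrow[n\to\infty]{}0 .
\]
Passing along the subsequence $n_j$ and using that both $f$ and $Q_0f$ are continuous and bounded on $E\times\cH_K^{++}$ together with $\nu_{n_j}\to\pi_0$ narrowly, we obtain $\int f\,d(\pi_0Q_0)=\int Q_0f\,d\pi_0=\int f\,d\pi_0$ for all $f\in\Cb(E\times\cH_K^{++})$; since two Borel probability measures on a Polish space that agree on all bounded continuous functions coincide, $\pi_0Q_0=\pi_0$, which proves the corollary. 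There is no genuine obstacle here once the two preceding lemmas are in hand: the only point requiring a little care is that $\cH_K^{++}$ is not closed in the ambient space of Hermitian matrices, which is why it is cleanest to invoke Prokhorov's theorem directly in the Polish space $E\times\cH_K^{++}$ (the tightness being furnished by compact subsets of $\cH_K^{++}$), so that the limit $\pi_0$ automatically charges $E\times\cH_K^{++}$; the identity $\theta P=\theta$ is used crucially to keep the first marginal fixed along the iteration.
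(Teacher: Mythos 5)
Your proof is correct and follows essentially the same Krylov--Bogolyubov argument as the paper: tightness from Lemma~\ref{tightness}, a narrowly convergent subsequence of Ces\`aro averages via Prokhorov, and the Feller property from Lemma~\ref{le:Feller} to pass to the limit in the invariance identity. If anything, your write-up is a touch more careful than the paper's on two minor points --- you make explicit that joint tightness in $E\times\cH_K^{++}$ comes from combining tightness of $\theta$ with that of $\mathscr C$, and you extract the subsequential limit directly from the Ces\`aro averages rather than from the iterates --- but the substance is identical.
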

\begin{proof}
Let $\zeta :=\theta\otimes\delta_0$ so that by Lemma~\ref{tightness} we have $\zeta Q_0^n\in\cM(E\times \cH_K^{++})$ for every $n\geq K$ and $\zeta Q_0^n\to \pi$ narrowly as $n\to\infty$  for some $\pi\in\cM(E\times\cH_K^{++})$, possibly up to the extraction of a subsequence. If we set, for any $n>K$,
\begin{equation} 
\zeta \bar Q_{0,n}:= \frac{1}{n-K} \sum_{\ell=K}^{n-1} \zeta Q_0^\ell\, \in \cM(E\times \cH_K^{++}),
\end{equation} 
then we also have the narrow convergence $\zeta \bar Q_{0,n}\to \pi$. Next, given any $f\in\Cb(E\times \cH_K^{++})$, we write
\begin{equation} 
\zeta \bar Q_{0,n}f= \frac{\zeta Q_0^Kf}{n-K} + \zeta \bar Q_{0,n}(Q_0 f) 
  - \frac{\zeta Q_0^n f}{n-K} . 
\end{equation} 
Since $Q_0f\in\Cb(E\times\cH_K^{++})$ according to Lemma~\ref{le:Feller}, by taking the limit $n\to\infty$ we obtain $\pi f=\pi Q_0f$ and thus $\pi$ is an invariant measure for $Q_0$.
\end{proof}

\begin{lemma} 
\label{le:uniqueQ0}
If $Q_0$ has an invariant distribution, then it is unique.
\end{lemma}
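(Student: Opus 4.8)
The plan is to exploit the contraction property of the maps $h_{\gamma,F,G}$ with respect to the geodesic distance $\dr$ on $\cH_K^{++}$, in the same spirit as the uniqueness argument for $(\W_n)$ in Section~\ref{sec:uniqueness}, but being careful about two new difficulties specific to the $\gamma=0$ case: first, the map $h_{0,F,G}$ is only defined (and only makes the cone invariant) when $F$ and $G$ have full rank, which holds only $\theta$-a.s.\ rather than surely; second, an invariant measure for $Q_0$ lives on $E\times\cH_K^{++}$ but could a priori charge matrices $\Z$ with $\lmin(\Z)$ arbitrarily small, so I must rule out pathological behaviour near the boundary of the cone.

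First I would observe that, because $Q_0$ has $\theta$ as the $E$-marginal of any invariant measure (this follows from $\theta P=\theta$), and because Assumption~\ref{Ass2}(d) gives that $G^*F$ is non-singular, hence $F,G$ full rank, $P((\F,\G),\cdot)$-a.s.\ for $\theta$-a.e.\ $(\F,\G)$, the composition argument leading to \eqref{contraction} applies verbatim along $\theta$-typical trajectories. Concretely, writing $h_{0,F,G}=\tau_{G^*;0}\circ\iota\circ\tau_{F;I}\circ\iota$ on $\cH_K^{++}$ and invoking Lemma~\ref{conj-trans} together with the inversion-invariance \eqref{axa}, one gets for full-rank $F,G$ and any $W,W'\in\cH_K^{++}$ the bound
\begin{equation}
\dr\big(h_{0,F,G}(W),h_{0,F,G}(W')\big)\leq \frac{\|FWF^*\|}{\|FWF^*\|+1}\cdot\frac{\|FW'F^*\|}{\|FW'F^*\|+1}\,\dr(W,W')\, ,
\end{equation}
so that $h_{0,F,G}$ is at worst $1$-Lipschitz for $\dr$, and is \emph{strictly} contracting (ratio $<1$) on any set where $\|FWF^*\|$ is bounded. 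Thus the strategy is the standard one: given two invariant measures $\pi_0,\pi_0'$, couple two stationary processes $(\F_n,\G_n,\Z_n)$ and $(\F_n,\G_n,\Z_n')$ driven by the \emph{same} underlying Markov chain $(\F_n,\G_n)_{n\in\ZZ}$ (with law $\theta$ on coordinates), with $\Z_0\sim\pi_0(E\times\cdot\,|\,\cdot)$, $\Z_0'\sim\pi_0'(E\times\cdot\,|\,\cdot)$; then $\dr(\Z_n,\Z_n')$ is non-increasing and I want to show it converges to $0$ a.s., which by the same finite-dimensional-distribution argument as at the end of Section~\ref{sec:uniqueness} forces $\pi_0=\pi_0'$.

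To get the strict decrease I would use the rank-increase mechanism from the proof of Lemma~\ref{tightness}: after $K$ steps the matrices $\Z_K,\Z_K'$ both have full rank a.s., and more importantly there is $\eta>0$ and a set of positive probability (indeed, by Poincaré recurrence for the stationary chain, visited infinitely often along the trajectory) on which $\lmin(\Z_n)\geq\eta$, $\lmin(\Z_n')\geq\eta$ and simultaneously $\|\F_{n+1}\|$ is bounded — on such steps the contraction ratio above is bounded away from $1$ by a deterministic constant. A Borel–Cantelli / ergodic argument (the event ``good contraction step'' is shift-measurable with positive probability, hence occurs with asymptotic frequency bounded below) then shows $\prod$ of the ratios tends to $0$ a.s., hence $\dr(\Z_n,\Z_n')\to 0$. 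The main obstacle, and the place requiring the most care, is precisely this last point: controlling the interplay between the lower bound $\lmin\geq\eta$ (needed so that $F^*F$ acting on $Z$ does not degenerate the metric) and the upper bound on $\|F_{n+1}Z F_{n+1}^*\|$ (needed so the ratio is genuinely $<1$) uniformly enough to run an ergodic/Borel–Cantelli argument; this is where one must lean on Assumption~\ref{Ass2}(b)–(d) and on the tightness already established in Lemma~\ref{tightness} to produce a fixed ``good'' event of positive probability along $\theta$-typical trajectories. Once that is in place, the conclusion $\pi_0=\pi_0'$ is automatic from stationarity exactly as in Section~\ref{sec:uniqueness}.
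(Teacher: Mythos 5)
Your overall strategy — couple two $Q_0$-stationary chains through the same $(\F_n,\G_n)$ and drive $\dr(\Z_n,\Z_n')\to0$ by contraction in the geodesic metric — is the same as the paper's. But there are two substantive gaps between your plan and what's actually in the paper.

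First, your decomposition $h_{0,F,G}=\tau_{G^*;0}\circ\iota\circ\tau_{F;I}\circ\iota$ is problematic. The outer map $\tau_{G^*;0}$ has a non-square matrix $G^*\in\CC^{K\times N}$ and zero $S$-part: Lemma~\ref{conj-trans} requires $S\in\cH^{++}$, and the invariance \eqref{axa} requires a square \emph{invertible} matrix, so neither stated lemma covers $\tau_{G^*;0}$. (It is true that conjugation by a surjective rectangular map is $\dr$-nonexpansive, but you'd need to prove that.) The paper sidesteps this by writing $F=UT$ with $T=(F^*F)^{1/2}$, so that
\begin{equation}
h_{0,F,G}=\tau_{G^*F(F^*F)^{-1/2},\,G^*\Pi_F^\perp G}\circ\iota\circ\tau_{T;I}\circ\iota\, ,
\end{equation}
in which the outer $\tau$ now has a square invertible matrix (Assumption~\ref{Ass2}(d) gives $\det(G^*F)\neq0$) and a nonnegative $S$, so that \eqref{axa} and \eqref{x+s} directly show it is $\dr$-nonexpansive. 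Also note a sign slip: the contraction ratio in your displayed inequality should involve $\|FW^{-1}F^*\|$, not $\|FWF^*\|$, because the argument of $h_{0,F,G}$ is inverted before $\tau$ is applied; the paper's ratio is $\frac{\|\F_n\|^2\max(\|(\Z_{n-1}^{\pi^1})^{-1}\|,\|(\Z_{n-1}^{\pi^2})^{-1}\|)}{\cdot\,+1}$ rather than any product of two ratios.

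Second, and more importantly, the last step is where your route genuinely diverges from the paper's, and your proposal leaves precisely that step open (you yourself call it ``the main obstacle''). You want a Borel--Cantelli / Poincar\'e-recurrence argument: find a fixed ``good'' event of positive frequency (via Lemma~\ref{tightness}) on which $\lmin(\Z_n),\lmin(\Z_n')\geq\eta$ and $\|\F_{n+1}\|$ is bounded, so that the per-step ratio is $\leq 1-\delta$, and then argue the product of ratios vanishes. That program requires a recurrence/ergodicity statement for the \emph{coupled} chain $(\F_n,\G_n,\Z_n,\Z_n')$, which you have not constructed. The paper avoids all of this with a single application of H\"older's inequality: writing $\dr(\Z_n^{\pi^1},\Z_n^{\pi^2})\leq(\prod_{i=0}^{n-1}\xi_i)\dr(\Z_0^{\pi^1},\Z_0^{\pi^2})$, it bounds $\EE\prod_{i=0}^{n-1}\xi_i\leq\prod_{i=0}^{n-1}(\EE\xi_i^n)^{1/n}=\EE[\xi_0^n]$ and lets $\EE[\xi_0^n]\to0$ by dominated convergence, since $\xi_0\leq 1$ and $\xi_0<1$ a.s. This gives $\prod_{i}\xi_i\to0$ in probability directly, no hitting-time or recurrence argument needed. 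That H\"older trick is the decisive simplification missing from your plan; without it (or a substitute like the recurrence argument you sketch, carried to completion), the proposal does not close the proof.
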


\begin{proof} 
If $\pi\in\cM(E\times \cH_K^+)$ satisfies $\pi=\pi Q_0$ then $\pi=\pi Q_0^K$
and Lemma~\ref{tightness} yields that necessarily $\pi\in\cM(E\times
\cH_K^{++})$.  Let $\pi^1,\pi^2\in\cM(E\times \cH_K^{++})$ be two invariant
distributions for $Q_0$.  Since $\theta$ is the unique invariant distribution for $P$ by
assumption, necessarily $\pi^i(\cdot\times \cH_K^{++})=\theta(\cdot)$. 
Let $\X_0^{\pi^1} := (\F_0, \G_0, \Z_0^{\pi^1})$ and 
$\X_0^{\pi^2} := (\F_0, \G_0, \Z_0^{\pi^2})$ be two 
$E\times \cH_K^{++}$--valued random variables such that 
$\X_0^{\pi^i} \sim \pi^i$. Starting from $\X_0^{\pi^1}$ and $\X_0^{\pi^2}$,  
construct two Markov processes 
$(\X_n^{\pi^i} = ( \F_n, \G_n, \Z_n^{\pi^i}))_{n\in\NN}$ with the transition
kernel $Q_0$ for $i=1,2$ respectively. To show that $\pi^1 = \pi^2$, it will 
be enough to show that $\| \X_n^{\pi^1}  - \X_n^{\pi^1} \| \to 0$ in 
probability as $n\to\infty$, or equivalently, that 
$\dr(\Z_n^{\pi^1}, \Z_n^{\pi^2}) \to 0$ in probability. We use similar 
arguments and the same notations as in Section~\ref{sec:uniqueness}.

Recalling \eqref{hsemiUT} for $\gamma = 0$, and keeping in mind
that Assumption~\ref{Ass2}(d) yields that $\Z_n\in\cH_K^{++}$ a.s. and that
$\F_n$ has full rank a.s. for every $n\in\NN$, we have
\begin{equation} 
\Z_n^{\pi^i}=h_{0,\F_{n},\G_{n}}(\Z_{n-1}^{\pi^i}) 
= \tau_{\G_n^* \F_n (\F_n^* \F_n)^{-1/2},\, \G_n^* \Pi_{\F_n}^\perp \G_n} 
 \circ \iota 
 \circ \tau_{(\F_n^* \F_n)^{1/2}, I} \circ \iota (\Z_{n-1}^{\pi^i}) . 
\end{equation} 
Dealing with the terms $\tau_{(\F_n^* \F_n)^{1/2}, I}$ and 
$\tau_{\G_n^* \F_n (\F_n^* \F_n)^{-1/2},\, \G_n^* \Pi_{\F_n}^\perp \G_n}$ 
by Lemma~\ref{conj-trans} and Inequality~\eqref{x+s} respectively, we 
get 
\begin{equation} 
\dr( \Z_n^{\pi^1}, \Z_n^{\pi^2} ) \leq 
\frac{\| \F_n\|^2  
  \max(\| (\Z_{n-1}^{\pi^1})^{-1} \| , \| (\Z_{n-1}^{\pi^2})^{-1} \| )}
{\| \F_n\|^2 \max(\| (\Z_{n-1}^{\pi^1})^{-1} \| , 
  \| (\Z_{n-1}^{\pi^2})^{-1} \| ) + 1} \dr(\Z_{n-1}^{\pi^1},\Z_{n-1}^{\pi^2}) ,
\end{equation} 
which implies that, for any $n\geq 1$,  
\begin{equation} 
\label{boundDR0}
\dr(\Z_{n}^{\pi^1},\Z_{n}^{\pi^2}) \leq 
\Bigl( \prod_{i={0}}^{n-1} \xi_i \Bigr)
\dr(\Z_{0}^{\pi^1},\Z_{0}^{\pi^2}),
 \quad 
\xi_i := \frac{\| \F_{i+1}\|^2  \max(\| (\Z_{i}^{\pi^1})^{-1} \| , \| (\Z_{i}^{\pi^2})^{-1} \| )}
{\| \F_{i+1}\|^2 \max(\| (\Z_{i}^{\pi^1})^{-1} \| , \| (\Z_{i}^{\pi^2})^{-1} \| ) + 1} . 
\end{equation} 
By H\"older's inequality, we have 
\begin{equation} 
\EE \prod_{i=0}^{n-1} \xi_i \leq 
\prod_{i=0}^{n-1} (\EE\xi_i^n)^{1/n} =  
\EE\left[\left( 
\frac{\| \F_{1}\|^2  \max(\| (\Z_{0}^{\pi^1})^{-1} \| , \|
(\Z_{0}^{\pi^2})^{-1} \| )} {\| \F_{1}\|^2 \max(\| (\Z_{0}^{\pi^1})^{-1} \| ,
\| (\Z_{0}^{\pi^2})^{-1} \| ) + 1} 
 \right)^n\right]  . 
\end{equation} 
By dominated convergence, the rightmost term of these inequalities converges
to zero as $n\to\infty$, and thus $ \prod_{i=0}^{n-1} \xi_i \to 0$ in probability. It thus follows from \eqref{boundDR0} that 
$\dr( \Z_n^{\pi^1}, \Z_n^{\pi^2} ) \to 0$ in probability, which concludes the
proof. 
\end{proof}

\subsection{The last step for the proof of Theorem~\ref{th-Markov}} 

\begin{proof}[Proof of Theorem~\ref{th-Markov}] 
First, Corollary~\ref{cor:existQ0} and Lemma~\ref{le:uniqueQ0} show that $Q_0$
has a unique invariant measure, that we denote by $\pi_0$, and moreover that
$\pi_0\in\cM(E\times\cH_K^{++})$. Kolomorogov's existence theorem then yields
there exists a unique stationary Markov process $(\F_n,\G_n,\Z_n)_{n\in\ZZ}$ on
$E\times\cH_K^{++}$ with transition kernel $Q_0$, which is in particular
ergodic. Moreover, $(\Z_n)_{n\in\ZZ}$ satisfies the equation~\eqref{eq:Zn} by
definition of $Q_0$, which proves part (a) of the theorem. 

To prove (b), we claim that the family $\{ \pi_\gamma \}_{\gamma\in [0, 1]}$ is tight in 
$\cM(E\times \cH_K^+)$. Indeed, if $(F,G,Z) \sim \pi_\gamma$, then $Z=h_{\gamma,F,G}(Z)$ in law  and, since $\|h_{\gamma,F,G}(Z)\|\leq \|G\|^2+\gamma$ and $\pi_\gamma(\cdot\times\cH_K^{++})=\theta(\cdot)$ is independent on $\gamma$, the claim follows. As a consequence, $\pi_\gamma \to\zeta$ narrowly  for some $\zeta\in\cM(E\times\cH_K^{+})$ as $\gamma\to 0$ along a subsequence. By definition of $\pi_\gamma$, for any $f \in \Cb(E\times\cH_K^{++})$ we  have
\begin{equation} 
\pi_{\gamma}f =\pi_{\gamma}Q_{\gamma} f. 
\end{equation} 
The left hand side converges to $\zeta f $ as $\gamma\to 0$ by definition of $\zeta$, and the exact same lines of arguments as in the proof of Lemma~\ref{le:Feller} yield that the right hand side converges to $\zeta Q_{0} f$, showing that $\zeta=\zeta Q_0$. Since the invariant measure $\pi_0$ of $Q_0$ is unique, we thus have shown that $\pi_\gamma\to\pi_0$ narrowly as $\gamma\to 0$.

We finally go back to the identity~\eqref{Ihighsnr}, which can be rewritten as 
\begin{align}
\label{KeyI}
N{\mathcal I_\rho} + K\log\gamma & = \EE\log\det(\Z_{\gamma,0}+\F_1^*\F_1)\nonumber\\
& =  \int \log\det(\Z+F^*F)P((\F,\G),dF\times dG)\pi_\gamma(d\F\times d\G\times d\Z)\nonumber\\
& =  \int \log\det(\Z+F^*F)Q_\gamma((\F,\G,\Z),dF\times dG\times dZ)\pi_\gamma(d\F\times d\G\times d\Z)\nonumber\\
& =  \int \log\det(\Z+\F^*\F)\pi_\gamma(\F,\G,\Z)\nonumber\\
& = \EE\log\det(\Z_{\gamma,1}+\F_1^*\F_1).
\end{align}
Using Skorokhod’s representation theorem, we can introduce a probability space
$(\Omega', \mcF', \PP')$ and a family of $E \times \cH_K^{++}$-valued 
random variables 
$\{\U_{\gamma,i}=( \F_i', \G_i', \Z'_{\gamma,i}) 
 \, : \, \gamma\in [0,1],\; i\in\{0,1\} \}$  such that $\U_{\gamma,i} \sim \pi_\gamma$, $\Z'_{\gamma,1}=h_{\F_1',\G_1',\gamma}(\Z'_{\gamma,0})$ for every $\gamma\in[0,1]$ and  
 $\U_{\gamma,i} \to \U_{0,i}$  as $\gamma\to 0$ $\PP'$-a.s. This yields that,  as $\gamma\to 0$,
 \begin{equation} 
 \log\det(\Z'_{\gamma,1}+(\F'_1)^*\F'_1) \to \log\det(\Z'_{1,0}+(\F'_1)^* \F'_1),\qquad \PP'\text{-a.s.}
 \end{equation} 
 Moreover, using that 
$\|\Z_{\gamma,1}' \|=\|h_{\F_1',\G_1',\gamma}(\Z_{\gamma,0}') \| \leq \gamma + \| \G'_1  \|^2 
\leq 1 + \| \G'_1  \|^2$ and that $\log(1 + a + b) \leq \log(1+a) + \log(1+b)$ for any $a,b\geq 0$, we
also have 
\begin{equation} 
\log\det((\F'_1)^*\F'_1)\leq \log\det(\Z'_{\gamma,1}+(\F'_1)^*\F'_1) \leq   N \log( 1 + \| \G'_1 \|^2 + \| \F'_1 \|^2)
\end{equation} 
and thus 
\begin{equation} 
|\log\det(\Z'_{\gamma,1}+(\F'_1)^*\F'_1)|\leq  H(\F'_1,\G'_1) := | \log\det( (\F'_1)^* \F'_1) | 
      + N \log( 1 + \| \G'_1 \|^2) 
      + N \log( 1 + \| \F'_1 \|^2) . 
\end{equation} 
Since $(\F_1',\G_1')$ has law $\theta$ by construction, Assumption~\ref{Ass2}(b)-(c) yields that $\EE H(\F'_1,\G'_1) <\infty$ and thus, by dominated convergence, we obtain from \eqref{KeyI},
\begin{align}
\lim_{\gamma\to 0}N{\mathcal I_\rho} + K\log\gamma &=
 \lim_{\gamma\to 0} \EE\log\det(\Z_{\gamma,1}+\F_1^*\F_1) \nonumber \\
& =\lim_{\gamma\to 0} \EE\log\det(\Z'_{\gamma,1}+(\F'_1)^*\F'_1) \nonumber \\
&= \EE\log\det(\Z'_{0,1}+(\F'_1)^*\F'_1) \nonumber \\
&= \EE\log\det(\Z'_{0,0}+(\F'_1)^*\F'_1) \nonumber \\
&= \EE\log\det(\Z_{0}+\F_1^*\F_1),
\label{KeyII}
\end{align}
where we used a similar computation as in~\eqref{KeyI} for the fourth equality, and Theorem~\ref{th-Markov}-(b) is proven.

To establish Theorem~\ref{th-Markov}-(c), we follow the same strategy as in
the proof of Theorem~\ref{th-main}-(c):  Since the Markov chain
$(\F_n,\G_n,\Z_n)_{n\in\ZZ}$ is ergodic, we have 
\begin{equation} 
\label{kapoupaskap}
\kappa_\infty= \frac 1N \EE \ld(\Z_0+\F_1^* \F_1)
 = \lim_{n\to\infty}
    \frac{1}{Nn} \sum_{\ell=0}^{n-1}\ld(\Z_\ell+\F_{\ell+1}^*\F_{\ell+1})
  \quad\text{a.s.}
\end{equation} 
By using the same line of argument as in the proof of Lemma~\ref{le:uniqueQ0},
we obtain with a bound similar to~\eqref{boundDR0} and the arguments below
that $ \dr(\X_{n},\Z_{n}) \to 0$ in probability.  This implies in turn that
$\dr(\X_n + \F_{n+1}^* \F_{n+1},\Z_n + \F_{n+1}^* \F_{n+1})\leq 
\dr(\X_n, \Z_{n})\to0$, and thus, that 
$\ld(\X_n + \F_{n+1}^* \F_{n+1}) - \ld(\Z_{n} + \F_{n+1}^* \F_{n+1}) \to 0$ in 
probability. As a consequence, part (c) is obtained by taking a Ces\`aro 
average and~\eqref{kapoupaskap}. 
\end{proof}

\subsection{Proofs for Section~\ref{Examples}}
\label{proofExamples}

We shall need the following result, which follows from the fact that the zero
set of a non-zero polynomial of $d$ variables has zero measure for the Lebesgue
measure of $\RR^d$.

\begin{lemma}
\label{dens->fr} 
Let $X$ be a random complex $n\times n$ matrix whose distribution is absolutely continuous with respect to the Lebesgue 
measure on $\CC^{n\times n}\simeq \RR^{2n^2}.$ Then, $\PP( \rank(X) = n )= 1$. 
\end{lemma}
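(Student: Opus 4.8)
The plan is to reduce the full-rank statement to the vanishing of a single polynomial map, and then invoke the elementary fact (stated just before the lemma) that the zero set of a non-zero polynomial in finitely many real variables has Lebesgue measure zero. First I would identify $\CC^{n\times n}$ with $\RR^{2n^2}$ by writing each entry $X_{jk} = a_{jk} + \imath b_{jk}$, so that the $2n^2$ real coordinates $(a_{jk}, b_{jk})$ are the variables. The key observation is that $\rank(X) < n$ if and only if $\det(X) = 0$, and that $\det(X)$, regarded as a function of the real and imaginary parts of the entries, is a polynomial map $\RR^{2n^2} \to \RR^2$ (its real and imaginary parts $P(a,b)$ and $Q(a,b)$ are each real polynomials in the $2n^2$ variables). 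Hence
\begin{equation}
\{ X : \rank(X) < n \} = \{ (a,b) \in \RR^{2n^2} : P(a,b) = 0 \text{ and } Q(a,b) = 0 \} \subset \{ P = 0 \}.
\end{equation}

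Next I would check that $P$ is not the zero polynomial: evaluating at $X = I_n$ gives $\det(I_n) = 1$, so $P \not\equiv 0$ (indeed $P(a,b) = 1$ at the point corresponding to $X = I_n$). By the cited fact, the real-algebraic set $\{ P = 0 \}$ therefore has Lebesgue measure zero in $\RR^{2n^2}$. Since the distribution of $X$ is absolutely continuous with respect to this Lebesgue measure, it assigns zero mass to any Lebesgue-null set, and in particular
\begin{equation}
\PP( \rank(X) < n ) \leq \PP\big( (a,b) \in \{ P = 0 \} \big) = 0,
\end{equation}
which is the claim.

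I do not expect any genuine obstacle here; the only point requiring a line of care is the verification that $\det$, written out in the real coordinates, is an honest polynomial map and that it does not vanish identically — both of which are settled by the Leibniz expansion of the determinant and by the single evaluation at the identity matrix. Everything else is the direct application of the absolute-continuity hypothesis together with the stated fact about zero sets of polynomials.
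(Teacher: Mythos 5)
Your proof is correct and is exactly the argument the paper has in mind: it fleshes out the one-line remark preceding the lemma (rank deficiency is the vanishing of the determinant, a non-zero polynomial in the real and imaginary parts of the entries, whose zero set is Lebesgue-null). No issues.
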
 

We also need in this paragraph the following notations: Given a positive
integer $n$, we set $[n] :=\{0,\ldots, n-1\}$.  Given a matrix $X \in
\CC^{m\times n}$ and two sets of indices $J_1 \subset [m]$ and $J_2 \in [n]$,
we denote by $X^{J_1,J_2}$ the $|J_1|\times |J_2|$ submatrix of $X$ obtained by
keeping the rows of $X$ whose indices belong to $J_1$ and the columns of $X$
whose indices belong to $J_2$. We also write for convenience $X^{J_1, \cdot} :=
X^{J_1, [n]}$ and $X^{\cdot, J_2}: = X^{[m], J_2}$.  
Finally, we write $\log^-(x) = \min(\log x, 0)$ and 
$\log^+(x) = \max(\log x, 0)$.

\begin{proof}[Proof of Proposition~\ref{prop:com1}] 
We start with Assumption~\ref{Ass2}-(d). Using that $(U_n,V_n)$ and 
$(\F_{k},\G_k)_{k\leq n-1}$ are independent, it is enough to show that for any
$B,D\in  \CC^{N\times K}$, 
\begin{align}
\label{rangAR} 
& \PP \left[ \det( (V_n+D)^* (U_n+B) ) = 0 \right] = 0, \\
\label{rangAR2} 
&\forall v \in \CC^K \setminus \{ 0 \}, \qquad
\PP \left[ \Pi_{V_n+D}^\perp (U_n+B) v = 0 \right] = 0 .
\end{align} 

Letting $J:= [K]$ and $J^c := [N]\setminus [K]$, we have 
\begin{align} 
& \PP \left[ \det( (V_n+D)^* (U_n+B) ) = 0 \right] \nonumber \\
&= \PP \Bigl[ 
 \det( (V_n^{J,\cdot}+D^{J,\cdot})^* ( U_n^{J,\cdot} + B^{J,\cdot}) + 
    (V_n^{J^c,\cdot}+D^{J^c,\cdot})^* ( U_n^{J^c,\cdot} + B^{J^c,\cdot}) ) = 0 
 \Bigr]  . 
\label{PUV} 
\end{align} 
Since $U_n$ has a density (for Lebesgue), then for any invertible matrix $S \in \CC^{K\times K}$, we see that $S(U_n^{J,\cdot} + B^{J,\cdot})$ has a density.  Since
Lemma~\ref{dens->fr} yields that the random matrix $(V_n^{J,\cdot} + D^{J,\cdot})$ is
invertible a.s (it has a density),   the square matrix
$(V_n^{J,\cdot}+D^{J,\cdot})^* ( U_n^{J,\cdot} + B^{J,\cdot})$ has a density.
Recall that the convolution between an absolutely continuous probability and any
probability measure is absolutely continuous. Thus, since $(U_n^{J,\cdot},
V_n^{J,\cdot})$ and $(U_n^{J^c,\cdot}, V_n^{J^c,\cdot})$ are independent, the
matrix within the determinant at the right hand side of~\eqref{PUV} has a
density. Using Lemma~\ref{dens->fr} again, we obtain~\eqref{rangAR}. 

For any $v\in\CC^K \setminus \{ 0 \}$, the vector $w := (U_n+B) v$ is a 
random vector whose elements are independent and have probability densities.  
It results that for any matrix $C\in\CC^{N\times K}$,  we have
$\Pi_{C}^\perp w \neq 0$ a.s. Thus, 
$\PP \left[ \Pi_{V_n+D}^\perp (U_n+B) v = 0 \right] = 0$ by the Fubini-Tonelli
theorem, and~\eqref{rangAR2} is obtained. 

We now establish the truth of Assumption~\ref{Ass2}-(c). Write 
$\F_n = \begin{bmatrix} \f^0_n \ \cdots \ \f^{K-1}_n \end{bmatrix}$, where
$\f^k_n$ is the $k^{\text{th}}$ column of the matrix $\F_n$. For $k \in [K-1]$,
let $J_k = \{ k+1, \ldots, K-1 \}$. 
Applying, \emph{e.g.}, a Gram-Schmidt process to the successive columns 
$\f^0_n, \ldots, \f^{K-1}_n$, setting $\F^{\cdot, \emptyset}_n = 0 \in \CC^N$, 
and using the obvious inequality $\log^+ x \leq x$ for $x > 0$, we get that 
\begin{align}
\EE \left| \log \det \F_n^* \F_n \right| &= 
  \EE\left| \sum_{k=0}^{K-1} 
  \log (\f_n^k)^* \Pi_{\F^{\cdot, J_k}_n}^\perp \f_n^k \right| 
  \leq \EE \sum_{k=0}^{K-1} 
   \left| \log (\f_n^k)^* \Pi_{\F^{\cdot, J_k}_n}^\perp \f_n^k \right| 
   \nonumber \\
 &\leq \sum_{k=0}^{K-1} 
  \EE \left| \log^-((\f_n^k)^* \Pi_{\F^{\cdot, J_k}_n}^\perp \f_n^k) \right| 
 + \sum_{k=0}^{K-1} \EE \| \f_n^k \|^2 \nonumber \\
 &\leq \sum_{k=0}^{K-1} 
  \EE \left| \log^-((\f_n^k)^* \Pi_{\F^{\cdot, J_k}_n}^\perp \f_n^k) \right| 
 + C, 
\end{align} 
where $C < \infty$ since Assumption~\ref{Ass2}-(b) is satisfied.  
Fix $k \in [K]$. In the remainder of the proof, ``conditional'' refers to
a conditionning on $(\F_{n-1}, \G_{n-1}, u^{k+1}_n, \ldots, u^{K-1}_n)$. All 
the bounds are constants that only depend on the bound on the densities of the 
elements of $U_n$.  

The vector $\f_n^k$ can be written as $\f^k_n = \dd^k_{n-1} + u^k_n$, where 
$\dd^k_{n-1}$ is $(\F_{n-1}, \G_{n-1})$-measurable, and where $u^k_n$ is the 
$k^{\text{th}}$ column of $U_n$. By the assumptions on $(U_n)$, the elements 
of $\f^k_n$ are conditionally independent and have bounded densities. 
If $k < K-1$, make a 
$(\F_{n-1}, \G_{n-1}, u^{k+1}_n, \ldots, u^{K-1}_n)$-measurable
choice of a unit-norm vector $\p^k$ which is orthogonal to the subspace 
$\colspan \F^{\cdot, J_k}_n$, otherwise, take $\p^k$ as an arbitrary constant 
unit-norm vector. Since $| \log^-(\cdot)|$ is a nonincreasing function, 
$| \log^-((\f_n^k)^* \Pi_{\F^{\cdot, J_k}_n}^\perp \f_n^k) | 
\leq \left| \log^-(|\ps{\p^k, \f_n^k}|^2) \right|$.  
Since $\p^k = \begin{bmatrix} \p^k_0, \ldots, \p^k_{N-1} \end{bmatrix}^\T$ has 
unit-norm, it has at least one element, say $\p^k_0$, such that 
$| \p^k_0 | \geq 1/ \sqrt{N}$. Writing 
$\f^k_n = \begin{bmatrix} \f^k_{n,0}, \ldots, \f^k_{n,N-1} \end{bmatrix}^\T$, 
we get that the conditional density of $\p^k_0 \f^k_{n,0}$ is bounded, and 
by doing a simple calculation involving density convolutions, we finally 
obtain that $\ps{\p^k, \f_n^k}$ has a bounded conditional density. 
Now, it is easy to see that if $X$ is a complex random variable with a density 
bounded by a constant $C$ then $\EE | \log^-(|X|^2) | \leq C\pi$. 
This shows that  
$\EE \left| \log^-((\f_n^k)^* \Pi_{\F^{\cdot, J_k}_n}^\perp \f_n^k) \right| 
 < \infty$ for each $k\in [K]$, which completes the proof. 
\end{proof}

To prove Proposition~\ref{prop:com2}, we first need the following lemma.
\begin{lemma}
\label{PiG0} 
Given any positive integers $m,n,r$ satisfying $r \leq n \leq m$, let 
$X$ be a ${m\times n}$ matrix with rank $n$, write
$X = \begin{bmatrix} Y^\T \ \widetilde Y^\T \end{bmatrix}^\T$ where  
$Y$ is a $r\times n$ matrix, and assume that $\rank(Y) = r$. Then 
$\Pi_X^{[r],[r]} = I$ \emph{iff} 
$\colspan(\widetilde Y) = \colspan(\widetilde Y A)$ for some matrix $A$ satisfying  $\colspan(A)=\ker Y$. 
\end{lemma}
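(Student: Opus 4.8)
The plan is to reduce the identity $\Pi_X^{[r],[r]}=I_r$ successively to a membership statement about $\colspan(X)$, then to a linear solvability statement, and finally to recognise the latter as the stated column–span equality.

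\textbf{Step 1 (geometric reduction).} Recall that $\Pi_X$ is the orthogonal projector of $\CC^m$ onto $\colspan(X)$, so that $0\preceq\Pi_X\preceq I_m$, and that $\Pi_X^{[r],[r]}$ is the principal block on the first $r$ coordinates. Compressing this inequality to the coordinate subspace $\{(v,0)^\T:v\in\CC^r\}$ gives $0\preceq\Pi_X^{[r],[r]}\preceq I_r$ together with $v^*(I_r-\Pi_X^{[r],[r]})v=\big\|(I_m-\Pi_X)(v,0)^\T\big\|^2$ for every $v\in\CC^r$, where $(I_m-\Pi_X)$ is the orthogonal projector onto $\colspan(X)^\perp$. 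Hence $\Pi_X^{[r],[r]}=I_r$ \emph{iff} $(v,0)^\T\in\colspan(X)$ for all $v\in\CC^r$, i.e.\ iff the first $r$ standard basis vectors of $\CC^m$ belong to $\colspan(X)$.

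\textbf{Step 2 (columnwise decoding).} Writing $X=\begin{bmatrix}Y\\\widetilde Y\end{bmatrix}$, having $e_i\in\colspan(X)$ (for $e_i$ the $i$-th coordinate vector of $\CC^m$) means precisely that there is $c_i\in\CC^n$ with $Yc_i=e_i$ and $\widetilde Yc_i=0$, where on the left $e_i$ now denotes the $i$-th coordinate vector of $\CC^r$. Stacking the $c_i$ into a matrix $C\in\CC^{n\times r}$, the condition of Step 1 reads: there exists $C$ with $YC=I_r$ and $\widetilde YC=0$.

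\textbf{Step 3 (identification with the span condition).} I would then show that this solvability is equivalent to $\colspan(\widetilde Y)=\widetilde Y(\ker Y)$. For the direct implication, given such a $C$, any $x\in\CC^n$ satisfies $x-C(Yx)\in\ker Y$ (apply $Y$) and $\widetilde Y\big(x-C(Yx)\big)=\widetilde Yx$ (use $\widetilde YC=0$), so $\colspan(\widetilde Y)\subseteq\widetilde Y(\ker Y)$, the reverse inclusion being automatic. For the converse, $\rank(Y)=r$ provides $b_i\in\CC^n$ with $Yb_i=e_i$; the hypothesis provides $z_i\in\ker Y$ with $\widetilde Yz_i=\widetilde Yb_i$; then $c_i:=b_i-z_i$ satisfies $Yc_i=e_i$ and $\widetilde Yc_i=0$, so $C:=[c_1\,\cdots\,c_r]$ works. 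Finally, $\widetilde Y(\ker Y)=\colspan(\widetilde Y A)$ for any $A$ with $\colspan(A)=\ker Y$, and this span does not depend on the choice of such an $A$; combined with the always-valid inclusion $\widetilde Y(\ker Y)\subseteq\colspan(\widetilde Y)$, this turns the condition into exactly the one in the statement.

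\textbf{Main obstacle.} There is essentially none at the level of difficulty: the argument is elementary linear algebra. The only points requiring care are the bookkeeping of the ambient spaces ($\CC^m$, $\CC^n$, $\CC^r$, $\CC^{m-r}$) and noticing at the outset that a principal block of an orthogonal projector satisfies $\Pi_X^{[r],[r]}\preceq I_r$, so that equality to $I_r$ can be detected directionally and thereby converted into membership of coordinate vectors in $\colspan(X)$; the hypotheses $\rank(X)=n$ and $\rank(Y)=r$ enter only to make the relevant systems solvable.
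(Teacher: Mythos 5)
Your proof is correct, and it takes a genuinely different route from the paper's. The paper's argument is computational: it writes out $\Pi_X^{[r],[r]} = Y(Y^*Y + \widetilde Y^*\widetilde Y)^{-1}Y^*$ explicitly, performs a singular value decomposition $Y = U[\Lambda\ 0][V_1\ V_2]^*$ with $\colspan(V_2)=\ker Y$, applies the Schur complement identity~\eqref{Schur2} to the inner inverse, and obtains $\Pi_X^{[r],[r]} = U\Lambda\bigl(\Lambda^2 + V_1^*\widetilde Y^*\Pi_{\widetilde Y V_2}^\perp \widetilde Y V_1\bigr)^{-1}\Lambda U^*$, so that equality to $I$ is read off as the vanishing of $V_1^*\widetilde Y^*\Pi_{\widetilde Y V_2}^\perp\widetilde Y V_1$, which it then translates into column-span inclusions. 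Your approach is instead coordinate-free and more elementary: you observe that since $\Pi_X$ is a projector, the positive semidefinite compression $I_r - \Pi_X^{[r],[r]}$ measures the squared distance of the first $r$ coordinate vectors to $\colspan(X)$, so $\Pi_X^{[r],[r]}=I$ is equivalent to these coordinate vectors lying in $\colspan(X)$; this becomes the solvability of $YC=I_r$, $\widetilde YC=0$, which you then show (in both directions, using $\rank Y = r$ for surjectivity of $Y$ in the converse) is the same as $\colspan(\widetilde Y)=\widetilde Y(\ker Y)=\colspan(\widetilde YA)$. Your route avoids SVD and Schur complements entirely and makes the geometry of the statement transparent; the paper's route has the side benefit of producing the explicit formula for $\Pi_X^{[r],[r]}$ in terms of $\Lambda$, $V_1$, $V_2$, which could be reused elsewhere but is not needed for the lemma.
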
 
\begin{proof}
The formula $\Pi_X = X (X^* X)^{-1} X^*$ yields
$\Pi_X^{[r],[r]} = Y ( Y^* Y + \widetilde Y^* \widetilde Y )^{-1} Y^*$. Performing a singular value decomposition, 
\begin{equation} 
Y = U \begin{bmatrix} \Lambda & 0 \end{bmatrix} 
\begin{bmatrix} V_1^* \\ V_2^* \end{bmatrix} ,
\end{equation} 
with $\Lambda$ the diagonal  
$r\times r$ matrix of singular values and $V_2$ satisfying $\colspan(V_2)=\ker Y$,
and using Schur's complement formula \eqref{Schur2},  we obtain
\begin{align} 
\Pi_X^{[r],[r]} &= 
 U \begin{bmatrix} \Lambda & 0 \end{bmatrix} 
\left( \begin{bmatrix} \Lambda^2 \\ & 0 \end{bmatrix} 
  + \begin{bmatrix} V_1^* \\ V_2^* \end{bmatrix} \widetilde Y^*  
   \widetilde Y\begin{bmatrix} V_1 & V_2 \end{bmatrix} \right)^{-1} 
 \begin{bmatrix} \Lambda \\ 0 \end{bmatrix} U^* \nonumber \\
 &= 
 U \Lambda \left( \Lambda^2 + V_1^* \widetilde Y^* ( I - \widetilde Y V_2 
(V_2^* \widetilde Y^* \widetilde Y V_2)^{-1} V_2^* \widetilde Y^* ) 
\widetilde Y V_1 \right)^{-1} \Lambda U^* \nonumber \\
&= 
U \Lambda \left( \Lambda^2 + V_1^* \widetilde Y^* 
\Pi_{\widetilde Y V_2}^\perp \widetilde Y V_1 \right)^{-1} \Lambda U^* . 
\end{align} 
This expression shows that $\Pi_X^{[r],[r]} = I$ \emph{iff} 
$V_1^* \widetilde Y^* \Pi_{\widetilde Y V_2}^\perp \widetilde Y V_1 = 0$. We 
then have 
\begin{align} 
V_1^* \widetilde Y^* \Pi_{\widetilde Y V_2}^\perp \widetilde Y V_1 = 0 
&\Leftrightarrow \colspan(\widetilde Y V_1 V_1^* \widetilde Y^*) \subset 
\colspan (\widetilde Y V_2 V_2^* \widetilde Y^*) \nonumber \\
&\Leftrightarrow 
\colspan(\widetilde Y \widetilde Y^*) \subset 
\colspan (\widetilde Y V_2 V_2^* \widetilde Y^*) \nonumber \\
&\Leftrightarrow 
\colspan(\widetilde Y ) = 
\colspan (\widetilde Y V_2 ) ,  
\end{align} 
which is the required result. 
\end{proof} 

\begin{proof}[Proof of Proposition~\ref{prop:com2}] 

Let us prove that Assumption~\ref{Ass2}-(d) holds. The recursive equation \eqref{defCn} satisfied by $(C_n)_{n\in\ZZ}$ yields,
for any $\ell \in [L-1]$ and  $k \in [L]$,
\begin{align}
c_{nL+\ell, k} &= H_k c_{nL+\ell-1, k} + u_{nL+\ell,k}  \nonumber \\
 &= H_k^2 c_{nL+\ell-2, k} + H_k u_{nL+\ell-1, k} + u_{nL+\ell,k} \nonumber \\
 &= \cdots \nonumber \\
 &= H_k^{\ell+1} c_{nL-1,k} + \sum_{i=0}^\ell H_k^i u_{nL+\ell-i, k} 
\end{align} 
where  $U_n =: \begin{bmatrix} u_{n,0}^\T \ \cdots \ u_{n,L}^\T \end{bmatrix}^\T$, 
the $u_{n,\ell}$'s being $R\times T$ matrices. Notice that the $c_{nL-1, k}$ and the $u_{nL+\ell-i, k}$ terms in the
rightmost term above are respectively $(\F_{n-1}, \G_{n-1})$-measurable and
independent from $(\F_{n-1}, \G_{n-1})$.  Plugging these equations in the
expressions for $\F_{n}$ and $\G_{n}$, we obtain 
\begin{align} 
\F_{n} &= \begin{bmatrix} 
u_{nL, L} & \cdots  & \cdots & u_{nL, 1} \\    
    & u_{nL+1,L} + H_L u_{nL,L}  & & \vdots   \\ 
    &   & \ddots   & \vdots                    \\ 
    &   &   &  u_{nL+L-1,L} + \sum_{i=1}^{L-1} H_L^i u_{nL+L-1-i,L} 
\end{bmatrix}  + B_{n-1} \nonumber \\
&= : Q_{n} + B_{n-1}, 
\label{struct-F} 
\end{align} 
and 
\begin{align} 
\G_{n} &= \begin{bmatrix} 
u_{nL, 0}        \\ 
 \vdots & u_{nL+1, 0} + H_0 u_{nL, 0} \\
  &   &  \ddots  \\ 
  & \cdots   &  \cdots  &  
     u_{nL+L-1, 0} +\sum_{i=1}^{L-1} H_0^i u_{nL+L-1-i, 0} 
\end{bmatrix} 
 + D_{n-1}  \nonumber \\
&=: S_{n} + D_{n-1}, 
\end{align} 
where the matrices $B_{n-1}$ and $D_{n-1}$ are $(\F_{n-1}, \G_{n-1})$-measurable
random matrices which are block-upper triangular and block-lower triangular
respectively, with $R\times T$ blocks (the exact expressions of these matrices
are irrelevant). Furthermore, the matrices $Q_n$ and $S_n$ are independent 
of $(\F_{n-1}, \G_{n-1})$. Thus, the proposition will be proven if we show that 
for all constant block-upper triangular matrices $B \in \CC^{LR\times LT}$ and 
all constant block-lower triangular matrices $D \in \CC^{LR\times LT}$ with
$R\times T$ blocks, 
\begin{align} 
\label{rangmultipath} 
&\PP \left[ \det( (S_{n}+D)^* (Q_{n}+B) ) = 0 \right] = 0,  \\
\label{rangmultipath2} 
&\forall v \in \CC^{LT} \setminus \{ 0 \}, \qquad
\PP \left[ \Pi_{S_{n}+D}^\perp (Q_{n}+B) v = 0 \right] = 0 .
\end{align} 
The matrix $(S_{n}+D)^* (Q_{n}+B)$ is a square $LT\times LT$ block-upper
triangular matrix with $T\times T$ blocks. Using Lemma~\ref{dens->fr} as in the
proof of Proposition~\ref{prop:com1},  one can check that all the diagonal
blocks of this matrix are a.s. invertible, and~\eqref{rangmultipath} is proven. 

To establish~\eqref{rangmultipath2}, we set $J_\ell := \{ \ell R, \ldots, \ell R  + R - 1\}$ and prove that
\begin{equation}
\label{projneq0} 
\forall \ell \in [L], \qquad
(\Pi_{S_{n}+D}^\perp)^{\cdot, J_\ell} \neq 0 \quad 
\text{a.s.} 
\end{equation} 
Indeed, given 
$v = [ v_0^\T, \ldots, v_{L-1}^\T ]^\T \in \CC^{LT} \setminus \{ 0 \}$ with
$v_i \in \CC^{T}$, let $k := \max \{ i \in [L] \, : \, v_i \neq 0 \}$. An inspection of~\eqref{struct-F} reveals that 
\begin{equation} 
(Q_{n} + B) v = \begin{bmatrix}  \vdots \\ 0 \\ 
 u_{nL+k,L} v_k \\ 0 \\ \vdots \end{bmatrix} + a ,  
\end{equation}  
for a random vector $a$ which is independent from $u_{nL+k,L}$. With this at
hand,  we see that
\begin{equation} 
\Pi_{S_{n}+D}^\perp (Q_{n}+B) v = 
   (\Pi_{S_{n}+D}^\perp)^{\cdot, J_k} u_{nL+k,L} v_k + 
 \Pi_{S_{n}+D}^\perp a \, .  
\end{equation} 
Since $\Pi_{S_{n}+D}^\perp$ and $u_{nL+k,L}$ are independent and 
$u_{nL+k,L} v_k$ has a density, \eqref{rangmultipath2} 
follows from~\eqref{projneq0}.

To complete the proof of that Assumption~\ref{Ass2}-(d) holds true, we now turn to the proof of  \eqref{projneq0}. We use the equivalence
$(\Pi_{S_{n}+D}^\perp)^{\cdot, J_\ell} = 0 \Leftrightarrow 
(\Pi_{S_{n}+D})^{J_\ell, J_\ell} = I$. Let us write 
\begin{equation} 
S_{n} + D = \begin{bmatrix} \ \widetilde Y_1 \ \\ \ Y \ \\ \ \widetilde Y_2 \ 
  \end{bmatrix} ,
\end{equation} 
where $Y = (S_{n} + D)^{J_\ell,\cdot} \in \CC^{R\times LT}$, and  set 
\begin{equation} 
\widetilde Y := 
 \begin{bmatrix} \ \widetilde Y_1 \ \\ \ \widetilde Y_2 \ \end{bmatrix} 
 \in \CC^{(L-1)R \times LT} . 
\end{equation} 
Since $\rank((\Pi_{S_n+D})^{J_\ell, J_\ell}) \leq \rank(Y)$, then if  
$\rank(Y) < R$ we have $(\Pi_{S_n+D})^{J_\ell, J_\ell} \neq I$. Assume 
$\rank(Y) = R$. Then $\dim \ker(Y) = LT - R$. By Lemma~\ref{PiG0}, 
$(\Pi_{S_n+D})^{J_\ell, J_\ell} = I$ implies 
$\rank \widetilde Y = \dim (\widetilde Y (\ker Y))$. Observe that 
$\dim (\widetilde Y (\ker Y)) \leq LT - R$.  For $m \in [L]$, let
$J'_m := \{ mR, \ldots, mR + T -1\}$ and  
$\widetilde J_\ell := \cup_{m \in [L]\setminus \{\ell\}} J'_m$. Then, 
$(S_n + D)^{\widetilde J_\ell, \cdot}$ is a submatrix of $\widetilde Y$.  
But thanks to the block-triangular stucture of $S_n + D$, one can check that 
$(S_n + D)^{\widetilde J_\ell, \cdot}$ has a block-echelon form, and its 
diagonal blocks $\{(S_n + D)^{J'_m,J'_m} \}_{m\neq \ell}$ are all a.s. invertible.  Thus, $\rank (S_n + D)^{\widetilde J_\ell, \cdot} = (L-1)T$ 
a.s.  Consequently, $\rank(\widetilde Y) \geq (L-1) T > LT - R 
 \geq \dim(\widetilde Y(\ker Y))$ a.s. which shows
that~\eqref{projneq0} holds true, and therefore, that 
Assumption~\ref{Ass2}-(d) is verified. 

We now turn to Assumption~\ref{Ass2}-(c). Getting back to 
Equation~\eqref{struct-F}, write 
\begin{equation} 
B_{n-1} = \begin{bmatrix} B_{n-1,0} & \times & \times \\ & \ddots &\times \\
  0 & & B_{n-1,L-1} \end{bmatrix} , 
\end{equation} 
where the $B_{n-1,\ell}$ are the $R\times T$ diagonal blocks of $B_{n-1}$. 
Defining $J := \{ 0, \ldots, T-1 \} \cup \{ R, \ldots, R+T-1 \} \cup \cdots 
 \cup \{ (L-1)R, \ldots, (L-1)R + T-1 \}$, we observe from 
Equation~\eqref{struct-F} that 
\begin{equation} 
\F_n^{J,\cdot} = 
 \begin{bmatrix} 
u_{nL, L}^{[T],\cdot}  + B_{n-1,0}^{[T],\cdot} & \times & \times \\    
 & \ddots & \times \\
 0 & & 
     u_{nL+L-1,L}^{[T],\cdot} + 
  ( B_{n-1,L-1} + \sum_{i=1}^{L-1} H_L^i u_{nL+L-1-i,L} )^{[T],\cdot} 
\end{bmatrix} 
\end{equation}  
is a square upper block-triangular matrix with $T\times T$ blocks. Moreover, 
the $\ell^{\text{th}}$ diagonal block of this matrix is the sum of 
$u_{nL+\ell,L}^{[T],\cdot}$ and a 
$(\F_{n-1}, \G_{n-1}, u_{nL}, \ldots, u_{nL+\ell-1})$-measurable term that we
denote by $\dd_{n,\ell}$. Now, since 
\begin{equation} 
(1 + \| \F_n \|^2) I 
> \F_n^* \F_n \geq 
  (\F_n^{J,\cdot})^*\F_n^{J,\cdot}
\end{equation} 
in the Hermitian semidefinite ordering, it holds that 
\begin{equation} 
LT \log (1 + \| \F_n \|^2) > \log\det(\F_n^* \F_n) \geq 
 \log\det( (\F_n^{J,\cdot})^*\F_n^{J,\cdot}), 
\end{equation} 
thus, 
\begin{equation} 
\EE |\log\det(\F_n^* \F_n)| < 
 \EE | \log\det( (\F_n^{J,\cdot})^*\F_n^{J,\cdot}) | + LT \,\EE \| \F_n \|^2 
 \leq 
 \EE | \log\det( (\F_n^{J,\cdot})^*\F_n^{J,\cdot}) | + C, 
\end{equation} 
where $C < \infty$ since Assumption~\ref{Ass2}-(b) is verified. Moreover, 
\begin{align} 
\EE | \log \det ((\F_n^{J,\cdot})^*\F_n^{J,\cdot}) | 
 &= \EE \Bigl| \sum_{\ell=0}^{L-1} 
  \log \det (u_{nL+\ell,L}^{[T],\cdot} + \dd_{n,\ell}) 
    (u_{nL+\ell,L}^{[T],\cdot} + \dd_{n,\ell})^* \Bigr|  \nonumber \\
 &\leq \sum_{\ell=0}^{L-1} 
  \EE \Bigl| \log \det (u_{nL+\ell,L}^{[T],\cdot} + \dd_{n,\ell}) 
    (u_{nL+\ell,L}^{[T],\cdot} + \dd_{n,\ell})^* \Bigr| , 
\end{align} 
and the summands in this last expression can be dealt with as in the
last part of the proof of Proposition~\ref{prop:com1}. 
The main distinctive feature of the proof here is that when we deal with the 
$\ell^{\text{th}}$ summand and when it comes to manipulate the conditional 
densities, we need to condition on 
$(\F_{n-1}, \G_{n-1}, u_{nL}, \ldots, u_{nL+\ell-1})$. This concludes the 
proof of Proposition~\ref{prop:com2}. 
\end{proof}

\section{Proof of Proposition~\ref{prop:mp}} 
\label{sec:RMT}

The expression of Shannon's mutual information per component given by
Theorem~\ref{th-main} provides a means of recovering the large random matrix
regime when $K,N\to\infty$ with $K/N\to\gamma\in(0,\infty)$ in a general
setting. We present a general result, then we particularize it to the setting
of Proposition~\ref{prop:mp}: 
\begin{lemma}
\label{le:RMT}  Under Assumption~\ref{Ass1}, if we introduce  for any $m\leq n$,
\begin{equation} 
\mathring \H_{m,n} := 
 \begin{bmatrix} \G_{m}   &          &     &  \F_m \\ 
                 \F_{m+1} & \G_{m+1} \\
                           & \ddots & \ddots \\
                           &        & \F_n & \G_n \end{bmatrix} ,
\end{equation} 
then we have as $M\to\infty$,
\begin{equation} 
\label{IM} 
\mathcal I_\rho = \frac1{(M+1)N}\,
\EE \ld (I+\rho \,\mathring \H_{0,M} \mathring \H_{0,M}^* ) +\mathcal O( 1/M)
\end{equation} 
where $\mathcal O( 1/M)$ is uniform in $K,N$.
\end{lemma}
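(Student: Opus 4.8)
The plan is to reduce everything to two estimates: (i) $\mathring\H_{0,M}\mathring\H_{0,M}^*$ differs from $\H_{0,M}\H_{0,M}^*$ (with $\H_{0,M}$ as in \eqref{Hmn}) by a perturbation of rank $\mathcal O(\min(N,K))$ and controlled norm, and (ii) $\tfrac{1}{(M+1)N}\EE\ld(I+\rho\,\H_{0,M}\H_{0,M}^*)\to\mathcal I_\rho$ with an $\mathcal O(1/M)$ rate. For (i), write $\H_{0,M}=[\,e_0\F_0\mid\H^{\flat}\,]$, where $\H^{\flat}$ is the $(M+1)N\times(M+1)K$ block lower-bidiagonal matrix obtained by deleting the first block-column of $\H_{0,M}$ and $e_0$ is the injection onto the first block; then $\H_{0,M}\H_{0,M}^*=e_0\F_0\F_0^*e_0^*+\H^{\flat}(\H^{\flat})^*$, whereas $\mathring\H_{0,M}=\H^{\flat}+E$ with $E=e_0\F_0 e_M^*$ the matrix carrying $\F_0$ in its top-right $N\times K$ block. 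Since $EE^*=e_0\F_0\F_0^*e_0^*$, the corner terms cancel and $\Delta:=\mathring\H_{0,M}\mathring\H_{0,M}^*-\H_{0,M}\H_{0,M}^*=E(\H^{\flat})^*+\H^{\flat}E^*$ has only the two off-diagonal blocks $\F_0\G_M^*$ and $\G_M\F_0^*$, so $\rank(\Delta)\le 2\min(N,K)$ and $\|\Delta\|\le\|\F_0\|\,\|\G_M\|$.

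For the first estimate I would then invoke that, for Hermitian $A,B\succeq0$, $|\ld(I+\rho B)-\ld(I+\rho A)|\le\rank(B-A)\,\log(1+\rho\|B-A\|)$ (proved by Weyl interlacing, or by decomposing $B-A$ into its positive and negative parts and bounding $\ld$ of $I+\rho(I+\rho A)^{-1/2}(\cdot)(I+\rho A)^{-1/2}$, whose non-unit eigenvalues are $\le\rho\|B-A\|$). Combined with $\log(1+\rho uv)\le\log(1+\tfrac\rho2 u^2)+\log(1+\tfrac\rho2 v^2)$, stationarity, and \eqref{momFG} (the weaker bound of Remark~\ref{weak-mom} would suffice), this gives $\tfrac{1}{(M+1)N}\big|\EE\ld(I+\rho\,\mathring\H_{0,M}\mathring\H_{0,M}^*)-\EE\ld(I+\rho\,\H_{0,M}\H_{0,M}^*)\big|\le\tfrac{2}{M+1}\big(\EE\log(1+\tfrac\rho2\|\F_0\|^2)+\EE\log(1+\tfrac\rho2\|\G_0\|^2)\big)$, which is $\mathcal O(1/M)$ uniformly in $K,N$.

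It then remains to estimate $\tfrac{1}{(M+1)N}\EE\ld(I+\rho\,\H_{0,M}\H_{0,M}^*)-\mathcal I_\rho$. By the telescoping identity \eqref{I0nid} from the proof of Theorem~\ref{th-main}(b), $\EE\ld(I+\rho\,\H_{0,M}\H_{0,M}^*)=\sum_{i=0}^M a_i$ with $a_i:=\EE\xi_{0,i}$, and by \eqref{asConvXi}--\eqref{prelimComp} and stationarity $a_i\to N\mathcal I_\rho$; moreover, by stationarity $a_i=\EE\ld(I+\rho\,\G_0\G_0^*+\rho\,\F_0 Y_i\F_0^*)$ with $Y_i:=[(I+\rho\,\H_{-i,-1}^*\H_{-i,-1})^{-1}]_{\Box_K}\in\cH_K^{++}$, and $Y_i\to\W_{-1}$ by Lemma~\ref{le:cvg-resolv}. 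Unrolling the recursion $\psi_{\F,\G}$ and using \eqref{axa}--\eqref{x+s} and Lemma~\ref{conj-trans} exactly as in Section~\ref{sec:uniqueness} — but with the Riemannian distance $\dr$ replaced by the Thompson metric $d_T(X,Y)=\|\log(Y^{-1/2}XY^{-1/2})\|$, which enjoys the same invariance and the same contraction under addition of a positive term and satisfies $d_T(I,\W_{-i})\le\log(1+\rho\|\G_{-i}\|^2)$ — one gets $d_T(Y_i,\W_{-1})\le\Lambda_i:=\big(\prod_{j=-i}^{-1}\tfrac{\rho\|\G_j\|^2}{\rho\|\G_j\|^2+1}\big)\log(1+\rho\|\G_{-i-1}\|^2)$, a quantity depending only on the channel. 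Since, after a Sylvester-type reduction, the matrices being compared agree with the identity off a subspace of dimension $\le\min(N,K)=\rank(\F_0)$, the relation $d_T(Y_i,\W_{-1})\le\Lambda_i$ (equivalently $e^{-\Lambda_i}\W_{-1}\preceq Y_i\preceq e^{\Lambda_i}\W_{-1}$) together with operator monotonicity of $\log$ gives $|a_i-N\mathcal I_\rho|\le\min(N,K)\,\EE\Lambda_i$, hence $\big|\tfrac{1}{(M+1)N}\EE\ld(I+\rho\,\H_{0,M}\H_{0,M}^*)-\mathcal I_\rho\big|\le\tfrac{1}{M+1}\sum_{i=0}^M\EE\Lambda_i$. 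Adding the two estimates yields \eqref{IM}.

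The main obstacle — indeed the only place where more than Assumption~\ref{Ass1} is genuinely used — is this last bound: one always has $\EE\Lambda_i\to0$ (the product tends to $0$ a.s.\ by the ergodic theorem, and $\EE\log(1+\rho\|\G_0\|^2)<\infty$ by \eqref{momFG}), so under Assumption~\ref{Ass1} the right-hand side is merely $o(1)$, uniformly in $K,N$; to upgrade it to $\mathcal O(1/M)$ one needs $\sum_{i\ge0}\EE\Lambda_i<\infty$, i.e.\ a true rate in this Cesàro convergence. This holds under any mild decorrelation of the channel, and in particular when $(\F_n,\G_n)_{n\in\ZZ}$ is i.i.d., where $\EE\Lambda_i$ decays geometrically — the case relevant to Proposition~\ref{prop:mp}. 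A secondary point to watch is that no spurious factor of $N$ or $K$ should enter the constants, which is precisely why one works with the Thompson metric $d_T$ rather than $\dr$ and keeps the factor $\rank(\F_0)\le\min(N,K)$ throughout.
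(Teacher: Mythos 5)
Your low-rank comparison between $\mathring\H_{0,M}$ and $\H_{0,M}$ in step (i) is correct and well executed (the rank inequality $|\ld(I+\rho B)-\ld(I+\rho A)|\le\rank(B-A)\log(1+\rho\|B-A\|)$ for PSD $A,B$ does hold, via the decomposition of $B-A$ into its positive and negative parts), and it does yield an $\mathcal O(1/M)$ error uniform in $K,N$. The genuine problem is in step (ii), and you correctly diagnose it yourself: to control $\tfrac{1}{(M+1)N}\EE\ld(I+\rho\,\H_{0,M}\H_{0,M}^*)-\mathcal I_\rho$ you take a Ces\`aro average of $a_i-N\mathcal I_\rho$ where $a_i\to N\mathcal I_\rho$, and under Assumption~\ref{Ass1} alone this gives $o(1)$, not $\mathcal O(1/M)$. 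The lemma, however, is stated under Assumption~\ref{Ass1} only, and it is used by the paper exactly in that generality (Lemma~\ref{le:RMT} is formulated before any mixing hypothesis is imposed). So the proposal does not actually establish the statement.

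The reason you hit a wall is that you discard the stationary process $(\W_n)$ too early. The paper's proof never estimates a rate of relaxation from a cold start. Instead it telescopes the \emph{stationary} quantities $\xi_n=\ld(I+\rho\G_n\G_n^*+\rho\F_n\W_{n-1}\F_n^*)$, each of which has expectation exactly $N\mathcal I_\rho$, and the telescoping collapses to the \emph{exact identity}
\begin{equation*}
\mathcal I_\rho \;=\; \frac{1}{(M+1)N}\,\EE\ld\!\left(I+\U_0+\rho\,\hat\H_{0,M}\hat\H_{0,M}^*\right)
\end{equation*}
valid for every $M$, where $\hat\H_{0,M}$ is the square bidiagonal matrix obtained from $\H_{0,M}$ by deleting its first block column, and $\U_0$ is a boundary correction carrying a single $N\times N$ block $\widetilde\V_0=\rho\F_0\W_{-1}\F_0^*$. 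Because $\tr\U_0\le\rho\min(K,N)\|\F_0\|^2$ regardless of $M$, the trace inequality $|\ld(I+\U_0+C)-\ld(I+C)|\le\tr\U_0$ gives an $\mathcal O(1/M)$ removal cost immediately, with no Ces\`aro rate needed. The remaining replacement of $\hat\H_{0,M}$ by $\mathring\H_{0,M}$ is a corner perturbation controlled by the same kind of one-block bound. This is why the paper obtains the $\mathcal O(1/M)$ rate \emph{without} any decorrelation assumption on the channel: the entire $M$-dependence is confined to a single additive block that does not grow with $M$.

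To repair your argument you would have to abandon the cold-start Ces\`aro bound in step (ii) and instead reproduce this stationary telescoping (or else weaken the lemma to $o(1)$ and strengthen it separately where it is used). As written, the proposal proves a weaker statement under Assumption~\ref{Ass1} and the intended statement only under extra mixing hypotheses such as i.i.d.\ channels.
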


As an illustration, we now prove Proposition~\ref{prop:mp} as an easy consequence of this lemma and well known results from  random matrix theory. 

\begin{proof}[Proof of Proposition~\ref{prop:mp}]

Observe from~\eqref{FGmultipath} and the assumptions made on the process
$(C_n)_{n\in\ZZ}$ that, for any $M\geq 1$, the $(M+1)N\times (M+1)N$ matrix
$\mathring H_{0,M}$  is a square matrix having independent entries with a
\emph{doubly stochastic} variance profile, and that the maximum of these
variances for a given $N$ is of order ${\mathcal O}(1/N)$. It is well known in
random matrix theory that when $N\to\infty$, the empirical spectral measure of
$\mathring H_{0,M}\mathring H_{0,M}^*$  converges narrowly to the
Marchenko-Pastur distribution $\mu_{\text{MP}}(d\lambda) = (2\pi)^{-1}
\sqrt{4/\lambda - 1} \1_{[0,4]}(\lambda) \, d\lambda$ a.s, see
\cite{gir-rand-determ90,TulVer04,hachem-loubaton-najim07}. Making a standard
moment control, we therefore obtain, for every fixed $M\geq 1$,
\begin{equation}  
\frac{1}{(M+1)N}\, 
 \EE \ld (I+\rho \,\mathring \H_{0,M} \mathring \H_{0,M}^* )  
 \xrightarrow[N\to\infty]{} \int \log(1 + \rho\lambda) \, 
   \mu_{\text{MP}}(d\lambda) . 
\end{equation} 
One can compute, see  \emph{e.g.} \cite[Th.~2.53]{TulVer04} or
\cite[Th.~4.1]{hachem-loubaton-najim07}, that this limiting integral coincides
with the right hand side of \eqref{LargeRMT}. Letting $M\to\infty$, the
proposition follows from Lemma~\ref{le:RMT}. 

\end{proof}

We finally turn to the proof of the lemma.
\begin{proof}[Proof of Lemma~\ref{le:RMT}] Using the notations of Theorem~\ref{th-main},  we set 
\begin{equation} 
\xi_n:=\log\det\left( I + \rho \,\F_n \W_{n-1} \F_n^* \right)- \log\det \W_n
\end{equation} 
and check, similarly as in  \eqref{relFinal}, that
\begin{equation} 
\xi_n=\log\det\left( I + \rho\, \G_n\,\G_n^*+\rho \,\F_n \W_{n-1} \F_n^* \right).
\end{equation} 
If we set  for convenience
\begin{align}
\begin{split} 
 \V_n & := \rho \,\G_n^* (I + \widetilde \V_n)^{-1} \G_n \\
 \widetilde\V_n & := \rho\, \F_n \W_{n-1} \F_n^*
\end{split} 
\end{align} 
then we have the relation $\widetilde \V_n = \rho\,\F_n ( I + \V_{n-1})^{-1} \F_n^*$ and  we moreover see that $\xi_n$ equals to

\begin{align}
&  \ld(I + \widetilde \V_n + \rho \,\G_n \G_n^* ) \nonumber \\
=& \ld(I + \rho \,\F_n (I + \V_{n-1})^{-1} \F_n^* + \rho \,\G_n \G_n^* ) \nonumber \\
=& \ld\left( 
I + \rho 
\begin{bmatrix} \F_n & \G_n \end{bmatrix}  
\begin{bmatrix} (I+\V_{n-1})^{-1} \\ & I \end{bmatrix} 
\begin{bmatrix} \F_n^* \\ \G_n^* \end{bmatrix}  
\right) \nonumber \\
=& \ld\left( 
I + \rho 
\begin{bmatrix} \F_n^* \\ \G_n^* \end{bmatrix} 
\begin{bmatrix} \F_n & \G_n \end{bmatrix} 
\begin{bmatrix} (I+\V_{n-1})^{-1} \\ & I \end{bmatrix} 
\right) \nonumber \\
=& \ld\left( 
I +  \begin{bmatrix} \V_{n-1} \\ & 0 \end{bmatrix} + 
\rho\begin{bmatrix} \F_n^* \\ \G_n^* \end{bmatrix} 
\begin{bmatrix} \F_n & \G_n \end{bmatrix} 
\right) - \ld(I+\V_{n-1}) \nonumber \\
=& \ld\left( 
I + \rho 
 \begin{bmatrix} \G_{n-1}^* (I+\widetilde \V_{n-1})^{-1/2} \\ 0 \end{bmatrix} 
 \begin{bmatrix} (I+\widetilde \V_{n-1})^{-1/2} \G_{n-1} & 0 \end{bmatrix} 
+ \rho 
\begin{bmatrix} \F_n^* \\ \G_n^* \end{bmatrix} 
\begin{bmatrix} \F_n & \G_n \end{bmatrix} 
\right) - \ld(I+\V_{n-1}) \nonumber \\
=& \ld\left( 
I + \rho 
 \begin{bmatrix} (I+\widetilde \V_{n-1})^{-1/2} \G_{n-1} & 0 \\ 
                \F_n & \G_n \end{bmatrix} 
 \begin{bmatrix} \G_{n-1}^* (I+\widetilde \V_{n-1})^{-1/2} & \F_n^* \\ 
  0 & \G_n^* \end{bmatrix} \right) - \ld(I+\V_{n-1}) \nonumber \\
=& \ld\left( 
I + \begin{bmatrix} \widetilde \V_{n-1} \\ & 0 \end{bmatrix} 
+ \rho 
 \begin{bmatrix} \G_{n-1} & 0 \\ 
                \F_n & \G_n \end{bmatrix} 
 \begin{bmatrix} \G_{n-1}^* & \F_n^* \\ 
  0 & \G_n^* \end{bmatrix} \right) 
 - \ld( I + \widetilde \V_{n-1}) - \ld(I+\V_{n-1}) .
\end{align}  
Using further the relation $I+\V_n=\W_n^{-1}$,
we thus obtain that 
\begin{equation} 
\xi_n + \xi_{n-1} = 
 \ld\left( 
I + \begin{bmatrix} \widetilde \V_{n-1} \\ & 0 \end{bmatrix} 
+ \rho 
 \begin{bmatrix} \G_{n-1} & 0 \\ 
                \F_n & \G_n \end{bmatrix} 
 \begin{bmatrix} \G_{n-1}^* & \F_n^* \\ 
  0 & \G_n^* \end{bmatrix} \right) . 
\end{equation} 
By iterating similar manipulations $M$ times, where $M$ will be made large in a moment, 
 we have
\begin{equation} 
\sum_{i=0}^M \xi_{n-i} = 
 \ld\left( I+\U_{n-M} 
 + \rho \,\hat \H_{n-M,n} \hat \H_{n-M,n}^* \right) \, , 
\end{equation} 
where we introduced
\begin{equation} 
\U_{m} := \begin{bmatrix} \widetilde \V_{m} \\ & 0 
  \end{bmatrix}  \quad \text{and} \quad 
 \hat\H_{m,n} := 
 \begin{bmatrix} \G_{m} &  \\ 
                 \F_{m+1} & \G_{m+1} \\
                           & \ddots & \ddots \\
                           &        & \F_n & \G_n \end{bmatrix} . 
\end{equation} 
By definition of $\xi_n$ and together with Theorem~\ref{th-main}(b), this yields the identity
\begin{equation} 
\mathcal I_\rho = \frac{1}{(M+1)N} \,
\EE \ld\left( I+
 \U_{0} + \rho\, \hat\H_{0,M} \hat \H_{0,M}^* \right) 
\end{equation} 
for all positive integers $M$.

Next, we control the cost of eliminating $\U_0$  from this expression. To do so, we  use that $|\log\det(I+A)|\leq\tr(A)$ and $\tr(AB)\leq \|B\|\tr(A)$ for any positive semi-definite Hermitian matrices $A,B$ and obtain
\begin{align}
|\ld\left( I+ \U_{0} 
   + \rho \,\hat\H_{0,M} \hat\H_{0,M}^* \right)-\ld\left( I 
   + \rho \,\hat\H_{0,M} \hat\H_{0,M}^* \right)|& \leq  \tr((I+\rho \,\hat\H_{0,M} \hat\H_{0,M}^*)^{-1}\U_0) \nonumber \\
&   \leq \tr(\U_0) \nonumber \\
& = \tr(\widetilde V_0)  \nonumber \\
&\leq \rho\,\tr(\F_0^*\F_0) \nonumber \\
&\leq \rho \min(K,N) \|F_0\|^2.
\end{align}
Using the moment assumption~\eqref{momFG}, this yields 
\begin{equation} \mathcal I_\rho = \frac1{(M+1)N}\,
\EE \ld (I+\rho \,\hat \H_{0,M} \hat \H_{0,M}^* ) +\mathcal O( 1/M)
\end{equation}  
where $\mathcal O( 1/M)$ is uniform in $K,N$. The same time of estimates yield that one can replace $\hat \H_{0,M} $ by $\mathring \H_{0,M}$ up to a $\mathcal O(1/M)$ correction, namely  
\begin{equation} 
\frac1{(M+1)N}\,
\EE \ld (I+\rho \,\hat \H_{0,M} \hat \H_{0,M}^* ) = \frac1{(M+1)N}\,
\EE \ld (I+\rho \,\mathring \H_{0,M} \mathring \H_{0,M}^* ) +\mathcal O( 1/M)
\end{equation} 
with $\mathcal O(1/M)$ uniform in $K,N$, and the lemma is proven.
\end{proof} 

\section{Conclusion} 
\label{conclusion} 

Shannon's mutual information of an ergodic wireless channel has been studied in
this paper under the weakest assumptions on the channel. The general capacity
result has been used to perform high SNR and the high dimensional analyses. 

Future research directions along the lines of this paper include the high SNR
analysis when the number of components at the receiver and at the transmitter
are equal. This analysis requires different tools than the ones used in
Section~\ref{sec:prfasymp} of this paper, which rely heavily on
Assumption~\ref{Ass2}--(d).  Another research direction is to thoroughly
quantify the impact of the parameters of a given statistical channel model on
the mutual information obtained by Theorems~\ref{th-main} and~\ref{th-Markov}.
In this respect, an attention can be devoted to the Doppler shift as in the
recent paper~\cite{gau-kob-bis-cai-arxiv19} and in the references therein.
Finally, transmission schemes with a partial channel knowledge at the receiver,
or scenarios with different delay constraints deserve a particular attention. 

\paragraph{Acknowledgements.}  The work of W.~Hachem was partially supported by
the French Agence Nationale de la Recherche (ANR) grant HIDITSA
(ANR-17-CE40-0003).  The work of A.~Hardy was partially supported by the Labex
CEMPI (ANR-11-LABX-0007-01) and the ANR grant BoB (ANR-16-CE23-0003).
S.~Shamai has been supported by the European Union's Horizon 2020 Research And
Innovation Programme, grant agreement no. 694630.

\def\cprime{$'$} \def\cdprime{$''$} \def\cprime{$'$} \def\cprime{$'$}
  \def\cprime{$'$} \def\cprime{$'$}

\end{document}